\newcommand{\nth}[1]{\ensuremath{#1^{\,\mathrm{th}}}}
\newcommand{\avgT}{\widehat T}
\newcommand{\Mod}[1]{\ (\mathrm{mod}\ #1)}
\DeclareMathOperator*{\argmin}{arg\,min} 
\newtheorem{theorem}{Theorem}
\newtheorem{lemma}[theorem]{Lemma}
\newtheorem{claim}[theorem]{Claim}
\newtheorem{proposition}[theorem]{Proposition}
\newtheorem{definition}{Definition}
\newtheorem{remark}{Remark}
\def\Pr{\mathop{\mathbf{Pr}}\limits}
\renewcommand{\epsilon}{\varepsilon}
\newcommand{\eps}{\varepsilon}
\renewcommand{\leq}{\leqslant}
\renewcommand{\geq}{\geqslant}
\renewcommand{\a}{\alpha}
\newcommand{\bit}{\{0,1\}}
\newcommand{\A}{\mathcal{A}}
\newcommand{\calI}{\mathcal{I}}
\newcommand{\calJ}{\mathcal{J}}
\newcommand{\calE}{{\cal E}}
\newcommand{\poly}{\mathrm{poly}}
\newcommand{\pth}[1]{\left(#1\right)}%
\newcommand{\wt}{\widetilde}%
\newcommand{\remove}[1]{}
\renewcommand{\algref}[1]{\HLink{Algorithm}{Algorithm:#1}}
\newcommand{\HLinkShort}[2]{\hyperref[#2]{#1\ref*{#2}}}
\newcommand{\HLink}[2]{\hyperref[#2]{#1~\ref*{#2}}}
\newcommand{\HLinkPage}[2]{\hyperref[#2]{#1~\ref*{#2}%
		$_\text{p\pageref{#2}}$}}
\newcommand{\HLinkPageOnly}[1]{\hyperref[#1]{Page~\refpage*{#1}%
		$_\text{p\pageref{#1}}$}}
\newcommand{\HLinkSuffix}[3]{\hyperref[#2]{#1\ref*{#2}{#3}}}
\newcommand{\HLinkPageSuffix}[3]{\hyperref[#2]{#1\ref*{#2}%
		#3$_\text{p\pageref{#2}}$}}
\newcommand{\figlab}[1]{\label{fig:#1}}
\newcommand{\figref}[1]{\HLink{Figure}{fig:#1}}
\newcommand{\seclab}[1]{\label{section:#1}}
\newcommand{\secref}[1]{\HLink{Section}{section:#1}}
\providecommand{\deflab}[1]{\label{def:#1}}
\newcommand{\apndlab}[1]{\label{apnd:#1}}
\newcommand{\apndref}[1]{\HLink{Appendix}{apnd:#1}}
\newcommand{\clmlab}[1]{\label{claim:#1}}
\newcommand{\clmref}[1]{\HLink{Claim}{claim:#1}}
\newcommand{\remlab}[1]{\label{rem:#1}}
\newcommand{\remref}[1]{\HLink{Remark}{rem:#1}}%
\newcommand{\lemlab}[1]{\label{lemma:#1}}
\newcommand{\lemref}[1]{\HLink{Lemma}{lemma:#1}}%
\newcommand{\proplab}[1]{\label{prop:#1}}
\newcommand{\propref}[1]{\HLink{Proposition}{prop:#1}}
\newcommand{\alglab}[1]{\label{Algorithm:#1}}%
\newcommand{\thmlab}[1]{{\label{theo:#1}}}
\newcommand{\thmref}[1]{\HLink{Theorem}{theo:#1}}
\providecommand{\eqlab}[1]{}%
\renewcommand{\eqlab}[1]{\label{equation:#1}}
\newcommand{\Eqlab}[1]  {\label{equation:#1}}
\newcommand{\Eqref}[1]{\HLinkSuffix{Eq.~(}{equation:#1}{)}}
\begin{document}
	\title{Reconstructing Trees from Traces\footnote{An extended abstract of this paper appears in the Proceedings of the 32nd Conference on Learning Theory (COLT), 2019~\cite{DRR19}.}}
	\date{\today}
	
	\author{Sami Davies%
		\thanks{%
			{University of Washington (\url{daviess@uw.edu}); research supported by NSF CAREER grant 1651861 and the David \& Lucile Packard Foundation.}}
		\and Mikl\'os Z. R\'acz%
		\thanks{%
			{Princeton University (\url{mracz@princeton.edu}); research supported in part by NSF grant DMS 1811724.}}
		\and Cyrus Rashtchian%
		\thanks{%
			{Dept. of Computer Science \& Engineering, University of California, San Diego (\url{crashtchian@eng.ucsd.edu})}.
	}}
	
	\maketitle

\begin{abstract}%
We study the problem of learning a node-labeled tree given independent traces from an appropriately defined deletion channel.
This problem, tree trace reconstruction, generalizes string trace reconstruction,
which corresponds to the tree being a path. 
For many classes of trees, including complete trees and spiders, 
we provide algorithms that reconstruct the labels using only a polynomial number of traces. 
This exhibits a stark contrast to known results on string trace reconstruction, 
which require exponentially many traces, 
and where a central open problem is to determine whether a polynomial number of traces suffice. 
Our techniques combine novel combinatorial and complex analytic methods. 
\end{abstract}


\section{Introduction}

Statistical reconstruction problems aim to recover unknown objects given only noisy samples of the data. 
In the \emph{string trace reconstruction} problem, there is an unknown binary string, and we observe noisy samples of this string after it has gone through a deletion channel.
 This deletion channel independently deletes each bit with constant probability $q$ and concatenates the remaining bits.
The channel preserves bit order, so we observe a sampled subsequence known as a \emph{trace}.
The goal is to learn the original string with high probability using as few traces as possible.

The string trace reconstruction problem (with insertions, substitutions, and deletions) 
directly appears in the problem of DNA Data Storage~\cite{church2012next,ceze2019molecular, erlich2017dna, goldman,OAC,yazdi2017portable, yazdi2015dna}. 
It is crucial to minimize the sample complexity, as this directly impacts the cost of retrieving data stored in synthetic DNA.
Since there is an exponential gap between upper and lower bounds for the string trace reconstruction problem, 
it is motivating to study variants.
We introduce a generalization of string trace reconstruction called \emph{tree trace reconstruction}, where
the goal is to learn a node-labelled tree given traces from a deletion channel.
From a technical point of view, tree trace reconstruction may aid in understanding the interplay
of combinatorial and analytic approaches to reconstruction problems and can be a springboard for new ideas. 
From an applications point of view, 
current research on DNA nanotechnology has demonstrated that structures of DNA molecules can be constructed into trees and lattices. 
In fact, recent research has shown how to distinguish different molecular topologies, such as spiders with three arms, from line DNA using nanopores~\cite{KT18}. 
These results may open the door for other tree structures and be useful for applications like DNA data storage.

Let $X$ be a rooted tree with unknown binary labels on its $n$ non-root nodes. 
The goal of {tree trace reconstruction} is to learn the labels of $X$ with high probability, 
using the minimum number of traces, knowing only $q$, the deletion model, and the structure of $X$.

We consider two deletion models. In both models, each non-root node $v$ in $X$ is 
deleted independently with constant probability $q$---the root is never deleted---and deletions are associative. 
The resulting tree is called a trace. 
We assume that $X$ has 
a canonical ordering of its nodes, and 
the children of a node have a left-to-right ordering.
For the Left-Propagation model, we define the {\em left-only} path starting at $v$ as the path that recursively goes from parent to left-most child.

\begin{figure*}[t!]
	\centering
	\begin{subfigure}[b]{0.3\textwidth}
		\centering
		\caption{Original Tree}
		\includegraphics[angle=0,width=\textwidth]{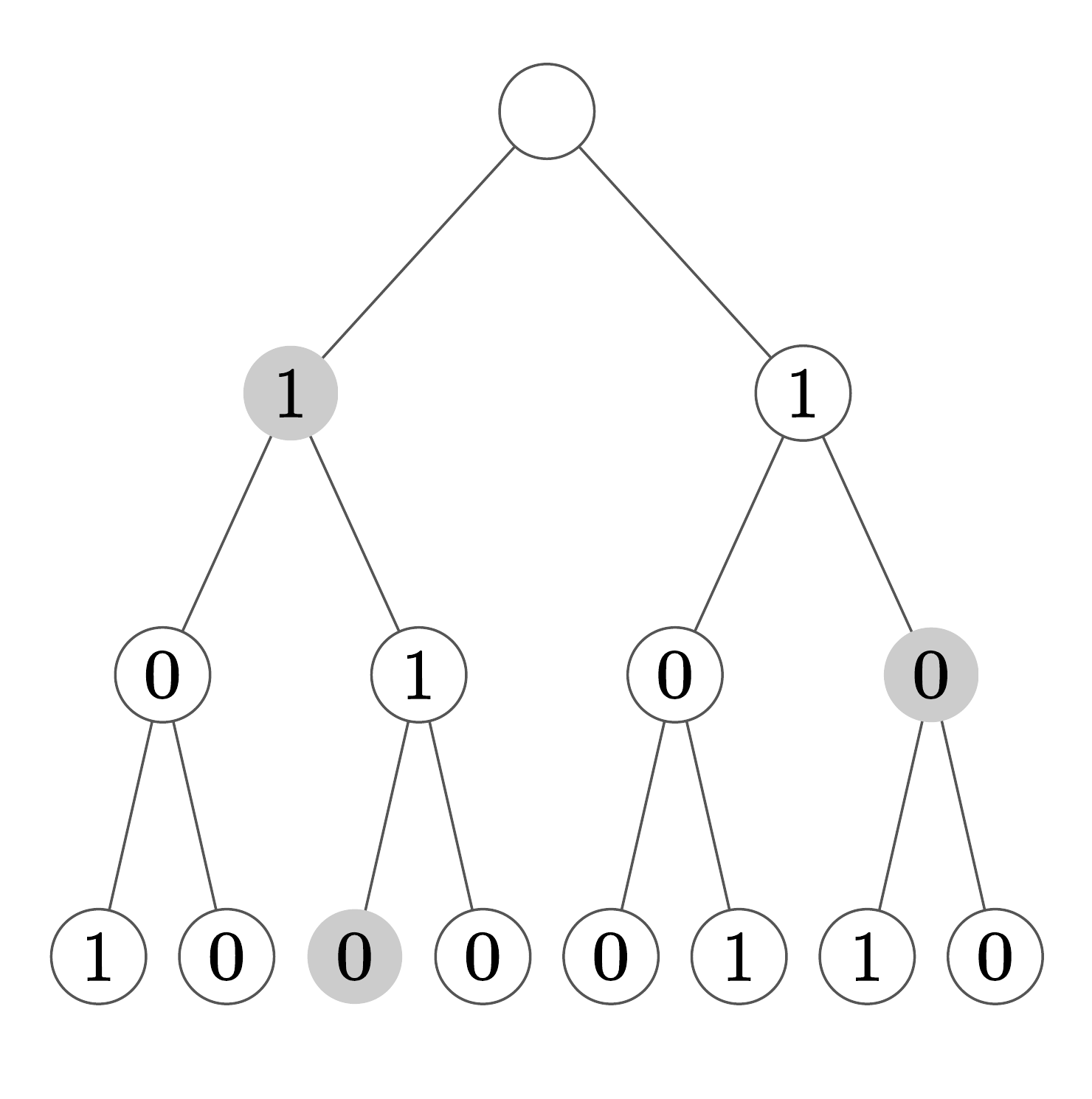}
	\end{subfigure}%
	\begin{subfigure}[b]{0.3\textwidth}
		\centering
		\caption{TED Trace}
		\includegraphics[angle=0,width=\textwidth]{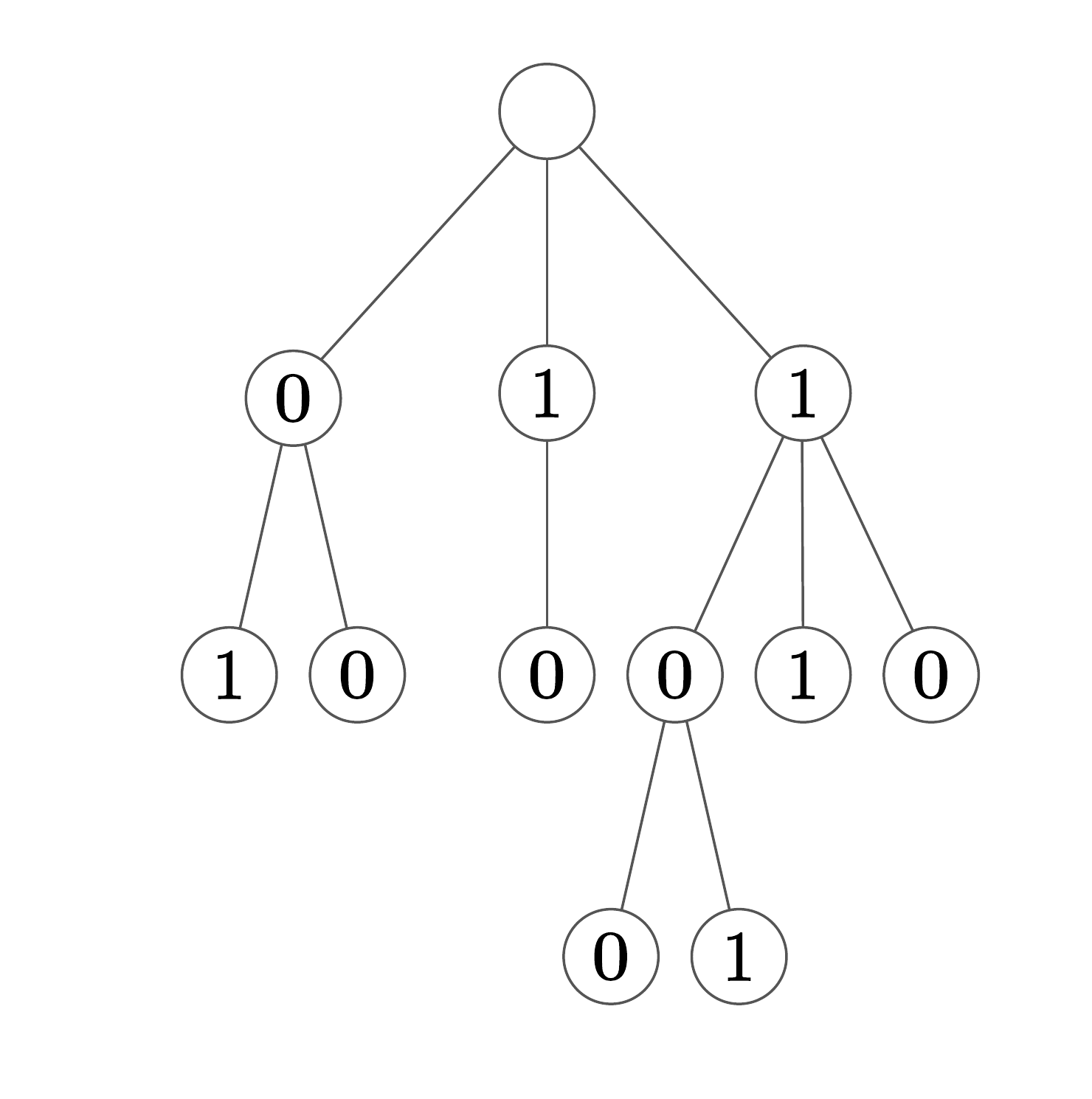}
	\end{subfigure}
	\begin{subfigure}[b]{0.3\textwidth}
		\centering
		\caption{Left-Propagation Trace}
		\includegraphics[angle=0,width=\textwidth]{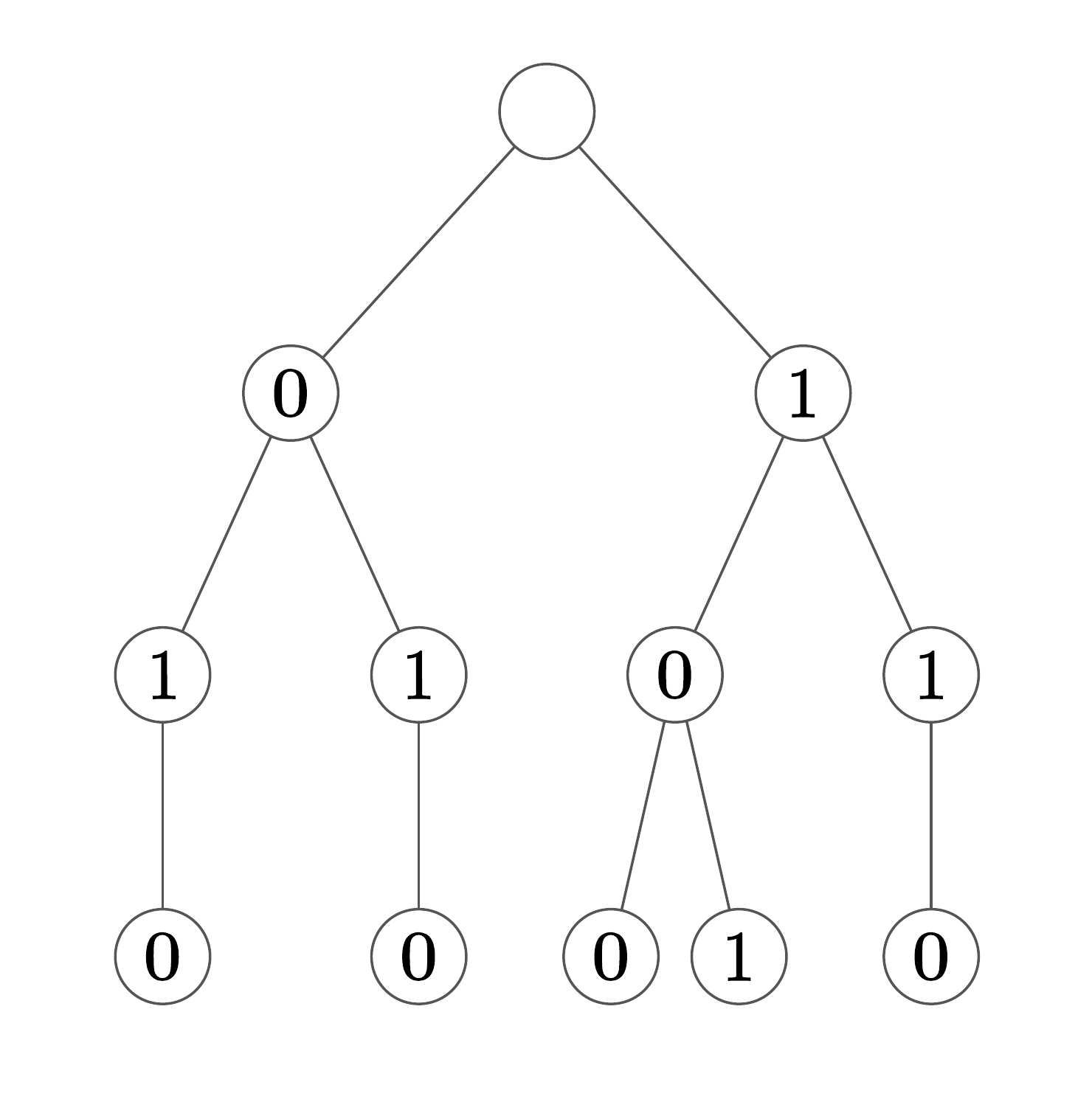}
	\end{subfigure}
	\caption{Deletion models. Gray nodes deleted from original tree (a). Resulting trace in the TED Model (b) and the Left-Propagation Model (c).} 
	\figlab{models}
\end{figure*}

\begin{itemize}
	\item {\bf Tree Edit Distance (TED) model:} When $v$ is deleted, all children of $v$ become children of $v$'s parent. Equivalently, contract the edge between $v$ and its parent, retaining the parent's label. The children of $v$ take $v$'s place as a continuous subsequence in the left-to-right order.
	\item {\bf Left-Propagation model:} When $v$ is deleted, recursively replace every node (together with its label) in the left-only path starting at $v$ with its child in the path. 
	This results in the deletion of the last node of the left-only path,
	with the remaining tree structure unchanged.\footnote{Since the BFS order on $X$ is arbitrary (but fixed), the choice of using the left-only path (as opposed to, say, the right-only one) does not {\em a priori} bias certain nodes.}
\end{itemize}	
\figref{models} depicts traces in both deletion models for a given original tree and set of deleted nodes.   
When $X$ is a path or a star, then both models coincide with the string deletion channel. 
After posting this paper to arXiv, subsequent work has shown that these are the most difficult trees to reconstruct in terms of sample complexity~\cite{maranzatto_thesis}. 
In other words, the sample complexity to reconstruct an arbitrarily labelled tree on $n$ nodes is no more than 
the sample complexity to reconstruct an arbitrarily labelled string on $n$ bits.

A key motivation for the {Tree Edit Distance model}
is that deletions in the TED model correspond exactly to the deletion operation in tree edit distance, which is a well-studied metric for pairs of labeled trees used in applications~\cite{Bille05, ZhangS89}. 
Our main motivation for the {Left-Propagation model} is more theoretical: it preserves different structural properties---for instance, a node's number of children does not increase (see \figref{models})---and poses different challenges than the TED model.

\subsection{Related Work}\seclab{Related}

\subsubsection*{Previous results on string trace reconstruction}\seclab{Previous-string}

Introduced by Batu, Kannan, Khanna, and McGregor~\cite{BatuKannan04-RandomCase}, string trace reconstruction has received a lot of attention, especially recently~\cite{Chase19,DeOdonnellServedio17-WorseCase,de2019optimal, HartungHP18, HL18, HolensteinMPW08,  HoldenPemantlePeres18-RandomCase, tr-revisited, NazarovPeres17-WorseCase, viswanathan}. 
Yet there is still an exponential gap between the known upper and lower bounds for the number of traces needed to reconstruct an arbitrary string with high probability and constant deletion probability: 
it is known that $\exp ( O \left( n^{1/3} \right))$ traces are sufficient~\cite{DeOdonnellServedio17-WorseCase,de2019optimal, NazarovPeres17-WorseCase} and $\widetilde{\Omega}(n^{3/2})$ traces are necessary~\cite{Chase19,HL18}. Determining whether a polynomial number of traces suffice is a challenging open problem in the area. 
A well-studied variant is reconstructing a string with random, average-case labels, instead of arbitrary, worst-case labels~\cite{BatuKannan04-RandomCase, HoldenPemantlePeres18-RandomCase}. This is relevant for applications to DNA data storage~\cite{OAC}. 

In a few of our algorithms, we will reduce 
various subproblems to the string trace reconstruction problem, and hence, we will use existing results as a black box. 
For future reference, we precisely state the previous results now. 
Let $T(n,\delta)$ and $\avgT(n, \delta)$ denote the minimum number of traces needed to reconstruct an~$n$-bit worst-case and average-case string, respectively, with probability at least $1-\delta$, 
where the dependence on the deletion probability $q$ is left implicit.

\begin{theorem}[\cite{DeOdonnellServedio17-WorseCase,de2019optimal,NazarovPeres17-WorseCase}]\thmlab{StringMain}
	The number of traces $T(n,\delta)$ needed to reconstruct a worst-case $n$-bit string with probability $1-\delta$ satisfies
	$T(n, \delta) \leq \ln(\frac{1}{\delta}) \cdot e^{Cn^{1/3}}$, for $C$ depending on $q$.
\end{theorem}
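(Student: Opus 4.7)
The plan is to follow the mean-based approach of Nazarov--Peres and De--O'Donnell--Servedio. For any $n$-bit string $x$, define for each $j \in \{1,\dots,n\}$ the single-bit trace statistic $a_j(x) \defeq \Pr[Y_j = 1]$, where $Y$ is a trace obtained from $x$ and $Y_j$ is taken to be $0$ if the trace has fewer than $j$ bits. These $n$ numbers can be estimated empirically from the samples, and two strings $x \neq y$ are distinguishable as soon as we can resolve some $a_j(x) - a_j(y)$ above the empirical estimation noise. Thus the sample complexity is governed by the reciprocal of
\[
\Delta(x,y) \defeq \max_{j} |a_j(x) - a_j(y)|,
\]
and the job reduces to lower bounding $\Delta(x,y)$ over all distinct pairs $x,y \in \{0,1\}^n$.

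First I would write $a_j(x)$ explicitly in terms of the retention probability $p=1-q$. The event that $x_i$ lands in position $j$ of the trace has probability $\binom{i-1}{j-1} p^j (1-p)^{i-j}$, so
\[
a_j(x) = \sum_{i=j}^n x_i \binom{i-1}{j-1} p^j (1-p)^{i-j}.
\]
This identifies $(a_j(x))_j$ with values of a generating polynomial associated to $x$. Setting $w_i = x_i - y_i \in \{-1,0,1\}$ for distinct strings $x,y$, it suffices to lower bound $\max_j |\sum_{i\ge j} w_i \binom{i-1}{j-1} p^j(1-p)^{i-j}|$ in terms of a supremum norm of the Littlewood-type polynomial
\[
Q_{x,y}(z) \defeq \sum_{i=1}^n w_i\, z^i
\]
on a suitable arc of the complex unit circle.

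Next, I would invoke the complex-analytic lemma of Borwein--Erd\'elyi--K\'os used by Nazarov--Peres: any nonzero polynomial of degree at most $n$ with coefficients in $\{-1,0,1\}$ satisfies
\[
\max_{\theta \in [-\pi/3,\pi/3]} \bigl|Q_{x,y}(e^{i\theta})\bigr| \geq e^{-c\, n^{1/3}}
\]
for an absolute constant $c$. A Cauchy-type contour-deformation argument, pushing the contour through a point that encodes the binomial coefficient extraction above, then transfers this lower bound to the real-analytic quantity of interest, yielding $\Delta(x,y) \geq e^{-C n^{1/3}}$ with $C$ depending only on $q$.

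Finally, given this separation, I would draw $N = \ln(1/\delta)\cdot e^{C' n^{1/3}}$ independent traces, form empirical averages $\widehat a_j$ of the indicator that the $j$-th trace bit is $1$, and output the unique $x^\ast \in \{0,1\}^n$ whose theoretical profile $(a_j(x^\ast))_j$ is closest in $\ell_\infty$ to $(\widehat a_j)_j$. Hoeffding's inequality and a union bound over $j \leq n$ show that with probability at least $1-\delta$ every $\widehat a_j$ is within $\tfrac12 e^{-C n^{1/3}}$ of $a_j(x)$; combined with the separation bound, this forces $x^\ast = x$. The main obstacle is the complex-analytic lower bound on short-arc suprema of Littlewood-type polynomials, together with the contour-transfer step linking that supremum to $\max_j|a_j(x)-a_j(y)|$: the $n^{1/3}$ exponent emerges precisely from the optimization in the Chebyshev-type extremal construction inside that lemma, while the rest of the proof is routine generating-function algebra and concentration.
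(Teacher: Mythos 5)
The theorem you are proving is stated in the paper only as a citation of \cite{DeOdonnellServedio17-WorseCase,de2019optimal,NazarovPeres17-WorseCase}; the paper itself gives no proof, so the comparison is against the known proof from those works. Your overall plan --- mean-based single-bit statistics $a_j(x)$, the binomial generating-function identity, a Littlewood-polynomial lower bound, and a Hoeffding-plus-union-bound finish --- is indeed the correct skeleton of the Nazarov--Peres / De--O'Donnell--Servedio argument, and your formula for $a_j(x)$ is right.

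However, your statement of the key complex-analytic lemma is wrong, and with it the attribution of where the $n^{1/3}$ comes from. The Borwein--Erd\'elyi--K\'os bound is \emph{degree-independent}: for a nonzero polynomial with coefficients in $\{-1,0,1\}$ (suitably normalized so that the lowest nonzero coefficient has modulus $1$), its supremum on an arc of the unit circle of angular length $\Theta(1/L)$ is at least $e^{-cL}$ for an absolute constant $c$, with no $n$ in sight. There is no fixed arc $[-\pi/3,\pi/3]$ with an $e^{-cn^{1/3}}$ lower bound; on a fixed arc the BEK bound is simply a constant. The $n^{1/3}$ enters only in the \emph{outer} optimization, not inside BEK. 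Concretely, after the binomial identity
\[
\sum_{j\ge 1}\bigl(a_j(x)-a_j(y)\bigr)z^{j-1} \;=\; p\sum_{i=1}^{n} w_i\,(pz+q)^{i-1},
\qquad p=1-q,
\]
one restricts $z=e^{i\theta}$ to the arc $|\theta|\le\pi/L$. The point $u=pz+q$ then lies \emph{strictly inside} the unit disc, with $|u|\ge 1-O(1/L^2)$, so the term $(pz+q)^{\ell^*}$ one factors out (the string analogue of the paper's $\widetilde A$ factoring in \lemref{GeneratingFunction}--\lemref{RidLogFactor}) costs a factor at least $e^{-c\,n/L^2}$. Balancing this against the BEK bound $e^{-cL}$ on the arc forces $L\sim n^{1/3}$, which is where the cube root appears. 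Your phrase ``a Cauchy-type contour-deformation argument, pushing the contour through a point that encodes the binomial coefficient extraction'' does not describe this step: the transfer is the elementary binomial identity above plus the $L$-balancing, and there is also a nontrivial harmonic-measure (or three-circles) step needed because $u$ leaves the unit circle --- exactly the issue the paper itself has to address for spiders in \lemref{RidLogFactor}. As written, your proposal would not yield the theorem: applying your (mis)stated lemma on a fixed arc gives no handle on the $|u|^{n}$ decay and no cube root. The Hoeffding and union-bound conclusion at the end is fine once the $e^{-Cn^{1/3}}$ separation is established.
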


\begin{theorem}[\cite{HoldenPemantlePeres18-RandomCase}]\thmlab{StringRandom}	
	The number of traces $\avgT(n,\delta)$ needed to reconstruct a random $n$-bit string with probability $1-\delta$ satisfies
	$\avgT(n, \delta) \leq \ln(\frac{1}{\delta}) \cdot e^{C\log^{1/3}(n)}$, for $C$ depending on $q$.
\end{theorem}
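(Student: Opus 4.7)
The plan is to partition the random string $s \in \{0,1\}^n$ into consecutive blocks of length $\ell = \Theta(\log n)$, use \emph{anchors}---short distinctive substrings that flag block boundaries---to identify these blocks inside each trace, and then invoke the worst-case reconstruction algorithm of \thmref{StringMain} separately on each block. The exponential savings over the worst-case bound come from the fact that every block has length only $\ell = \Theta(\log n)$, so a per-block worst-case reconstruction costs $\exp\pth{C \ell^{1/3}} = \exp\pth{C (\log n)^{1/3}}$ traces, matching the target bound.

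First, I would formalize the anchor structure. Fix block length $\ell = c \log n$ and anchor length $k = C_0 \log \log n$ for constants $c, C_0$ to be chosen later. Place block boundaries at positions $\ell, 2\ell, \dots$ in $s$, and for each boundary $i$ let the anchor $A_i$ be the length-$k$ substring of $s$ ending at position $(i-1)\ell$. Since $s$ is drawn uniformly at random, a union bound over pairs of positions shows that with probability $1 - 1/\poly(n)$, every $A_i$ is \emph{locally distinctive}: no other length-$k$ substring of $s$ within some neighborhood (say of radius $10\ell$) of the intended location matches $A_i$. In addition, $A_i$ survives the deletion channel in any given trace with probability $(1-q)^k = 1/\polylog(n)$, which is still large enough that after $\exp\pth{C (\log n)^{1/3}}$ traces many of them will contain surviving copies of $A_i$.

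Second, I would show how to localize each block inside a trace $\wt s$. Given an anchor $A_i$ that survives deletion, the locally distinctive property combined with concentration of the positional shift induced by the deletion channel (a shift of $\Theta(nq)$ with fluctuations $O(\sqrt{n})$) lets the algorithm pinpoint the surviving substring of $\wt s$ corresponding to block $i$. Conditional on successful anchoring, each such trace contributes an \emph{induced} trace of that block, distributed exactly as a deletion-channel output applied to the length-$\ell$ substring of $s$. I then apply \thmref{StringMain} to each block with failure parameter $\delta' = \delta \ell / n$ and take a union bound over the $n/\ell$ blocks, to obtain
\[
\avgT(n, \delta) \leq \ln(1/\delta) \cdot \exp\pth{C (\log n)^{1/3}}
\]
after absorbing polylogarithmic factors into the constant $C$.

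The hard part will be rigorously controlling the joint alignment step. One must simultaneously guarantee, over the randomness of $s$ and the channel, that every anchor is globally distinctive (so it cannot be spuriously matched at the wrong spot in any trace), that anchors survive deletion and remain identifiable inside surrounding regions that are themselves corrupted, and that misalignments at any one boundary do not cascade into systematic errors at later blocks. The complex-analytic machinery behind \thmref{StringMain} is invoked only locally, on each length-$\ell$ block; the novelty lies in the probabilistic/combinatorial argument showing that the random structure of $s$ makes accurate block localization achievable with only $\exp\pth{C (\log n)^{1/3}}$ traces.
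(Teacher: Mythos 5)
First, a point of order: the paper does not prove this statement --- \thmref{StringRandom} is quoted as a black box from Holden, Pemantle, and Peres --- so there is no internal proof to compare your proposal against. Judged on its own terms, your sketch captures the folklore intuition behind the average-case result (reconstruct $\Theta(\log n)$-length blocks locally, exploiting the randomness of $s$ for alignment), but the alignment step as written does not work, and fixing it is precisely where the difficulty of the actual argument lies.

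The concrete gap is a mismatch of scales. A bit at position $i$ of $s$ lands at a position distributed as $\mathrm{Bin}(i,1-q)$ in the trace, so its positional uncertainty is $\Theta(\sqrt{n})$, not $O(\ell) = O(\log n)$. Hence an anchor must be distinguishable from every other substring in a window containing $\Theta(\sqrt{n}\cdot\polylog(n))$ candidate positions. But a length-$k$ anchor with $k = C_0 \log\log n$ takes only $2^{k} = \polylog(n)$ values, so in a uniformly random string it is spuriously matched $\Theta(\sqrt{n}/\polylog(n))$ times inside the relevant window; local distinctiveness at radius $10\ell$ is the wrong radius by a polynomial factor. Taking $k = \Theta(\log n)$ restores distinctiveness, but then the anchor survives intact in a given trace only with probability $n^{-\Theta(1)}$, and, worse, one must recognize a \emph{corrupted} copy of the anchor inside a corrupted trace, which is itself an alignment problem of essentially the original difficulty. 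There is also a conditioning issue: the per-block traces you feed to \thmref{StringMain} are conditioned on the anchor surviving and being correctly identified, so they are not distributed exactly as unconditioned deletion-channel outputs of that block. The proof of Holden, Pemantle, and Peres circumvents both problems by reconstructing the string sequentially, using the already-reconstructed prefix (rather than a short pattern occurring in $s$) to define the test, and by running a modified mean-based, complex-analytic estimate in a window around the current boundary instead of a black-box invocation of the worst-case theorem. If you want to keep your modular block decomposition, the alignment lemma is the piece you must actually supply, and it cannot be a simple union bound over $O(\log n)$-radius neighborhoods.
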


In terms of lower bounds, it is known that $T(n,\delta) = \wt \Omega(n^{1.5})$ and $\avgT(n, \delta) = \wt \Omega(\log^{5/2}(n))$, 
for any $\delta$ bounded away from one~\cite{Chase19,HL18}. 
The proofs of \thmref{StringMain} rely on a {mean-based algorithm}, one only using the mean of single bits from traces, 
and the bound is optimal for mean-based algorithms~\cite{DeOdonnellServedio17-WorseCase,de2019optimal,NazarovPeres17-WorseCase}. 

\subsubsection*{Other variants of trace reconstruction}
Due to the exponential gap between upper and lower bounds in the string trace reconstruction problem, 
an array of variants have been studied recently. 
Cheraghchi, Gabrys, Milenkovic, and Ribeiro introduce the study of coded trace reconstruction, where the goal 
is to design efficiently encodable codes whose codewords can be efficiently reconstructed with high probability~\cite{CGMR}.
Krishnamurthy, Mazumdar, McGregor, and Pal study trace reconstruction on matrices, 
where rows and columns of a matrix are deleted and a trace is the resulting submatrix. They also 
study string trace reconstruction on sparse strings~\cite{krishnamurthy_et_al:LIPIcs:2019:11189}. 
Ancestral state reconstruction is a generalization of string trace reconstruction, where traces are no longer independent, but instead evolve based on a Markov chain~\cite{andoni2012global}. 

There is also a deterministic version of string trace reconstruction~\cite{Levenshtein:2001}.
Let the {\em $k$-deck} of a string be the multiset of its length $k$ subsequences. The question is to establish how large $k$ must be to uniquely determine an arbitrary string of $n$ bits. Currently, the best known bounds stand at $k = O(\sqrt{n})$ and $k = \exp(\Omega(\sqrt{\log n}))$, due respectively to Krasikov and Roditty~\cite{krasikov1997reconstruction} and Dud\'{i}k and Schulman~\cite{dudik}.
This result has also been used to study population recovery, the problem of learning an unknown distribution of bit strings given noisy samples from the distribution~\cite{ban2019beyond}. 

The term trace complexity has appeared in a network inference context, but the models and definition of a trace are incomparable to ours~\cite{abrahao}.
Other results on deletion channels appear in the survey by Mitzenmacher~\cite{mitzenmacher-survey}.

\subsubsection*{Other graph reconstruction models}
While we are unaware of previous work on reconstructing trees using traces (besides strings), a large variety of other graph-centric reconstruction problems have been considered.

The famous \textit{Reconstruction Conjecture}, due to Kelly~\cite{kelly} and Ulam~\cite{ulam}, posits that every graph $G$ is uniquely determined by its {\em deck}, where the deck of $G$ is the multiset of subgraphs obtained by deleting a single vertex from $G$. Here, the (sub)graphs are unlabeled, and the goal is to determine $G$ up to isomorphism. The Reconstruction Conjecture remains open, although it is known for special cases, such as trees and regular graphs~\cite{kelly, graph-recon-book}. 

Mossel and Ross introduced and studied the shotgun assembly problem on graphs, where they use small vertex-neighborhoods to uniquely identify an unknown graph~\cite{mossel-shotgun}.

\subsection{Our Results}
 
We provide algorithms for two main classes of trees: complete $k$-ary trees and spiders.
In a \emph{complete $k$-ary} tree, every non-leaf node has exactly $k$ children, and all leaves have the same depth.
An {\em $(n,d)$-spider} consists of $n/d$ paths of $d+1$ nodes,
all starting from the same root.
\figref{spider} depicts an example spider,
and it demonstrates that both deletion models lead to the same trace for spiders. 
We focus on these two classes because of their varying amount of structure. 
Spiders behave like a union of disjoint paths, except when some paths have all of their nodes deleted. 
This allows us to extend methods from string trace reconstruction, with a slightly more complicated analysis.
On the other hand, complete $k$-ary trees are so structured that we can
use more combinatorial algorithms, which have proven less successful
for string trace reconstruction so far.
We believe our methods could be used to prove results for larger classes of trees, as well.

In what follows, we use {\em with high probability} to mean with probability at least $1 - O(1/n)$. Also, we let $[t]$ for $t \in \mathbb{N}$ denote the set $\{1,2,\ldots, t\}$.

\subsubsection{TED model for complete \texorpdfstring{$k$}{k}-ary trees}

Let $X$ be a rooted complete $k$-ary tree along with unknown binary labels on its $n$ non-root nodes. Since $k = 1$ and $k = n$ are identical to string trace reconstruction, we focus on $1 < k < n$. We provide two algorithms to reconstruct $X$, depending on whether the degree~$k$ is large or small. 

We state our theorems in terms of $T(k,\delta)$,  
since our reductions use algorithms for string trace reconstruction as a black box 
and the current bounds on $T(k,\delta)$ may improve in the future.

\begin{theorem} \thmlab{ted-large} 
	In the TED model, there exist $c, c'>0$ depending only on $q$ such that if $k \geq c \log^2(n)$, then it is possible to reconstruct a complete $k$-ary tree on $n$ nodes with $\exp(c' \cdot \log_k n) \cdot T(k,1/n^2)$ traces with high probability. 
\end{theorem}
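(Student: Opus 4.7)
The plan is to reduce reconstructing $X$, a complete $k$-ary tree of depth $D = \log_k n$, to reconstructing $k$ complete $k$-ary trees of depth $D-1$, plus one application of string trace reconstruction for the depth-$1$ labels. Unrolling this recursion gives the claimed bound.

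\textbf{Filter for identifying original depth-$d$ nodes.} The key structural observation is that a trace node $w$ has original depth exactly $d$ in $X$ if and only if its \emph{trace-depth} equals $d$ and its \emph{trace-subtree-height} equals $D-d$; both are locally computable from the trace. The forward direction uses trace-depth $\le$ original depth and trace-subtree-height $\le D - $ original depth; the converse follows since equality in both inequalities forces the original depth to be $d$. For $k \ge c \log^2 n$ and $d < D$, conditional on a depth-$d$ node $v$ and all its ancestors surviving, the probability that $v$ passes this filter is at least $1 - \exp(-((1-q)k)^{D-d}) = 1 - n^{-\omega(1)}$, obtained by Chernoff on the $k^{D-d}$ independent surviving-path events from $v$ to depth-$D$ leaves.

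\textbf{Recursive algorithm.} Given $N = \Theta((1-q)^{-D}) \cdot T(k, 1/n^2)$ traces of $X$:
\begin{enumerate}
    \item Extract from each trace the labels of the root's filtered depth-$1$ children in left-to-right order. This extracted string is distributed as a deletion trace of the length-$k$ depth-$1$ label string, with effective deletion probability $q + o(1)$. Apply \thmref{StringMain} to recover the depth-$1$ labels with failure probability at most $1/n^2$.
    \item For each $i \in [k]$, recursively reconstruct the depth-$(D-1)$ complete $k$-ary subtree rooted at $v_i$. From each original trace, extract the trace subtree rooted at the filtered depth-$1$ child identified with $v_i$ (when present). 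Since deletions within $v_i$'s subtree are independent of the rest of $X$ in the TED model, each extracted subtree is distributed as a TED trace of $v_i$'s depth-$(D-1)$ subtree. A given $v_i$ passes the filter in a trace with probability $\ge (1-q)(1 - o(1))$, so we obtain at least $N(1-q)(1-o(1))$ recursive input traces per $v_i$, which is enough to recurse.
\end{enumerate}

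\textbf{Recurrence.} Let $R(D)$ denote the number of traces required. The recurrence $R(D) \le R(D-1)/(1-q) + O(T(k, 1/n^2))$ with base case $R(1) = T(k, 1/n^2)$ (a depth-$1$ tree is exactly a $k$-bit string in the TED model) unrolls to
\[
R(D) \;=\; \Theta\bigl((1-q)^{-D}\bigr) \cdot T(k, 1/n^2) \;=\; \exp(c' \log_k n) \cdot T(k, 1/n^2),
\]
with $c' = \Theta(\log(1/(1-q)))$. A union bound over the at most $n$ subroutine invocations (each with failure probability $\le 1/n^2$) yields overall success with probability $\ge 1 - 1/n$.

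\textbf{Main obstacle.} The subtle step is the association in (2): mapping each filtered depth-$1$ trace child to the correct $v_i$. Since binary labels are not uniquely identifying, pure label matching is ambiguous, and positional concentration of the filtered list (with Bernstein fluctuations of order $O(\sqrt{k\log n})$) is insufficient on its own when $k = \Theta(\log^2 n)$. My plan is to combine (a) positional concentration with (b) subtree fingerprints built from already-reconstructed labels at deeper levels of $v_i$'s subtree. Because $k \ge c \log^2 n$ ensures that each $v_i$'s subtree carries ample label information, the combined fingerprint pins down $i$ with probability $1 - n^{-\omega(1)}$ per trace, which preserves the recursion's error budget under a union bound. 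Verifying this identification guarantee carefully is the main technical step.
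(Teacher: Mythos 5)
Your top--down recursion is a genuinely different decomposition from the paper's, and your ``filter'' observation (a trace node has original depth exactly $d$ iff its trace-depth is $d$ and its trace-subtree-height is $D-d$) is correct and essentially mirrors the paper's first step, which identifies surviving root-to-leaf paths by finding depth-$d$ leaves in the trace. Your recurrence and the resulting bound $\exp(\Theta(\log_k n)) \cdot T(k,1/n^2)$ also match the paper's count. However, the step you flag as the ``main obstacle'' --- associating each filtered depth-$1$ trace child with the correct $v_i$ --- is where the proposal has a real gap, and your proposed fix does not close it.

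The circular dependency you should worry about is this: you propose subtree fingerprints built from ``already-reconstructed labels at deeper levels of $v_i$'s subtree,'' but in your top--down recursion nothing deeper has been reconstructed yet when you need to make the association at depth $1$; and a bottom--up variant has exactly the same association problem at the leaf level with no deeper labels to fingerprint. You correctly note that ``positional concentration of the filtered list'' --- i.e., counting surviving siblings to the left of a filtered node --- has fluctuations of order $\sqrt{k}$, which is too coarse. The paper's key idea, which you are missing, is to count not surviving siblings but \emph{surviving descendants} in the trace between consecutive filtered siblings, and normalize by the expected subtree size $(1-q)\sum_{h=1}^{\ell+1}k^h$. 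Because subtree sizes grow geometrically with height, this estimator has relative error that shrinks rapidly, and combined with $b$-balancedness ($b = \Theta(\sqrt k)$ consecutive deletions among any siblings, Claim~\ref{claim:balanced}) one can round to the exact number of deleted siblings with failure probability $\exp(-\Omega(\sqrt k)) = n^{-\omega(1)}$. That descendant-counting trick is precisely what resolves the association and is the technical core of the paper's \textsc{FindPaths} algorithm (Lemma~\ref{lemma:findpaths}); without it, or an equivalent, the recursion cannot get off the ground.

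Two smaller issues. First, the ``Chernoff on the $k^{D-d}$ independent surviving-path events'' is not quite right --- those events share nodes and are far from independent; what you want is a Galton--Watson survival bound or the argument that at least one of the $k$ children of $v$ roots a full-height survivor, which still gives the claimed $1 - n^{-\omega(1)}$ for $k \ge c\log^2 n$. Second, the claim that ``each extracted subtree is distributed as a TED trace of $v_i$'s depth-$(D-1)$ subtree'' is not literally true, because you only extract when the filter passes, which conditions on $v_i$'s trace subtree having full height; this biases the trace toward having a long surviving path. The paper implicitly sidesteps a related conditioning (its $\mathcal{A}_j$ requires at least one child of $j$ to survive), so this is a shared subtlety rather than a decisive gap, but your proposal asserts exact distributional equality where there is not one.
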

\thmref{StringMain} implies that $T(k,1/n^2) = \exp\left(O\left(k^{1/3}\right)\right)$ if $k \geq c \log^{2}(n)$, so the trace complexity in~\thmref{ted-large} is currently $\exp\left(O\left(\log_k(n) + k^{1/3}\right)\right)$. This is $\mathrm{poly}(n)$ as long as $k = O(\log^3 n)$. 

\begin{theorem} \thmlab{ted-small}
	In the TED model, 
	there exists $C>0$ depending only on $q$ such that	
	$\exp(C k\log_k n)$
	traces suffice to reconstruct a complete $k$-ary tree on $n$ nodes 
	with high probability.
\end{theorem}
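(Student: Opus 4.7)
The plan is to design a top-down recursive algorithm that reconstructs $T$ one level at a time. Let $T$ have depth $h = \log_k n$, with subtrees $T_1, \ldots, T_k$ hanging off its root, and let $v_i$ denote the root of $T_i$ (i.e.\ the $i$-th child of the root of $T$). The key structural observation is that when all $k$ of the $v_i$'s survive the deletion---an event of probability exactly $(1-q)^k$---the root of the resulting trace has precisely $k$ children, these are $v_1, \ldots, v_k$ in left-to-right order, and the subtree rooted at the $i$-th such child is distributed as an independent trace of $T_i$ (viewed as a complete $k$-ary tree of depth $h-1$ rooted at $v_i$). This motivates a simple filter-and-recurse strategy: from $N_h = \exp(Ckh)$ traces of $T$, retain only those whose root has exactly $k$ children; for each retained trace and each $i \in [k]$, record the subtree rooted at the $i$-th root-child as a sub-trace of $T_i$; then recurse on each $T_i$ using its corresponding collection of sub-traces.

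To close the recursion, choose $C$ large enough that $N_h \cdot (1-q)^k \geq 2 N_{h-1} = 2\exp(Ck(h-1))$; any $C > \ln(1/(1-q))$ with a bit of slack works once one accounts for standard concentration of the number of retained traces. The base case (constant depth) is handled directly, for instance by brute-force enumeration over the $2^{O(k)}$ possible labelings and picking the one most consistent with the retained traces. Because the recursion tree contains $\sum_{d=0}^{h} k^d = O(n)$ subtrees in total, a union bound shows the overall algorithm succeeds with probability $1 - O(1/n)$ provided each individual subtree reconstruction succeeds with probability at least $1 - 1/n^{2}$; this is arranged by including an extra $\log n$ factor in each level's trace count, which is absorbed into $C$.

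The main obstacle is controlling \emph{spurious} retained traces---traces whose root happens to have $k$ children even though not all of $v_1, \ldots, v_k$ survived, because the deletions of some $v_i$'s are compensated by deeper descendants propagating upward to fill in their places. For such a trace the extracted sub-traces are not faithfully distributed as traces of the $T_i$'s, and mis-identification of positions contaminates the recursion. To bound the spurious probability, introduce the random variable $X_d$ equal to the number of root-children a depth-$d$ complete $k$-ary subtree contributes to its parent when the subtree's own root has been deleted; $X_d$ satisfies $X_0 = 0$ and $X_d = \sum_{j=1}^{k} Y_j$ with independent $Y_j$'s, each equal to $1$ with probability $1-q$ and distributed as an independent copy of $X_{d-1}$ with probability $q$. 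The key technical lemma is that for every proper subset $S \subsetneq [k]$, the probability that exactly the $v_i$'s with $i \in S$ survive and $\sum_{i \notin S} X_d^{(i)} = k - |S|$ is a sufficiently small multiple of $(1-q)^k$ across all depths $d \leq h$.

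Establishing this likely requires either refining the filter (for instance, additionally requiring each root-child in the trace to have a subtree whose size lies in an appropriate concentration window, which rules out the bulk of propagation-based impostors) or supplementing the recursion with a robustness argument showing that the reconstruction at each level tolerates a small fraction of adversarially corrupted sub-traces. Once this obstacle is addressed, the recursion proceeds as described and yields the claimed trace count $\exp(Ck h) = \exp(Ck\log_k n)$.
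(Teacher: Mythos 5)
Your proposal correctly identifies the central difficulty, but it does not resolve it --- and resolving it is essentially the entire content of the paper's proof. You write that the filter ``root has exactly $k$ children'' admits spurious traces where deeper descendants propagate upward to fill gaps left by deleted children, and you correctly observe that such sub-traces are not faithfully distributed. But then you say only that the fix ``likely requires either refining the filter $\ldots$ or supplementing the recursion with a robustness argument,'' and defer: ``Once this obstacle is addressed, the recursion proceeds as described.'' That is a genuine gap, not a routine detail. The child-count filter alone is too weak: when a child $u'$ of some node $u$ near the bottom of the tree is deleted, the subtree below $u'$ is small, so with non-negligible probability it is entirely deleted and $u$ ends up with exactly $k$ children drawn from the wrong positions. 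This happens at every level of your recursion, and the probability of passing the filter for a wrong reason does not shrink as you recurse.

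The paper's proof addresses exactly this issue, with a combination of three ideas none of which appear in your proposal. First, rather than a top-down recursion (which compounds errors level by level), it decomposes $X$ once and for all into subtrees $G_X(i)$ --- root-to-depth-$(d-1)$ paths together with their $k$ leaf-children --- and reconstructs each from raw traces directly, avoiding error accumulation. Second, it replaces the child-count filter with the stronger \emph{$s$-stable} filter (\defref{stable}), which additionally demands that in the bottom $s$ levels of $G_Y(i)$ every node's children have the correct \emph{heights}, not merely the correct number; this is exactly the ``additional structural window'' you speculated about, but instantiated via height rather than subtree size, with $s = \lceil \log_k \log_{1/q}(3dk) \rceil$ chosen so the filter is both informative and not too costly. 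Third --- and this is what you are missing conceptually --- even this stronger filter does not guarantee $G_Y(i) = G_X(i)$; \lemref{stable-prob} shows it only holds with probability at least $2/3$ (the large-subtree argument handles nodes of height $> s$, while $s$-stability handles the bottom $s$ levels), so the algorithm takes a \emph{majority vote} over $\Theta(\log n)$ stable traces per subtree. Without a majority vote, a single spurious filtered trace poisons your recursion. Your sketch of the random variable $X_d$ and the claim that spurious events happen with ``a sufficiently small multiple of $(1-q)^k$'' would need to be made quantitative in a way that survives composition over $\log_k n$ levels, and it is not at all clear such a bound holds without the extra $s$-stability structure.
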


In particular, when $k$ is a constant, then the trace complexity of \thmref{ted-small} is $\mathrm{poly}(n)$. \thmref{ted-small} makes no restrictions on $k$, but uses more traces than \thmref{ted-large} for $k \geq c \log^2 n$. It would be desirable to smooth out the dependence on $k$ between our two theorems. In particular, we leave it as an intriguing open question to determine whether $\poly(n)$ traces suffice for all $k \leq \log^3(n)$.

\subsubsection{Left-Propagation model for complete \texorpdfstring{$k$}{}-ary trees}
We provide two reconstruction algorithms for $k$-ary trees in the Left-Propagation model, leading to the following two theorems.

\begin{theorem}\thmlab{left-large}
	In the Left-Propagation model, there exists $c > 0$ depending only on~$q$ such that if $k \geq c \log n$, then $T(d+k, 1/n^2)$ traces suffice to reconstruct a complete $k$-ary tree of depth $d = O(\log_kn)$ with high probability.
\end{theorem}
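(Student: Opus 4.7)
The key structural observation is that in the Left-Propagation model, each maximal left-only chain of the original tree acts as an independent string deletion channel. Concretely, let $C = p_0, p_1, \ldots, p_\ell$ be a maximal chain, where each $p_{j+1}$ is the leftmost child of $p_j$ and $p_\ell$ is a leaf. If the deletion set $D$ satisfies $t = |D \cap C|$, then the trace retains exactly the positions $p_0, p_1, \ldots, p_{\ell - t}$ of $C$, with labels equal (in order) to the non-deleted subsequence of $(b_{p_0}, \ldots, b_{p_\ell})$. Because distinct chains involve disjoint sets of original nodes, their string traces are mutually independent.

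The plan is to reconstruct the tree top-down, chain by chain, packaging chain and sibling information into a single string trace reconstruction call per internal node. For each internal node $p$ at depth $i$, form a combined string $S_p$ of length at most $d+k$ consisting of the labels along $p$'s downward left-only chain (of length at most $d-i+1$) concatenated with the labels of $p$'s $k$ children (together with an indicator of whether each child's subtree is entirely deleted). The deletions on $p$'s chain and within each non-leftmost child's subtree involve disjoint node sets, so the observations of $S_p$ across tree traces are iid traces of $S_p$ through an independent deletion channel (each position is deleted with probability at most $q$, and a smaller per-position deletion probability only helps reconstruction). Applying \thmref{StringMain} to each of the $O(n)$ combined strings $S_p$ with confidence $1/n^3$, and taking a union bound over internal nodes, yields a total trace complexity of $T(d+k, 1/n^2)$.

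The main obstacle is position identification across traces. A chain of length $\ell$ can be entirely absent from the trace with probability $q^\ell$, so some internal nodes may be missing and the children indices observed at a parent may shift. The hypothesis $k \geq c \log n$ (for $c$ sufficiently large depending only on $q$) is precisely what is needed so that, after reconstructing the length-$k$ sibling component of each $S_p$ via \thmref{StringMain}, every child's original position is identifiable with probability at least $1 - 1/n^2$. A careful top-down induction, exploiting the independence of deletions across disjoint subtrees, propagates correct identification from the root to all internal nodes and stitches the reconstructed local strings into a consistent labeling of the entire tree.
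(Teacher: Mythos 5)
Your high-level strategy coincides with the paper's: decompose the tree into chain-based substrings, argue each behaves as a string deletion channel, and use $k \geq c\log n$ to make alignment possible. However, two of your steps do not go through as stated. First, the ``children block'' of an internal node $p$ at depth $i < d-1$ is not a deletion-channel trace of the children's labels. In the Left-Propagation model a deleted child $c$ does not vanish from its position; rather, the label of a deeper node on $c$'s left-only chain propagates up into $c$'s slot, and the position itself disappears only when essentially all of $c$'s descendants are gone. So a surviving position in your children block may display a descendant's label rather than $b_{c}$, and positions are removed with a probability governed by subtree survival rather than by $q$ --- neither defect is repaired by appending an ``indicator of whether the child's subtree is entirely deleted,'' and neither can be absorbed by \thmref{StringMain} used as a black box (the claim that ``a smaller per-position deletion probability only helps'' is not something the black-box guarantee provides). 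Relatedly, taking ``$p$'s downward left-only chain'' for $p$ that is not the top of its maximal chain is also not a deletion channel on that suffix: deletions above $p$ on the chain shift labels from below up past $p$'s position, so what you observe at the suffix positions is the full chain's trace with a random prefix removed, not a trace of the suffix labels. The paper sidesteps both problems by using only the subtrees $H_X(i)$ for $i \in \calI$, i.e., each maximal chain taken from its top together with the siblings of its bottom leaf (which are leaves, so no label propagation occurs); these partition the non-root nodes, and the accompanying children block is used only at depth $d-1$.

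Second, and more importantly, your alignment step is asserted rather than proved. Reconstructing the length-$k$ sibling string via \thmref{StringMain} recovers the string itself, not the correspondence between the observed child positions in a given trace and the original child indices, so it cannot by itself ``make every child's original position identifiable''; and the concluding ``careful top-down induction'' is exactly the part that needs an argument. The paper's mechanism (\lemref{left-good-trace}) is structural: in this model a node's number of children never increases, so if a node in the trace has exactly $k$ children then no child position has shifted and the $m$-th observed child is the $m$-th original child. When $k \geq c\log n$, a Chernoff bound over the surviving descendants along each canonical path shows that with probability $1-\exp(-c'k)$ every trace contains the complete $k$-ary subtree of depth $d-1$, so every $P_Y(j)$ is well defined and the substring extraction is deterministic and exact in every trace. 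That lemma is the crux of the proof and is the piece missing from your proposal.
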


When $k \geq c \log n$, then $d +k = O(k)$, and we can reconstruct an $n$-node complete $k$-ary tree with $\exp(O(k^{1/3}))$ traces by using \thmref{StringMain}.

We also provide an alternate algorithm that makes no assumptions on $k$. 

\begin{theorem}\thmlab{left-small}
	In the Left-Propagation model, $O(n^{\gamma} \log n)$ traces suffice to reconstruct an $n$-node complete $k$-ary tree with high probability, where $\gamma = \ln\pth{\frac{1}{1-q}}\pth{\frac{c'k}{\ln n} + \frac{1}{\ln k}}$, for a constant $c' > 1$. 
\end{theorem}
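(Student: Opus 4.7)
The plan is to design a direct label-reading algorithm: for every node $v$, exhibit a sufficient good event $E_v$ under which $v$'s label can be read off at an identifiable position in the trace, then take enough traces so that $E_v$ fires in at least one trace for every $v$. The key structural observation is that the Left-Propagation deletion process decomposes $X$ into disjoint \emph{left-chains} $v_0, v_1, \ldots, v_m$ --- maximal sequences where $v_{i+1}$ is the leftmost child of $v_i$ and $v_0$ is either the root or a non-leftmost child. Directly from the deletion rule, each chain independently undergoes a string deletion channel on its labels $(a_0, \ldots, a_m)$: if $d_0, \ldots, d_m$ are the indicators, the trace's chain occupies its top $m+1-\sum_i d_i$ positions with the surviving labels in their original top-to-bottom order. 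Consequently, for $v = v_j$ in its chain, the trace's label at position $v_j$ equals $a_j$ iff $d_0 = \cdots = d_j = 0$; these nodes are exactly the ancestors of $v$ lying in $v$'s chain, hence a subset of the root-to-$v$ path.

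I would then define the good event $E_v$ by requiring that (i) no node on the root-to-$v$ path is deleted, and (ii) every sibling of $v$ (the $k$ children of $v$'s parent) is not deleted. These two sets intersect in $v$, so the union has size at most $d + k$, giving $\Pr[E_v]\ge (1-q)^{d+k}$. Condition (i) ensures that in each chain meeting the root-to-$v$ path, the surviving-set contains the whole path-prefix, so every ancestor of $v$ appears at its original position in the trace carrying its original label; this both guarantees $a_j$ is read at $v$'s position and lets the algorithm inductively navigate down to $v$'s parent. Condition (ii) ensures all $k$ siblings of $v$ are present in the trace, which pinpoints $v$'s position by its ordinal rank among those siblings.

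Taking $N = \Theta(\log n)\cdot (1-q)^{-(c'k+d)}$ traces for a suitable constant $c' > 1$ (which absorbs $O(1)$ slack in the identification argument) makes $\Pr[E_v\text{ fails for every trace}]\le n^{-2}$ by independence of traces; a union bound over all $n$ nodes finishes correctness with probability $\ge 1 - 1/n$. Substituting $d \le \lceil \log_k n \rceil$ then converts the trace count into $O(n^{\gamma}\log n)$ with the claimed $\gamma = \ln\!\pth{\frac{1}{1-q}}\pth{\frac{c'k}{\ln n}+\frac{1}{\ln k}}$.

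The main technical obstacle is making the identification step rigorous. Because an entire left-chain elsewhere in $X$ may be fully deleted, some positions are absent from the trace, so $v$'s position cannot be located by pure slot indexing in the complete $k$-ary template --- one has to match anchors from the trace to the original. The sibling condition (ii), paired with the inductive identification of $v$'s ancestors provided by (i), handles this locally; the subtle point is to argue that this local information is enough to keep the event size at $O(k+d)$, rather than the naive $O(kd)$ one would get by requiring every ancestor's full sibling set to be intact simultaneously. This delicate balance between event probability and identifiability is precisely what produces the additive $c'k + d$ exponent, and hence the specific form of $\gamma$ in the statement.
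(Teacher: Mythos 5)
Your high-level plan matches the paper's: partition $X$ into left-chains (the paper's subtrees $H_X(i)$), observe that the Left-Propagation channel acts as a string deletion channel on each chain, wait for traces with a ``good'' event for each node, read labels directly, and union-bound. But the proposal has a genuine gap at exactly the point you flag as ``the main technical obstacle,'' and the conditions you propose do not close it.

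The gap is in the identifiability/certification step. Your event $E_v$ asks that (i) no node on the root-to-$v$ path is deleted and (ii) no sibling of $v$ is deleted, and you claim this lets the algorithm locate $v$'s position in the trace by inductively navigating down. But in the Left-Propagation model, a sibling $u'$ of an \emph{ancestor} $u_t$ of $v$ can vanish from the trace entirely (when all nodes in the subtree rooted at $u'$ are deleted), so $u_t$'s parent may have fewer than $k$ children in the trace and the navigation is ambiguous: from the trace's compressed left-to-right ordering you cannot tell \emph{which} child-slot is missing. Conditions (i)+(ii) say nothing about siblings at depths $1,\dots,t_v-1$, so they do not guarantee unambiguous navigation. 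You explicitly note that the naive fix (every ancestor's full sibling set intact) would cost $O(kd)$ in the exponent and assert that local information suffices to get $O(d+k)$, but you never formulate or prove the actual condition that achieves this. That is precisely the content of the paper's \lemref{cater-left}: it adds, for each ancestor $u_{t-1}$ and each of its $k$ children $u'_j$, the requirement that \emph{at least one} node in the subtree rooted at $u'_j$ survives, and shows this extra requirement costs only $\prod_{h=2}^{d}(1-q^{k^h})^{k-1}=(1-q)^{ck}$, geometric-sum small and independent of $d$. This is where the constant $c'>1$ comes from; your claimed $(1-q)^{d+k}$ (i.e.\ $c'=1$) is for an insufficient event.

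A related missing piece: your event $E_v$ is stated in terms of which nodes of $X$ were deleted, which the algorithm cannot observe. The paper instead certifies correctness by an \emph{observable} event, $G_Y(i)\neq\perp$ (defined by requiring exactly $k$ children at each navigation step, aborting otherwise), and then proves (\lemref{label-left}) that $G_Y(i)\neq\perp$ deterministically forces $H_Y(i)=H_X(i)$ with identical labels. This abort-or-correct mechanism is what makes the read-off step rigorous, and the proposal needs an analogue of it to be complete.
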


\thmref{left-small} implies that $\poly(n)$ traces suffice to reconstruct a $k$-ary tree whenever $k = O(\log n)$ and $q$ is a constant. Moreover, for small enough $q$ and $k$, the algorithm needs only a {sublinear} number of traces (for example, binary trees with $q < 1/2 - \eps$).
From \thmref{StringMain}, the bound in \thmref{left-small} can be more simply thought of as $\exp(C'\cdot(d + k))$; and, in \thmref{left-large} as $\exp(C\cdot(d+k)^{1/3})$.

\subsubsection{Spiders}

Recall that the TED and Left-Propagation deletion models are the same for spiders. 
We provide two reconstruction algorithms, depending on whether the depth $d$ is large or small. 

\begin{theorem}\thmlab{MainSpider} Assume that $d \leq \log_{1/q}n$.
	For $q < 0.7$, there exists $C > 0$ depending only on $q$ such that $\exp (C  \cdot d (n q^d)^{1/3})$ traces suffice to reconstruct an $(n,d)$-spider  with high probability.
\end{theorem}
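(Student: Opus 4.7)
The plan is to design a mean-based algorithm in the spirit of Nazarov-Peres~\cite{NazarovPeres17-WorseCase} and De-O'Donnell-Servedio~\cite{DeOdonnellServedio17-WorseCase,de2019optimal} underlying \thmref{StringMain}. A convenient simplification for spiders is that both deletion models coincide and act independently on the $m = n/d$ arms: the labels along each arm pass through an independent length-$d$ binary deletion channel, and the arm is completely absent from the trace with probability $q^d$. Let $Y_{ij}$ denote the $j$-th label of the $i$-th surviving arm in a trace, with $Y_{ij}=0$ if no such position exists. By exploiting independence across arms (the probability that arm $k$ is the $i$-th surviving arm is $\binom{k-1}{i-1}(1-q^d)^i(q^d)^{k-i}$), I would first derive the closed form
\[
\E[Y_{ij}] \;=\; \sum_{k=i}^{m}\binom{k-1}{i-1}(1-q^d)^{i-1}(q^d)^{k-i}\,\beta_{k,j},
\]
where $\beta_{k,j}=\sum_{t=j}^{d}a_{k,t}\binom{t-1}{j-1}(1-q)^j q^{t-j}$ is exactly the Nazarov-Peres mean-bit statistic of the $k$-th arm viewed as a length-$d$ string.

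Packaging these statistics via the bivariate polynomial $F_A(u,z)=\sum_{k,t} a_{k,t}\,u^{k-1}z^{t-1}$ yields the clean generating identity
\[
\sum_{i,j}\E[Y_{ij}]\,u^{i-1}w^{j-1} \;=\; (1-q)\,F_A\bigl(q^d+(1-q^d)u,\;q+(1-q)w\bigr),
\]
so empirically estimating the $\E[Y_{ij}]$ to error $\eps$ is equivalent to evaluating $F_A$ on the polydisk $\{|\tilde u - q^d|\leq 1-q^d\}\times\{|\tilde z - q|\leq 1-q\}$ to commensurate accuracy. The reconstruction problem thus reduces to the following two-variable Littlewood-type estimate: for any nonzero bivariate polynomial $F$ of bidegree at most $(m-1,d-1)$ with coefficients in $\{-1,0,1\}$, the maximum of $|F|$ on this polydisk is at least $\exp(-C\cdot d(nq^d)^{1/3})$ for some $C=C(q)$. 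Given such a gap, a Chernoff estimate for each $\E[Y_{ij}]$ combined with a union bound over the $2^n$ candidate spiders yields the claimed trace complexity.

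The main obstacle is the bivariate analytic estimate. Applying the Borwein-Erdelyi / Nazarov-Peres inequality separately in each coordinate would yield only $\exp(-C(m^{1/3}+d^{1/3}))$, which fails to capture the coupling between the arm-loss rate $q^d$ and the within-arm deletion rate $q$. My plan is to first fix $\tilde z_\star$ on an appropriate Nazarov-Peres arc inside $\{|\tilde z - q|\leq 1-q\}$ so that $F(\tilde u,\tilde z_\star)$ is a nonzero univariate polynomial of degree $\leq m-1$ with bounded complex coefficients; the factor $d$ emerges here from running the Nazarov-Peres argument on the $d$ Littlewood sections of $F$ in the $z$-direction. I would then apply a Borwein-Erdelyi-style bound on the shifted disk $\{|\tilde u - q^d|\leq 1-q^d\}$ after the Moebius change of variable $\tilde u = q^d + (1-q^d)w$, which sends this disk to the unit disk; a careful analysis of how the Littlewood coefficients of $F(\cdot,\tilde z_\star)$ transform under this map should show that the effective degree of the resulting polynomial is of order $nq^d$ rather than $m$, producing the target factor $(nq^d)^{1/3}$. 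Making this Moebius-plus-concentration argument quantitatively sharp, and ensuring that the restriction $d\leq\log_{1/q}n$ and the assumption $q<0.7$ enter in the right places to keep the constants bounded, is the delicate technical point.
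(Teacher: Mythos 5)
Your setup matches the paper's: the paper also proceeds via a mean-based algorithm, derives an expected generating function that (after the substitution $u=w^d$) is exactly your bivariate identity restricted to the diagonal, reduces reconstruction to a Littlewood-type lower bound on a polynomial with $\{-1,0,1\}$ coefficients, and closes with a Chernoff bound plus a union bound over the $2^n$ candidate label vectors. You also correctly identify that a naive product-of-univariate bounds would only give $\exp(-C(m^{1/3}+d^{1/3}))$, and correctly flag the analytic lower bound as the crux.

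The gap is in that analytic lower bound. Your plan --- fix $\tilde z_\star$ on an arc to reduce to a univariate polynomial in $\tilde u$, apply the affine change of variables $\tilde u = q^d+(1-q^d)u$, and then assert that ``the effective degree of the resulting polynomial is of order $nq^d$ rather than $m$'' --- is not a proof, and I do not see how to make it one. An affine change of variables preserves the degree $m-1=n/d-1$; what changes is the coefficient vector, via a binomial-weighted sum, and these new coefficients need not concentrate on the first $O(nq^d)$ positions in any way that a Borwein--Erd\'elyi/Nazarov--Peres argument can exploit. (For example if the $z$-sections of $F$ are all equal to $1$, the transformed coefficients are all $\Theta\left(\tfrac{1}{1-q^d}\right)$, with no decay in $j$.) Moreover, your $d$-factor explanation (``running the Nazarov--Peres argument on the $d$ Littlewood sections of $F$ in the $z$-direction'') does not produce a single point $\tilde z_\star$ at which the full polynomial is simultaneously controlled, which is what the next step requires. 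The paper instead works directly on the one-dimensional curve $u=w^d$ with $|w|=1$: it factors $A(w)=\bigl(q^d+(1-q^d)w^d\bigr)^{\lfloor\ell^*/d\rfloor}\widetilde A(w)$, bounds the first factor on the arc $\gamma_L$ by $\exp\bigl(-O(q^d d^2/L^2)\bigr)$ (Lemma~\ref{lemma:BoundVariable}), and then proves $\sup_{\gamma_L}|\widetilde A|\geq \exp(-C\,dL)$ via a harmonic-measure / sub-mean-value argument. That harmonic-measure argument hinges on an explicit anchor point $w_0\in\{-q,\ q e^{i\pi(d-1)/d}\}$ where $|\widetilde A(w_0)|\geq e^{-cd}$ (Lemma~\ref{lemma:LBA0}), chosen so that $w_0^d=-q^d$ and $|q+(1-q)w_0|^d\geq|q^d+(1-q^d)w_0^d|$, which makes the tail of $\widetilde A$ a geometric series dominated by its first term. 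This is precisely the step where the hypothesis $q<1/\sqrt2$ enters (so that $\alpha=|q+(1-q)w_0|<1/2$), and \remref{ForAllq} explains why it is hard to avoid. Your proposal has no analogue of this anchor-point construction or of the factorization step, and the ``effective degree'' heuristic that is supposed to replace them is not substantiated, so the proof as sketched does not go through.
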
 

To understand the statement of this theorem, consider $d = c \log_{1/q} n$ with $c < 1$. 
A black-box 
reduction to the string case 
results in using $\exp(\widetilde{\Omega}(n^{1-c}))$ traces for reconstruction (see \secref{spiders-from-strings}), 
whereas \thmref{MainSpider} improves this to $\exp( \widetilde{O} ( n^{(1-c)/3} ))$.

\thmref{MainSpider} actually extends to any 
deletion probability $q < 1/\sqrt{2} \approx 0.707$, but this requires taking $d$ to be larger than some constant depending on $q$. 
We discuss further in \remref{ForAllq}
why the regime of $q > 1/\sqrt{2}$ is difficult to handle. Our approach extends previous results based on complex analysis~\cite{DeOdonnellServedio17-WorseCase,de2019optimal,NazarovPeres17-WorseCase}.
As the main technical ingredient, we prove new bounds on certain polynomials whose coefficients are small in modulus.
In particular, we analyze a generating function that might be of independent interest, related to Littlewood polynomials. 

For large depth $d \geq \log_{1/q} n$, full paths of the spider are unlikely to be completely deleted, 
and we derive the following result via a reduction to string trace reconstruction.

\begin{proposition}\proplab{SpiderCor}
For 
$q < 1$ 
and all $n$ large enough, 
an $(n,d)$-spider with $d \geq \log_{1/q} n$ 
can be reconstructed with $2 \cdot T\left(d, \frac{1}{2n^{2}}\right)$ traces with high probability.
\end{proposition}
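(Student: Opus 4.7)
The plan is to reduce the problem to $n/d$ independent instances of string trace reconstruction, one per leg of the spider, by exploiting that deep legs are very unlikely to disappear entirely. Each leg has $d$ non-root nodes, each deleted independently with probability $q$, so a single leg is completely erased in a given trace with probability $q^d \leq q^{\log_{1/q} n} = 1/n$. I call a trace \emph{complete} if it contains exactly $n/d$ legs---equivalently, no leg is fully deleted---an event directly observable from the trace by counting surviving legs. Since the legs are deleted independently,
\[
p_{\mathrm{good}} \;:=\; \Pr[\text{trace is complete}] \;=\; (1-q^d)^{n/d} \;\geq\; \exp\pth{-2(n/d)\,q^d} \;\geq\; \exp(-2/d),
\]
and since $d \geq \log_{1/q}n$ grows with $n$, we have $p_{\mathrm{good}} \to 1$; in particular $p_{\mathrm{good}} \geq 3/4$ for all $n$ large enough.

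The algorithm draws $M = 2\,T(d, 1/(2n^2))$ independent traces, discards the incomplete ones (easily identified by counting legs), and for each leg index $i \in \{1,\ldots,n/d\}$ extracts the $i$-th surviving leg (in the left-to-right order the channel preserves) from every complete trace. These length-$\leq d$ strings are exactly i.i.d.\ traces of the $i$-th leg's length-$d$ label string under the standard string deletion channel with parameter $q$ (the legs evolve independently, and conditioning on the event ``leg $i$ survives'' does not alter the within-leg deletion distribution across traces). I then feed each of the $n/d$ batches into the string trace reconstruction algorithm of \thmref{StringMain}. For the analysis, the number of complete traces is $\mathrm{Binomial}(M, p_{\mathrm{good}})$ with mean at least $(3/2)\,T(d,1/(2n^2))$, so a multiplicative Chernoff bound gives at least $T(d,1/(2n^2))$ complete traces with probability $1 - \exp(-\Omega(T(d,1/(2n^2)))) = 1 - 1/\poly(n)$, using that $T(d,1/(2n^2)) = \Omega(\log n)$. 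Each per-leg subproblem then succeeds with probability $\geq 1 - 1/(2n^2)$ by \thmref{StringMain}, and a union bound over the $n/d \leq n$ legs together with the concentration event yields overall success probability $1 - O(1/n)$.

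The only real obstacle is keeping $p_{\mathrm{good}}$ strictly above $1/2$: the hypothesis $d \geq \log_{1/q}n$ is precisely what lets $(1-q^d)^{n/d}$ stay bounded away from $1/2$ as $n$ grows, so that the factor of $2$ in the sample count covers the traces lost to incompleteness. This is where the ``$n$ large enough'' qualifier enters; everything else is a routine reduction plus union bound.
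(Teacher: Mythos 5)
Your proof takes essentially the same route as the paper's: discard traces in which some leg is fully deleted, extract per-leg strings from the remaining (``complete'') traces, reconstruct each leg via string trace reconstruction, and use a Chernoff bound to guarantee enough complete traces. The paper packages the reduction through a ``trace reconstruction with censoring'' lemma (\lemref{censored_TR} in \apndref{TR_censoring}) with $\eps = 1/2$, whereas you inline the Chernoff step directly; the decomposition and the final sample bound are the same.

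There is one real gap. You assert that the extracted leg-$i$ strings are ``exactly i.i.d.\ traces of the $i$-th leg's length-$d$ label string under the standard string deletion channel,'' with the justification that conditioning on leg $i$ surviving does not alter the within-leg distribution. That justification is false: the event that all $d$ nodes of leg $i$ are deleted has probability $q^d > 0$ under the unconditioned channel but probability $0$ after conditioning on leg $i$ being nonempty, so the conditional law is genuinely different, and you cannot feed these strings to the algorithm behind \thmref{StringMain} without further argument. This is precisely the subtlety that \lemref{censored_TR} is introduced to handle: one must show that $T(d,\cdot)$ \emph{nonempty} traces suffice, which follows from a short coupling (reinsert freshly drawn empty traces at rate $q^d$ to reconstitute the unconditioned sample; since $T$ unconditioned draws produce at most $T$ nonempty ones, the supply of nonempty traces never runs out). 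With that step supplied — or simply by invoking \lemref{censored_TR} as the paper does — the rest of your calculation goes through.
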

Using \thmref{StringMain}, the current bound for \propref{SpiderCor} is $2 \cdot T\left(d, \frac{1}{2n^{2}}\right) \leq \exp(O(d^{1/3}))$. 
Comparing \thmref{MainSpider} and \propref{SpiderCor}, 
we see that the bounds in the exponent are 
$d (n q^d)^{1/3}$ 
and $d^{1/3}$, 
for $d \leq \log_{1/q} n$ and $d \geq \log_{1/q} n$, respectively. 
We leave it as an open question to unify these bounds, and in particular, to determine whether the jump is necessary as $d$ crosses $\log_{1/q} n$.

\subsubsection{Average-case labels for trees}
Our results have focused on trees with worst-case, arbitrary labels. Assuming the binary labels are uniformly distributed independent bits leads to significantly improved bounds. For the string case, \thmref{StringRandom} implies that $\avgT(k,1/n^2) = \exp(O(\log^{1/3}k + \log \log n))$ traces suffice to reconstruct a random binary string with high probability. For three of our results, we can use this as a black box and replace the dependence on $T(k,1/n^2)$ with $\avgT(k,1/n^2)$ for average-case labeled trees. 
The average-case trace complexity for $k$-ary trees under the TED model---analogously to \thmref{ted-large}---becomes $\exp(O(\log_{k}(n) + \log^{1/3}k))$ when $k \geq c \log^{2}(n)$. 
For the Left-Propagation model---analogously to \thmref{left-large}---the average-case trace complexity becomes $\exp(O(\log^{1/3} k + \log \log n))$ when $k \geq c \log n$. 
For $(n,d)$-spiders with depth $d \geq \log_{1/q} n$---analogously to \propref{SpiderCor}---the average-case trace complexity becomes $\exp(O(\log^{1/3} d + \log \log n))$. 
Since it is straightforward to use the average-case string result instead of the worst-case result to obtain the results just described,
we restrict our exposition to worst-case labeled $k$-ary trees and spiders.

\subsection{Overview of TED Deletion Algorithms}

Previous work on string trace reconstruction mostly utilizes two classes of algorithms:
mean-based methods, which use single-bit statistics for each position in the trace, 
and alignment-based methods, which attempt to reposition subsequences in the traces to their true positions.

Although mean-based algorithms are currently quantitatively better for string reconstruction, they seem difficult to extend to $k$-ary trees under the TED deletion model. 
Specifically, mean-based methods require a precise understanding of how the bit in position $j'$ of the original tree affects the bit in position $j$ of the trace. 
For strings, there is a global ordering of the nodes which enables this. Unfortunately, for $k$-ary trees with $k \notin \{1,n\}$ under the TED model, nodes may shift to a variety of locations, making it unclear how to characterize bit-wise statistics. 
To circumvent this challenge, we provide two new algorithms, depending on whether or not the degree~$k$ is large ($k \geq c\log^2(n)$). The main idea is to partition the original tree into small subtrees and learn their labels using a number of traces parameterized primarily by $k$ and $\log_k n$, which can be much smaller than $n$.

When $k$ is large enough, we will be able to localize root-to-leaf paths, in the sense that we can identify the location of their non-leaf nodes in the original tree with high probability. By covering the internal nodes of the tree by such paths, we will directly learn the labels for all non-leaf nodes. Then, we observe that the leaves can be naturally partitioned into stars of size $k$, and we can learn their labels by reducing to string trace reconstruction (for strings on $k$ bits). Any improvement to string trace reconstruction will lead to a direct improvement for $k$-ary trees with large degree. 

When $k$ is small, our localization method fails, and we resort to looking at traces which contain even more structure (which requires more traces). We decompose the entire tree into certain subtrees and recover their labels separately.  We define a property which is easily detectable among traces and show that when this property holds, we can extract labels for the subtrees that are correct with probability at least~2/3. Then, we take a majority vote to get the correct labels with high probability.

\subsection{Overview of Left-Propagation Algorithms}

As with the TED model, we combine mean-based and alignment-based strategies, and we provide different algorithms depending on whether the degree is large or small. The two algorithms differ in how they align certain subtrees of traces to positions in the tree. 

When $k$ is large enough ($k \geq c \log n$ for a constant $c > 0$), our first algorithm will use results from string trace reconstruction as a black box. The key idea is that certain subtrees will behave as if they were strings on $O(k)$ bits in the string deletion model. Although this does not happen in all traces, we show that it occurs with high probability. Overall, we partition $X$ into such subtrees, and we reduce to string reconstruction results to recover the labels separately. 

On the other hand, when $k$ is small (such as binary trees with $k=2$), we do not know how to reduce to string reconstruction. Instead, our second algorithm waits until a larger subtree survives in a trace. We show that this makes the alignment essentially trivial, and we can directly recover the labels for certain subtrees.  Quantitatively, the trace complexity of the first algorithm is better, but the reconstruction only succeeds for large enough $k$.

\subsection{Overview of Spider Techniques}

When the paths of a spider are sufficiently long---specifically, if they have depth $d \geq \log_{1/q} n$---then 
with probability close to 1, no path is fully deleted in a given trace. 
This allows us to trivially match paths of the trace spider to paths of the original spider 
and then use string trace reconstruction algorithms on the individual paths, leading to \propref{SpiderCor}. 

When the paths of a spider are shorter ($d < \log_{1/q} n$), many traces have paths fully deleted. 
As illustrated in \figref{spider}, when paths are fully deleted from a spider, 
it is unclear which paths were deleted, 
which forces us to align paths from different traces. 
We bypass direct alignment-based methods and instead use a mean-based algorithm that generalizes the methods introduced in the proof of \thmref{StringMain} by~\cite{DeOdonnellServedio17-WorseCase,de2019optimal,NazarovPeres17-WorseCase}. 
The main difficulty we address is that, in contrast to strings which are one dimensional, spiders are two dimensional: 
one dimension representing which path in the spider a node is in, 
and the other representing where in a path a node~is.

\subsection{Outline} 

The rest of the paper is organized as follows.
Preliminaries are in
\secref{Preliminaries}. 
The proofs of \thmref{ted-large} and \thmref{ted-small} for $k$-ary trees under the TED model appear in \secref{ted}.
The proofs of \thmref{left-large} and \thmref{left-small} 
for the Left-Propagation model appear
in \secref{left}.
The spider reconstruction preliminaries and algorithms for \thmref{MainSpider} and \propref{SpiderCor} are in 
\secref{spiders}.
The three main sections can be read independently, after their preliminaries. 
We conclude in \secref{conclusion}.

\section{Preliminaries}\seclab{Preliminaries}

In what follows, $X$ denotes the (known) underlying tree, along with the (unknown) binary labels on its $n$ non-root nodes.

\paragraph{Standard tree definitions.}
We say that $X$ is {\em rooted} if it has a fixed root node. We assume the root is never deleted (for further explanation see \remref{root}). 
An {\em ancestor} (resp. {\em descendant}) of a node $v$ is a node reachable from $v$ by proceeding repeatedly from child to parent (resp. parent to child). We say $v$ is a {\em leaf} if it has no children, and otherwise $v$ is an {\em internal} node. The {\em length} of a path equals the number of nodes in it.  The {\em depth} of $v$ is the number of edges in the path from the root to~$v$. The {\em height} of $v$ is the number of edges in the longest path between~$v$ and a leaf. The depth of a rooted tree is the height of the root. We say that $X$ is a complete $k$-ary tree of depth $d$ if every internal node has $k$ children and all leaves have depth $d$.

\subsection{\texorpdfstring{$k$}{k}-ary Tree Algorithm Preliminaries}\seclab{tree-prelims}

Let $X$ be a rooted complete $k$-ary tree with depth $d$.
We index the non-root nodes according to the BFS order on $X$ (the root is not indexed; the children of the root are $\{0,1,\ldots, k-1\}$, etc.). We identify nodes of $X$ with their index. 
For $t \in [d]$, let $\calJ_t$ be the nodes at depth~$t$. Define $\calI_1 := \calJ_1 = \{0,1,\ldots, k-1\}$, and for $t \geq 2$, 
\[
\calI_t := \{i \in \calJ_t \mid i\ \mathrm{mod}\ k \neq 0\}.
\]
In words, for $t \geq 2$, $\calI_t$ is the set of nodes at depth $t$ which are not left-most among their siblings. 
Define also $\calI := \bigcup_{t = 1}^{d-1} \calI_{t}$.

We define three unlabeled subtrees of $X$. Let $P_X(i)$ be the path from the root to $i$ in~$X$. Define~$H_X(i)$ as the union of the left-only path starting at $i$, descending to a leaf~$\ell$, and the $k-1$ siblings of $\ell$. Finally, define $G_X(i) := P_X(i) \cup H_X(i)$. See \figref{subtrees} for an example of these subtrees. For clarity, we note that if~$i$ has depth $t$ in $X$ (i.e., $i \in \calJ_t$), then $|P_X(i)| = t+1$ and $|H_X(i)| = d-t+k$ and $|G_X(i)| = d+k$.

\subsubsection*{Canonical subtrees of traces}

We also consider certain subtrees of a trace $Y$. They will be analogous to $P_X(i),\ H_X(i),$ and $G_X(i)$, and they only depend on the position of $i$ in $X$. We will denote them as $P_Y(i),\ H_Y(i),$ and $G_Y(i)$. Intuitively, they are subtrees in $Y$ obtained by looking at nodes that should be in the same position as  the corresponding ones in $X$. However,  the node $i$ does not necessarily belong to these subtrees (e.g., it may have been deleted in $Y$, or another node may be in its place). In what follows, we refer to subtrees as sequences of nodes in the BFS order, since the edge structure will be clear from context (i.e., the subtree is the induced subgraph on the relevant nodes).

\begin{figure*}[t!]
	\centering
	\includegraphics[angle=0,width=0.95\textwidth]{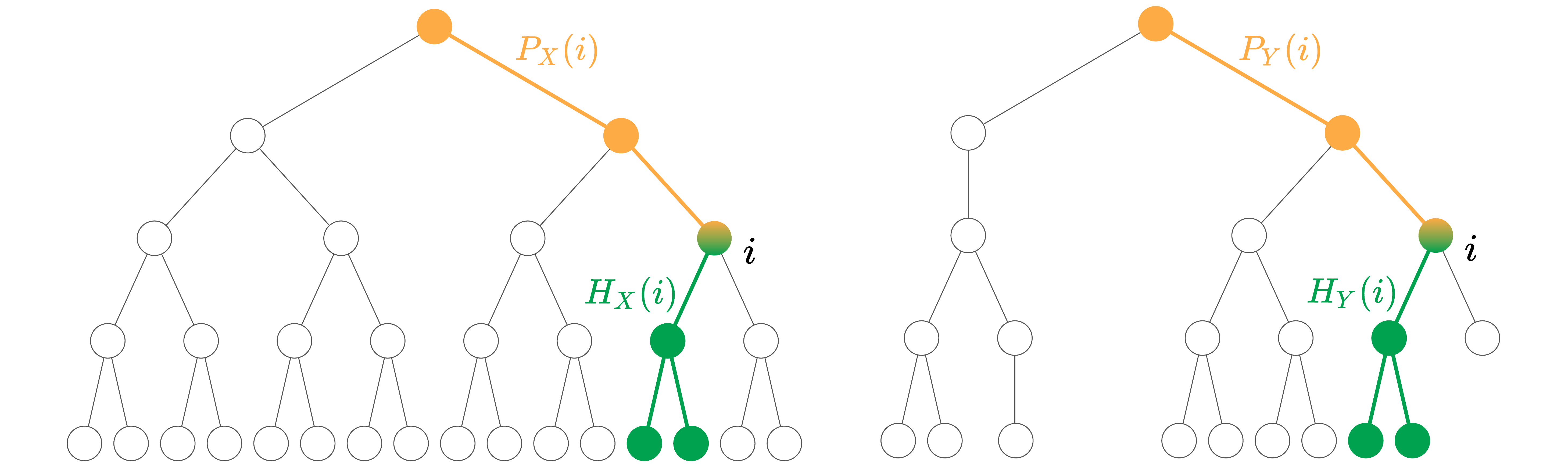}
	\caption{Canonical subtrees for $k$-ary trees, in the original tree (left) and trace (right).} 
	\figlab{subtrees}
\end{figure*}

We now formally define $P_Y(i),\ H_Y(i),$ and $G_Y(i)$, which are also depicted in \figref{subtrees}. Fix $i$, and let $u_0, u_1,\ldots, u_{d-1}$ be the internal nodes in $G_X(i)$, where $u_t$ has depth~$t$, and let $u_d, \ldots, u_{d+k-1}$ be the leaf nodes, ordered left-to-right in the BFS order. Define $\pi_i:\{0,1,\ldots,d-1\} \to \{0,1,\ldots,k-1\}$ so that $\pi_i(t)$ is the position of $u_{t+1}$ in $X$ among its siblings (the children of its parent $u_{t}$). Note that~$\pi_i$ is independent of the labels of $X$.
Let $t_i$ be the depth of $i$ in $X$. We define $P_Y(i)$ as the path  $v_0,v_1,\ldots, v_{t_i}$ in $Y$ obtained from the following process. Set $v_0$ to be the root. Then, for $t \in [t_i]$, let~$v_{t}$ be the node at depth $t$ in $Y$ that is in position $\pi_i(t-1)$ among the $k$ children of $v_{t-1}$, where we abort and set $P_Y(i) = \perp$ if $v_{t-1}$ does not have exactly $k$ children. 
Similarly, let $G_Y(i)$ be the subtree $v_0, v_1,\ldots, v_{d+k-1}$, where $v_t$ is defined as follows. Set $v_0$ to be the root in $Y$. Then, for $t \in [d-1]$, let $v_{t}$ be the node at depth $t$ in $Y$ that is in position $\pi_i(t-1)$ among the $k$ children of $v_{t-1}$, where we abort and set $G_Y(i) = \perp$ if $v_{t-1}$ does not have exactly $k$ children. Finally, set $v_d, \ldots, v_{d+k-1}$ to be the $k$ children of $v_{d-1}$, and again we set $G_Y(i) = \perp$ if $v_{d-1}$ does not have precisely $k$ children. If $G_Y(i) \neq \perp$, then set $H_Y(i) = v_{t_i},\ldots, v_{d+k-1}$, and otherwise, set $H_Y(i) = \perp$. Observe that if  $G_Y(i) \neq \perp$, then we have $G_Y(i) = P_Y(i) \cup H_Y(i)$.

We remark that $G_Y(i), H_Y(i),$ and $P_Y(i)$ depend only on $\pi_i$ and the tree structure of $Y$, and therefore 
they do not use any label information from $X$. 
We also note that whether these subtrees are set to $\perp$ will be significant, 
since this implies certain structural properties of traces.  
If all nodes in $G_X(i)$ survive in a trace $Y$, then we say that $Y$ {\em contains} $G_X(i)$. 
We write $G_Y(i) = G_X(i)$ if the nodes in these subtrees are exactly the same 
(by construction, the edges will also be the same). 
We conclude this section with two remarks that are useful for reconstruction of $X$. 
\begin{remark}\remlab{BitByBit}
If $G_Y(i) = G_X(i)$, then we can reconstruct the labels of $G_X(i)$ 
bit by bit by copying the label to be that from the corresponding bit in $G_Y(i)$. 
The same applies for $H_X(i)$ and $P_X(i)$. 
\end{remark}

\begin{remark}\remlab{SeparateReconstruction}
To reconstruct labels of $X$, one can reconstruct labels of subtrees of $X$, where the subtrees cover all nodes of $X$.
\end{remark}


\section{Reconstructing Trees, TED Deletion Model}\seclab{ted}

In this section we prove our two results for $k$-ary trees in the TED model.

\subsection{Proof of \thmref{ted-large} Concerning Large Degree Trees}
\seclab{overview-ted-large}
Our algorithm utilizes structure that occurs when $k \geq c\log^2(n)$. 
Recall that for a node $i$ in $X$, we think of $i$'s children as being ordered consecutively, left-to-right, 
based on the BFS ordering of $X$.

\begin{definition}
	Let $Y$ be a trace of a tree $X$. We say that~$Y$ is $b$-{\em balanced} if, for every internal node~$i$ in~$X$, at most $b$ consecutive  children of $i$ have been deleted in~$Y$.
\end{definition}

\begin{claim} \clmlab{balanced} If $X$ has $n$ nodes, then a trace $Y$ is $b$-balanced with probability at least $1 - nq^b$. 
\end{claim}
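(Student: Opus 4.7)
My plan is to prove the claim by a union bound over ``bad windows'' of consecutive children. I would first observe that $Y$ fails to be $b$-balanced exactly when there exists some internal node $i$ of $X$ together with $b+1$ consecutive children of $i$ (in the BFS left-to-right ordering) all of which are deleted in $Y$. Since each non-root node is deleted independently with probability $q$, the probability that any fixed set of $b+1$ specified children is entirely deleted is exactly $q^{b+1}$.

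Next, I would count the number of such candidate windows across the whole tree. For an internal node $v$ with $c_v$ children, the number of length-$(b+1)$ windows of consecutive children is at most $c_v$. Summing over all internal nodes $v$ of $X$ gives $\sum_v c_v$, which equals the total number of non-root nodes, namely $n$ (every non-root node has a unique parent, which is internal). A union bound over these at most $n$ windows then yields
\[
\Pr[Y \text{ is not } b\text{-balanced}] \;\leq\; n \cdot q^{b+1} \;\leq\; n q^{b},
\]
which is the claimed bound.

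I do not expect any real obstacle; the argument is a routine independence-plus-union-bound computation. The only points worth being careful about are the correct interpretation of ``at most $b$ consecutive'' (so that the bad event is a run of deletions of length at least $b+1$) and the observation that summing the number of children over all internal nodes gives exactly the number of non-root nodes. Once these are pinned down, the stated bound $1-nq^b$ is immediate and in fact slightly loose, since a factor of $q$ could be saved.
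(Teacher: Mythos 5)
Your proof is correct and follows essentially the same route as the paper: a union bound over windows of consecutive siblings, each deleted with probability $q^{b+1}$ (the paper uses runs of length $b$ and probability $q^b$, being slightly looser on the off-by-one), with the total number of windows bounded by $n$ since every non-root node has a unique internal parent. No issues.
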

\begin{proof} Any set of $b$ consecutive nodes is deleted with probability $q^b$. Since there are at most $n$ starting nodes for a run of $b$ nodes, a union bound proves the claim.
\end{proof}
Since $T(k,1/n^2) = \exp(O(k^{1/3}))$ by \thmref{StringMain}, the number of traces used in \thmref{ted-large} is $\exp \left( O( \log_{k}(n) + k^{1/3} ) \right)$.
Therefore, setting 
$b := 10 \sqrt{k}/\log(1/q)$ = $\Omega(\log n)$, 
\clmref{balanced} and a union bound show that with high probability \emph{all} traces will be $b$-balanced. 
As we shall see, 
the benefit of this balanced structure manifests itself in the proof of correctness of the reconstruction algorithm used for \thmref{ted-large}. 

Our reconstruction algorithm that proves \thmref{ted-large} consists of two main steps: 
\begin{enumerate}
\item \textbf{(Finding Paths and Grouping Traces)} First, we process all the traces and group them into different sets (which may overlap). This grouping of traces is based on finding root-to-leaf paths that are preserved in a trace and estimating where these paths came from in~$X$. We term this latter algorithm the \textsc{FindPaths} algorithm; see \algref{findpaths_kary_large_TED} below for its pseudocode. 

\item \textbf{(Reconstruction)} We then analyze each subset of traces. Each 
of these 
leads to reconstructing the labels for a particular subset of $X$, consisting of a path from the root to a node at depth $d-1$, together with the $k$ children of this node. Finally, we output the union of all such labels as the estimated labels of $X$. 
In other words, we cover $X$ with the collection of subtrees 
$\left\{ G_{X}(j) : j \in \mathcal{J}_{d-1} \right\}$ 
and estimate the labels of each $G_{X}(j)$ separately. 
\algref{alg:TED_k_large} below states, in pseudocode, the full reconstruction algorithm, which calls \algref{findpaths_kary_large_TED} as a subroutine.
\end{enumerate} 

\subsubsection*{The \textsc{FindPaths} algorithm}

We start by describing the \textsc{FindPaths} algorithm (see \algref{findpaths_kary_large_TED}). 
The input to this algorithm is a trace $Y$, 
while the output of the algorithm will be a subset of $\mathcal{J}_{d-1}$, the nodes of $X$ at depth $d-1$; the conceptual meaning of this subset will be clear once the algorithm is described. 
We recall that we index the non-root nodes according to the BFS order on $X$, 
and we will interchangeably refer to nodes and their BFS index. 

\begin{algorithm}[t!]
	\caption{\textsc{FindPaths} in $k$-ary trees, TED deletion model}
	\alglab{findpaths_kary_large_TED}
	\hspace*{\algorithmicindent} \textbf{Input:} a trace $Y$ sampled from the TED deletion channel.
	\begin{algorithmic}[1] 
		\State Initialize  $\mathcal{S} = \emptyset$.
		\For{$v$ a leaf in $Y$}
			\If{$v$ has depth $d$ in $Y$}
				\State Add the parent of $v$ to $\mathcal{S}$.
			\EndIf
		\EndFor
		\State Initialize  $\widehat{\mathcal{S}} = \emptyset$.
		\For{$v \in \mathcal{S}$}
			\For{$\ell=0$ to $d-2$}
				\State Compute $\widehat{a}_{\ell}$ based on node-to-leaf anchor paths and combining plug-in estimators (see \Eqref{ahatell} for the final formula, the text for further details, and \figref{balanced} for an illustration).
			\EndFor 
			\State Set $\widehat{w} := \widehat{a}_{d-2} \widehat{a}_{d-3} \cdots \widehat{a}_{0}$ (written in base-$k$ expansion).
			\State Add $\widehat{w}$ to $\widehat{\mathcal{S}}$.
		\EndFor
		\State \textbf{Output:} $\widehat{\mathcal{S}}$.
	\end{algorithmic}
\end{algorithm}

The first part of the \textsc{FindPaths} algorithm is to identify root-to-leaf paths that have been preserved (i.e., no vertex in the path has been deleted) in the trace $Y$. This is straightforward, 
since if a root-to-leaf path in~$X$ is preserved, then the corresponding leaf has depth $d$ in $Y$; 
and vice versa, 
every leaf in $Y$ that has depth $d$ corresponds to a root-to-leaf path in $X$ that was preserved. Once all surviving root-to-leaf paths have been identified, we collect in the set $\mathcal{S}$ all the nodes of~$Y$ that are on a surviving root-to-leaf path and have depth $d-1$ (see lines 1--6 of \algref{findpaths_kary_large_TED}).

We know that each node $v \in \mathcal{S}$ must have come from a node $w \in \mathcal{J}_{d-1}$ (i.e., a node in $X$ of depth $d-1$). 
The second and final part of the \textsc{FindPaths} algorithm consists of estimating, for each node $v \in \mathcal{S}$, which original node $w \in \mathcal{J}_{d-1}$ it came from; this estimate is denoted by $\widehat{w}$.  
Note that the left-to-right ordering of the nodes in $\mathcal{S}$ and the original nodes in $\mathcal{J}_{d-1}$ which they come from are the same, so the algorithm needs only to output the set 
$\widehat{\mathcal{S}} := \left\{ \widehat{w} : v \in \mathcal{S} \right\}$ 
(since the mapping between $\mathcal{S}$ and $\widehat{\mathcal{S}}$ follows the left-to-right ordering).\footnote{Regarding notation: note that $\widehat{\mathcal{S}}$ is \emph{not} an estimate of $\mathcal{S}$, but rather an estimate of the pre-image of $\mathcal{S}$ before~$X$ is passed through the deletion channel to obtain $Y$. We hope that the reader accepts this abuse of notational convention.}

Given $v \in \mathcal{S}$, 
to compute the estimate $\widehat{w}$, 
we first observe that any node $w \in \mathcal{J}_{d-1}$ can be written in its base-$k$ expansion, 
\[
w = a_{d-2} a_{d-3} \cdots a_{0},
\] 
where $a_{\ell} \in \left\{ 0, 1, \ldots, k - 1 \right\}$ for $\ell \in \left\{ 0, 1, \ldots, d - 2 \right\}$. 
Thus in order to compute an estimate $\widehat{w}$, 
it suffices to compute an estimate $\widehat{a}_{\ell}$ of $a_{\ell}$ for every $\ell \in \left\{ 0, 1, \ldots, d - 2 \right\}$ 
and then set 
\[
\widehat{w} := \widehat{a}_{d-2} \widehat{a}_{d-3} \cdots \widehat{a}_{0}.
\] 
The following is an equivalent and more pictorial way of thinking about this. 
Let $u_{0}, u_{1}, \ldots, u_{d-1}$ denote the nodes in $X$ on the path from the root to $w \in \mathcal{J}_{d-1}$, with $u_{t}$ having depth $t$ 
(in particular, $u_{0}$ is the root and $u_{d-1} = w$). 
Then, for $\ell \in \left\{ 0, 1, \ldots, d-2 \right\}$, 
the quantity $a_{\ell}$ 
is the position (from the left, with indexing starting at $0$) of $u_{d- 1 - \ell}$ among the $k$ children of $u_{d-2-\ell}$. 
Now suppose that $v \in \mathcal{S}$ came from node $w \in \mathcal{J}_{d-1}$, 
and let $u'_{0}, u'_{1}, \ldots, u'_{d-1}$ denote the nodes in $Y$ on the path from the root to $v$, with $u'_{t}$ having depth $t$ (in particular, $u'_{0}$ is the root and $u'_{d-1} = v$). 
Thus estimating $a_{0}, a_{1}, \ldots, a_{d-2}$ 
corresponds to estimating, for each node $u'_{t}$, 
where its pre-image in $X$ ranks in the left-to-right ordering of itself and its $k-1$ siblings. 

We now explain how to compute the estimate $\widehat{a}_{0}$ given $v \in \mathcal{S}$; computing $\widehat{a}_{\ell}$ for general $\ell$ is similar but involves slightly more notation, so we defer this for now. 
Let $z_{0}, z_{1}, \ldots, z_{m}$ denote $v$ and its siblings in $Y$, ordered from left to right, and let $\mathcal{Z} := \left\{ z_{0}, z_{1}, \ldots, z_{m} \right\}$. 
Let $z^{*}_{0}, z^{*}_{1}, \ldots, z^{*}_{k'}$ denote the nodes among $\mathcal{Z}$ that have a child in $Y$, ordered from left to right, and let $\mathcal{Z}^{*} := \left\{ z^{*}_{0}, z^{*}_{1}, \ldots, z^{*}_{k'} \right\}$. Note that $v \in \mathcal{Z}^{*}$ by definition; define $k^{*}$ to be the index such that $v = z^{*}_{k^{*}}$. 
Also, by construction, the pre-images of all nodes in $\mathcal{Z}^{*}$ were siblings in $X$, so we must have that $k' \leq k-1$. 
Note that there are two ways that a node can be in $\mathcal{Z} \setminus \mathcal{Z}^{*}$: 
\begin{itemize}
\item A sibling $w'$ of $w$ in $X$ is \emph{not} deleted in $Y$, but \emph{all} of the children of $w'$ are deleted in $Y$. Then the image of $w'$ in $Y$ is in $\mathcal{Z} \setminus \mathcal{Z}^{*}$. Note that this is a highly unlikely event, since $k$ is large. 

\item A sibling $w'$ of $w$ in $X$ is deleted in $Y$, but not all of the children of $w'$ are deleted in $Y$. Then the images of the non-deleted children of $w'$ in $Y$ are in $\mathcal{Z} \setminus \mathcal{Z}^{*}$. 
Note that if such a vertex $w'$ is deleted, then in expectation there will be $(1-q)k$ non-deleted children. 
\end{itemize}
Since the first bullet point above is highly unlikely and the second bullet point describes the typical behavior of a trace, 
this motivates the following estimation procedure. 
For $i \in \left[ k' \right]$, let 
$\alpha_{i}$ denote the number of nodes in $\mathcal{Z}$ that are between 
$z^{*}_{i-1}$ and $z^{*}_{i}$; furthermore, let $\alpha_{0}$ denote the number of nodes in $\mathcal{Z}$ that are before $z^{*}_{0}$. 
Now for every $i \in \left\{ 0, 1, \ldots, k' \right\}$ let 
$\widehat{\alpha}_{i}$ denote the unique integer satisfying 
\[
\widehat{\alpha}_{i} - 1/2 
\leq 
\frac{\alpha_{i}}{(1-q)k} 
<
\widehat{\alpha}_{i} + 1/2. 
\]
Finally, we set 
\begin{equation}\eqlab{ahat0}
\widehat{a}_{0} := k^{*} + \sum_{i=0}^{k^{*}} \widehat{\alpha}_{i}.
\end{equation}
(If this results in an estimate that is greater than $k-1$, then instead set $\widehat{a}_{0} := k-1$.)

Now we turn to estimating $\widehat{a}_{\ell}$ for general $\ell \in \left\{ 0, 1, \ldots, d - 2 \right\}$, given $v \in S$. 
Recall that $u'_{0}, u'_{1}, \ldots, u'_{d-1}$ denote the nodes in $Y$ on the path from the root to $v$, with $u'_{t}$ having depth $t$. 
Let $z_{0}, z_{1}, \ldots, z_{m}$ denote $u'_{d-1-\ell}$ and its siblings in $Y$, ordered from left to right, and let 
$\mathcal{Z} := \left\{ z_{0}, z_{1}, \ldots, z_{m} \right\}$ 
(we reuse notation from above). 
Let $z^{*}_{0}, z^{*}_{1}, \ldots, z^{*}_{k'}$ denote the nodes among $\mathcal{Z}$ that have height $\ell + 1$ in $Y$, ordered from left to right, and let $\mathcal{Z}^{*} := \left\{ z^{*}_{0}, z^{*}_{1}, \ldots, z^{*}_{k'} \right\}$. Note that $u'_{d-1-\ell} \in \mathcal{Z}^{*}$ by definition; define $k^{*}$ to be the index such that $u'_{d-1-\ell} = z^{*}_{k^{*}}$. 
Also, by construction, the pre-images of all nodes in $\mathcal{Z}^{*}$ were siblings in $X$, so we must have that $k' \leq k-1$. 
For $i \in \left[ k' \right]$, let 
$\alpha_{i}$ denote the number of nodes in $Y$ that are either 
(a) in $\mathcal{Z}$ between 
$z^{*}_{i-1}$ and $z^{*}_{i}$, 
or (b) are descendants in $Y$ of such a node; see \figref{balanced} for an illustration. 
Furthermore, let $\alpha_{0}$ denote the number of nodes in $Y$ that are either 
(a) in $\mathcal{Z}$ before $z^{*}_{0}$, 
or (b) are descendants in $Y$ of such a node. 
Now for every $i \in \left\{ 0, 1, \ldots, k' \right\}$ let 
$\widehat{\alpha}_{i}$ denote the unique integer satisfying 
\begin{equation}\eqlab{alpha_i_est}
\widehat{\alpha}_{i} - 1/2 
\leq 
\frac{\alpha_{i}}{(1-q) \sum_{h = 1}^{\ell+1} k^{h}} 
<
\widehat{\alpha}_{i} + 1/2. 
\end{equation}
Finally, we again set 
\begin{equation}\eqlab{ahatell}
\widehat{a}_{\ell} := k^{*} + \sum_{i=0}^{k^{*}} \widehat{\alpha}_{i}.
\end{equation}
The estimate in \Eqref{ahatell} is thus a generalization of the special case of $\widehat{a}_{0}$ in \Eqref{ahat0}. 
(If this results in an estimate that is greater than $k-1$, then instead set $\widehat{a}_{\ell} := k-1$.)

\begin{figure*}[t!]
	\centering
	\includegraphics[angle=0,width=0.55\textwidth]{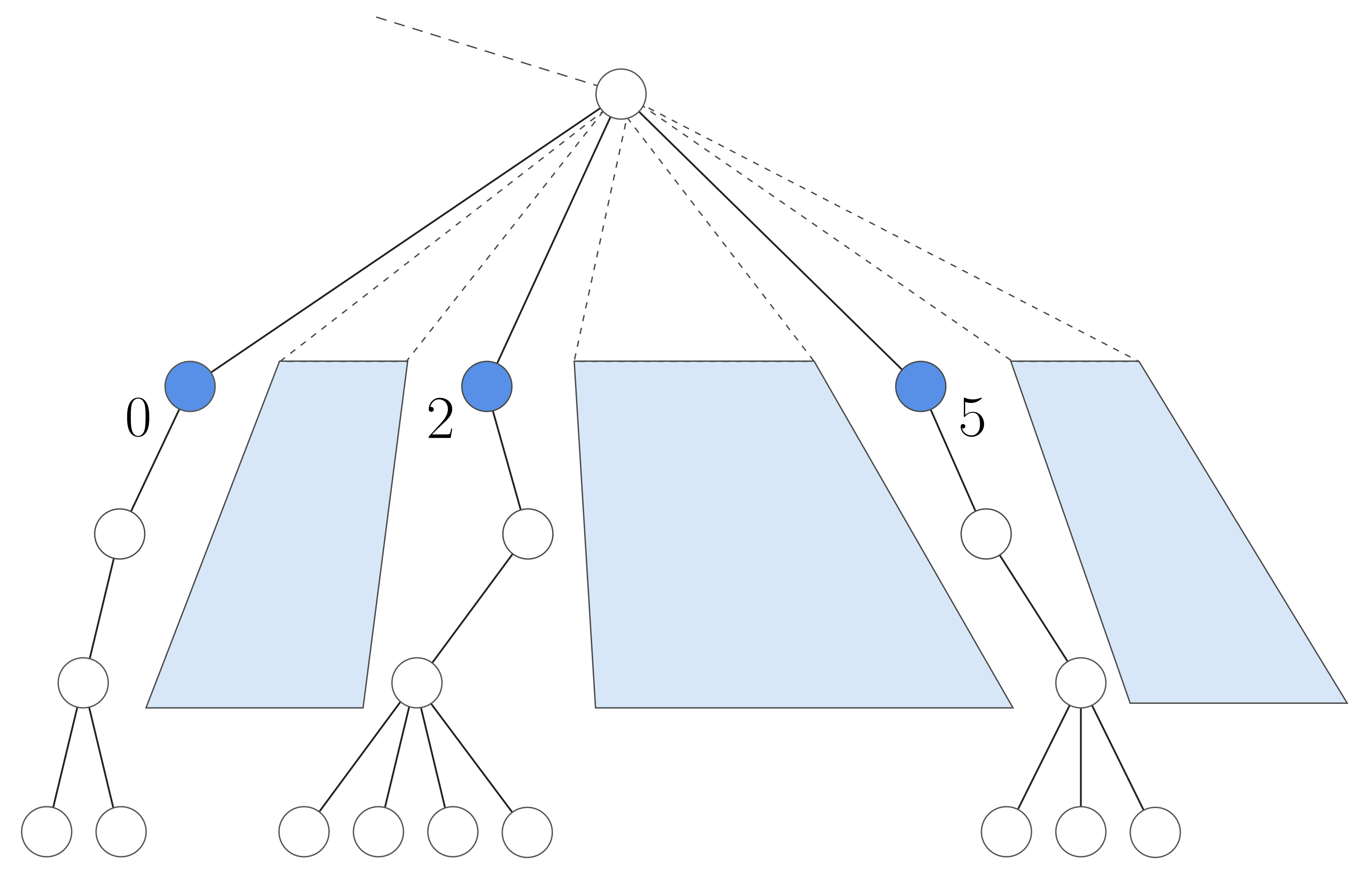}
	\caption{\textbf{Path estimation in $k$-ary trees.} We estimate, level by level, the pre-image of each root-to-leaf path in a trace. At each level, we estimate the number of nodes deleted at that level using the number of nodes in the (light blue) trapezoids in the figure (this uses that $k$ is large enough to apply concentration bounds). This then allows us to determine the positions of the surviving (dark blue) nodes, which have paths to a leaf (e.g., positions $0\ \textunderscore\ 2\  \textunderscore \ \textunderscore\  5\  \textunderscore$ above).} 
	\figlab{balanced}
\end{figure*}

This fully completes the description of the \textsc{FindPaths} algorithm. In the following lemma we analyze the performance of the \textsc{FindPaths} algorithm 
and show that its output is correct with high probability.

\begin{lemma}\lemlab{findpaths}
There exist constants $c$ and $c'$, that depend only on $q$, such that the following holds. 
Let $k \geq c \log^{2} (n)$, let $X$ be a $k$-ary tree with arbitrary binary labels, and let $Y$ be a trace sampled from the TED deletion channel. 
The \textsc{FindPaths} algorithm is fully successful---that is, for \emph{all} nodes $v \in \mathcal{S}$, the estimate $\widehat{w}$ is correct---with probability at least $1 - \exp ( - c' \sqrt{k} )$. 
\end{lemma}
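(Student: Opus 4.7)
The plan is to isolate a single high-probability ``good event'' $\mathcal{G}$ on which \textsc{FindPaths} is deterministically correct for every $v\in\mathcal{S}$, and then to control $\Pr[\mathcal{G}^{c}]$ via three ingredients: \clmref{balanced} for the $b$-balanced property, a Galton--Watson style recursion for subtree-heights, and Hoeffding's inequality for the rounding step. Set $b := \lceil 10\sqrt{k}/\log(1/q) \rceil$, so that $b = \Theta(\sqrt{k}) = \Omega(\log n)$ when $k \geq c\log^{2}(n)$. Define $\mathcal{G} := \mathcal{G}_{1}\cap\mathcal{G}_{2}\cap\mathcal{G}_{3}$, where $\mathcal{G}_{1}$ is the event that $Y$ is $b$-balanced; $\mathcal{G}_{2}$ is the event that, for every $v\in\mathcal{S}$, every $\ell\in\{0,\dots,d-2\}$, and every non-deleted sibling $y$ in $X$ of the preimage $u_{d-1-\ell}$ of $u'_{d-1-\ell}$, the image of $y$ in $Y$ has subtree-height exactly $\ell+1$; and $\mathcal{G}_{3}$ is the event that, for every triple $(v,\ell,i)$, the count $\alpha_{i}$ satisfies $\bigl|\alpha_{i}-m_{i}(1-q)\sum_{h=1}^{\ell+1}k^{h}\bigr|<\tfrac{1}{2}(1-q)\sum_{h=1}^{\ell+1}k^{h}$, where $m_{i}$ is the number of deleted siblings of $u_{d-1-\ell}$ whose positions in $X$ lie in the $i$-th gap.

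\textbf{Correctness on $\mathcal{G}$.} On $\mathcal{G}_{2}$, the set $\mathcal{Z}^{*}$ coincides with the images of the non-deleted siblings of $u_{d-1-\ell}$ (including $u'_{d-1-\ell}$ itself): every such image is in $\mathcal{Z}^{*}$ by definition of $\mathcal{G}_{2}$, and any element of $\mathcal{Z}$ that descends from a deleted sibling sits in an $X$-subtree of depth at most $\ell$, so it cannot have height $\ell+1$ in $Y$. Therefore $k^{*}$ equals the rank of $u_{d-1-\ell}$ among its non-deleted siblings, and $\alpha_{i}$ counts exactly the surviving descendants in $X$ of the $m_{i}$ deleted siblings in the $i$-th gap. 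On $\mathcal{G}_{3}$ this forces $\widehat{\alpha}_{i}=m_{i}$, and summing as in \Eqref{ahatell} yields $\widehat{a}_{\ell}=k^{*}+\sum_{i=0}^{k^{*}}m_{i}=a_{\ell}$ for every~$\ell$, hence $\widehat{w}=w$.

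\textbf{Bounding $\Pr[\mathcal{G}^{c}]$.} \clmref{balanced} gives $\Pr[\mathcal{G}_{1}^{c}]\leq nq^{b}\leq\exp(-\Omega(\sqrt{k}))$. For $\mathcal{G}_{2}$, the subtree of a non-deleted sibling $y$ in $Y$ is itself the TED trace of the complete $k$-ary subtree rooted at $y$ in $X$, so if $S_{j}$ denotes the probability that the TED trace of a depth-$j$ complete $k$-ary tree (rooted at a preserved node) has height $j$, then $S_{j}=1-(1-(1-q)S_{j-1})^{k}$ with $S_{1}=1-q^{k}$. A short induction gives $1-S_{j}\leq\exp(-\Omega(k))$ for all $j\leq d$, and a union bound over the at most $nd$ relevant siblings yields $\Pr[\mathcal{G}_{2}^{c}]\leq\exp(-\Omega(k))$. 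For $\mathcal{G}_{3}$, on $\mathcal{G}_{1}$ we have $m_{i}\leq b$, and, conditional on the sibling-deletion pattern, $\alpha_{i}$ is a sum of at most $m_{i}\sum_{h=1}^{\ell+1}k^{h}$ independent $\mathrm{Bernoulli}(1-q)$ indicators with the claimed mean. Hoeffding's inequality then gives a per-triple failure probability of at most $\exp\!\bigl(-\Omega((1-q)^{2}\sum_{h}k^{h}/b)\bigr)\leq\exp(-\Omega(\sqrt{k}))$, since $\sum_{h=1}^{\ell+1}k^{h}\geq k$ and $b=O(\sqrt{k})$. A union bound over $O(|\mathcal{S}|\cdot d\cdot k)=O(n\log n)$ triples, absorbing the polynomial factor into the exponent by taking $c$ large enough in $k\geq c\log^{2}(n)$, gives $\Pr[\mathcal{G}_{3}^{c}]\leq\exp(-\Omega(\sqrt{k}))$; combining the three bounds proves $\Pr[\mathcal{G}^{c}]\leq\exp(-\Omega(\sqrt{k}))$.

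\textbf{Main obstacle.} The most delicate step is the deterministic ``bookkeeping'' on $\mathcal{G}_{1}\cap\mathcal{G}_{2}$ that identifies $\alpha_{i}$ with the total number of surviving descendants of precisely the $m_{i}$ deleted siblings in the $i$-th gap. One must check that every surviving descendant of each deleted sibling $w'$ ends up in the correct $Y$-gap---i.e., inside the subtree of some node of $\mathcal{Z}\setminus\mathcal{Z}^{*}$ sandwiched between the images of the non-deleted neighbours of $w'$---regardless of how many of the intermediate ancestors between $w'$ and the descendant are themselves deleted and thus cause further TED promotions. Once this correspondence is pinned down, the three high-probability statements reduce to standard applications of \clmref{balanced}, the Galton--Watson recursion, and Hoeffding's inequality, as above.
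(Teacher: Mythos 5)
Your proof is correct and follows the same overall strategy as the paper's: a single good event built from the $b$-balanced property (\clmref{balanced}) with $b=\Theta(\sqrt k)$, Chernoff/Hoeffding concentration of the surviving-descendant counts for runs of at most $b$ consecutive siblings, and a union bound, with the hypothesis $k\geq c\log^{2}(n)$ used exactly where you use it (to absorb the polynomial number of events into $\exp(-\Omega(\sqrt k))$). The one substantive difference is your event $\mathcal{G}_{2}$ and the Galton--Watson recursion $S_{j}=1-(1-(1-q)S_{j-1})^{k}$: the paper's good event $\mathcal{E}$ consists only of balancedness and the concentration events $\mathcal{E}_{u_{1},\ldots,u_{m}}$, and it simply asserts that on $\mathcal{E}$ the estimates $\widehat{\alpha}_{i}$ are correct. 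For $\ell\geq 1$ that assertion tacitly requires that every surviving sibling's image has height exactly $\ell+1$ in $Y$ (so that $\mathcal{Z}^{*}$ is precisely the set of images of surviving siblings), which does not follow from descendant-count concentration alone; your $\mathcal{G}_{2}$ supplies exactly this missing justification at no cost to the final bound. The remaining point you flag as the ``main obstacle''---that surviving descendants of a deleted sibling land in the correct gap under iterated TED promotions---is left implicit in the paper as well, so your write-up is at least as complete as the original.
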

\begin{proof}
Throughout this proof we denote the complement of an event $\mathcal{E}$ by $\mathcal{E}^{c}$. 
Set 
$b := 10 \sqrt{k}/\log(1/q)$ = $\Omega(\log n)$ 
and let $\mathcal{B}_{b}$ denote the event that $Y$ is $b$-balanced. 
By \clmref{balanced} we have that 
\begin{equation}\eqlab{balanced_bound}
\mathbb{P} \left( \mathcal{B}_{b}^{c} \right) 
\leq \exp \left( - C' \sqrt{k} \right)
\end{equation}
for some constant $C'$.

Let $u$ be an internal node of $X$ and let $h_{u}$ be the height of $u$.  
Since $u$ is an internal node, we have that $1 \leq h_{u} \leq d$. 
The number of descendants of $u$ in $X$ is 
$\sum_{\ell = 1}^{h_{u}} k^{\ell}$. 
Let $R_{u}$ denote the number of descendants of $u$ in $X$ that survive in $Y$. 

Now fix $m \leq b$ and let $u_{1}, \ldots, u_{m}$ denote $m$ consecutive siblings in $X$ with height $h \in [d]$. Let $\mathcal{E}_{u_{1}, \ldots, u_{m}}$ denote the event that 
\[
\left| \sum_{i=1}^{m} R_{u_{i}} - m \left( 1 - q \right) \sum_{\ell = 1}^{h} k^{\ell} \right| 
\leq \frac{1}{3} \left( 1 - q \right) \sum_{\ell = 1}^{h} k^{\ell}.
\] 
By a standard Chernoff bound  
we have that there exist constants $c'', c''' > 0$ such that 
\begin{equation}\eqlab{descendants_Chernoff}
\mathbb{P} \left( \mathcal{E}_{u_{1}, \ldots, u_{m}}^{c} \right) 
\leq \exp \left( - c'' (1-q) \sum_{\ell = 1}^{h} k^{\ell} / m \right) 
\leq \exp \left( - c'' (1-q) k / m \right)
\leq \exp \left( - c''' \sqrt{k} \right),
\end{equation}
where in the last inequality we used that 
$m \leq b = 10 \sqrt{k} / \log(1/q)$. 

Finally, define the event 
\[
\mathcal{E} := \mathcal{B}_{b} \cap \bigcap_{m=1}^{b} \bigcap_{u_{1}, \ldots, u_{m}} \mathcal{E}_{u_{1}, \ldots, u_{m}},
\]
where the intersection is over all possible $m$ consecutive siblings $u_{1}, \ldots, u_{m}$ in $X$. 
Putting together \Eqref{balanced_bound}, \Eqref{descendants_Chernoff}, and a union bound, we have that 
\[
\mathbb{P} \left( \mathcal{E}^{c} \right) \leq \exp \left( - c' \sqrt{k} \right)
\]
for some constant $c'$. 
On the other hand, on the event $\mathcal{E}$, 
the estimates $\widehat{\alpha}_{i}$ in \Eqref{alpha_i_est} 
are correct for all $v \in \mathcal{S}$, all $\ell \in \left\{ 0, 1, \ldots, d - 2 \right\}$, and all $i \in \left\{ 0, 1, \ldots, k' \right\}$. 
This implies that for every $v \in \mathcal{S}$ and every $\ell \in \left\{ 0, 1, \ldots, d - 2 \right\}$, the estimate $\widehat{a}_{\ell}$ in \Eqref{ahatell} is correct. 
Therefore for every $v \in \mathcal{S}$ the estimate $\widehat{w}$ is also correct. 
\end{proof}

\subsubsection*{The reconstruction algorithm: estimating the labels of $G_{X}(j)$ for each $j \in \mathcal{J}_{d-1}$}

Now that we have described and analyzed the \textsc{FindPaths} algorithm (\algref{findpaths_kary_large_TED}), we turn our attention to the full reconstruction algorithm (see \algref{alg:TED_k_large}).

\begin{algorithm}[t!]
	\caption{Reconstructing $k$-ary trees, $k \geq c\log^2(n)$, TED deletion model}
	\alglab{alg:TED_k_large}
	\hspace*{\algorithmicindent} Set $T = \exp( c' \log_{k} (n)) \cdot T(k,1/n^{2})$ (for a large enough constant $c'$). \\
	\hspace*{\algorithmicindent} \textbf{Input:} traces $Y_{1}, \ldots, Y_{T}$ sampled independently from the TED deletion channel.
	\begin{algorithmic}[1] 
		\For{$j \in \mathcal{J}_{d-1}$} 
			\State Initialize $\mathcal{A}_{j} = \emptyset$. 
		\EndFor
		\For{$t = 1$ to $T$}
			\State Run \algref{findpaths_kary_large_TED} with input $Y_{t}$; let $\widehat{\mathcal{S}}_{t}$ denote the output. 
			\For{$j \in \widehat{\mathcal{S}}_{t}$} 
				\State Add $Y_{t}$ to $\mathcal{A}_{j}$.
			\EndFor
		\EndFor
		\For{$j \in \mathcal{J}_{d-1}$} 
		estimate the labels of $G_{X}(j)$ as follows:
		\If{$\mathcal{A}_{j} = \emptyset$} \Comment{This happens with vanishing probability.}
		\State Terminate the algorithm and produce no output.
		\EndIf
		\Statex \emph{To estimate the labels of $P_{X}(j)$:} 
		\State Choose an arbitrary trace $Y \in \mathcal{A}_{j}$; 
		\State Let $v$ denote the node in $Y$ which caused $Y$ to be included in $\mathcal{A}_{j}$; 
		\State Estimate labels of $P_{X}(j)$ by copying bits from the path in $Y$ that goes from the root to~$v$.
		\Statex \emph{To estimate the labels of the children of $j$:} 
		\State Initialize $\mathcal{T} = \emptyset$. 
		\For{$Y \in \mathcal{A}_j$}
		\State Let $v$ denote the node in $Y$ which caused $Y$ to be included in $\mathcal{A}_{j}$; 
		\State Form a string $Z$ by reading, from left to right, the bits of the children of $v$ in $Y$; 
		\State Add $Z$ to $\mathcal{T}$. 
		\EndFor
		\State Use a string trace reconstruction algorithm to estimate the labels of the children of $j$ from~$\mathcal{T}$.
		\EndFor
		\State \textbf{Output:} Take a union, over all $j \in \mathcal{J}_{d-1}$, of the estimated labels of $G_X(j)$, to estimate the labels of $X$ (as in \remref{SeparateReconstruction}).
	\end{algorithmic}
\end{algorithm}

Lines 1--9 of \algref{alg:TED_k_large} describe the first step of the reconstruction algorithm, where we process all the traces and group them into different sets. Formally, we define a set $\mathcal{A}_{j}$ for every $j \in \mathcal{J}_{d-1}$, which we initialize with $\mathcal{A}_{j} = \emptyset$. 
Then for every trace $Y_{t}$ in our input, we run \algref{findpaths_kary_large_TED} with input $Y_{t}$, and we let $\widehat{\mathcal{S}}_{t}$ denote the output. 
We then add $Y_{t}$ to $\mathcal{A}_{j}$ for every $j \in \widehat{\mathcal{S}}_{t}$. 

We now turn to the main step of the reconstruction algorithm, which is described in lines 10--25 of \algref{alg:TED_k_large}. 
For every $j \in \mathcal{J}_{d-1}$, 
we use the traces in $\mathcal{A}_{j}$ to estimate the labels of $G_{X}(j)$, and finally we take a union of these estimates to estimate the labels of $X$. 
The estimation of the labels of $G_{X}(j)$ is done in two parts: 
(1) the estimation of the labels of $P_{X}(j)$, 
and (2) the estimation of the labels of the children of $j$; 
see \figref{Aj-example} for an illustration. 

To estimate the labels of $P_{X}(j)$, we take an arbitrary trace $Y \in \mathcal{A}_{j}$; if $\mathcal{A}_{j} = \emptyset$, then the algorithm terminates without output. 
Let $v$ denote the node in $Y$ which caused $Y$ to be included in $\mathcal{A}_{j}$. 
Assuming that $Y$ was included in $\mathcal{A}_{j}$ for the correct reason, that is, the pre-image of $v$ is indeed $j$, then the labels of $P_{X}(j)$ are identical to the bits on the path in $Y$ that goes from the root to $v$; see \figref{Aj-example} for an illustration. 
Therefore we estimate the labels of $P_{X}(j)$ by copying the bits from the path in $Y$ that goes from the root to $v$.

Finally, we estimate the labels of the children of $j$; it turns out that this reduces to string trace reconstruction. 
Given a trace $Y \in \mathcal{A}_{j}$, let $v$ denote the node in $Y$ which caused $Y$ to be included in $\mathcal{A}_{j}$. 
Assuming that $Y$ was included in $\mathcal{A}_{j}$ for the correct reason, that is, the pre-image of $v$ is indeed $j$, then the children of $v$ in $Y$ are a random subset of the children of $j$ in $X$; see \figref{Aj-example} for an illustration. 
Thus if we restrict our attention to the bits on the children of $j$ in $X$, 
the children of $v$ in the trace $Y$ are as if the original bits were passed through the string deletion channel; see \figref{Aj-example} again for an illustration. 
This motivates collecting a string trace from each $Y \in \mathcal{A}_{j}$, by looking at the children of the appropriate vertex $v$; we let $\mathcal{T}$ denote this collection of string traces. 
Finally, we use a string trace reconstruction algorithm to reconstruct the bits on the children of $j$ in $X$ from $\mathcal{T}$.

\begin{figure*}[t!]
	\centering
	\includegraphics[angle=0,width=0.7\textwidth]{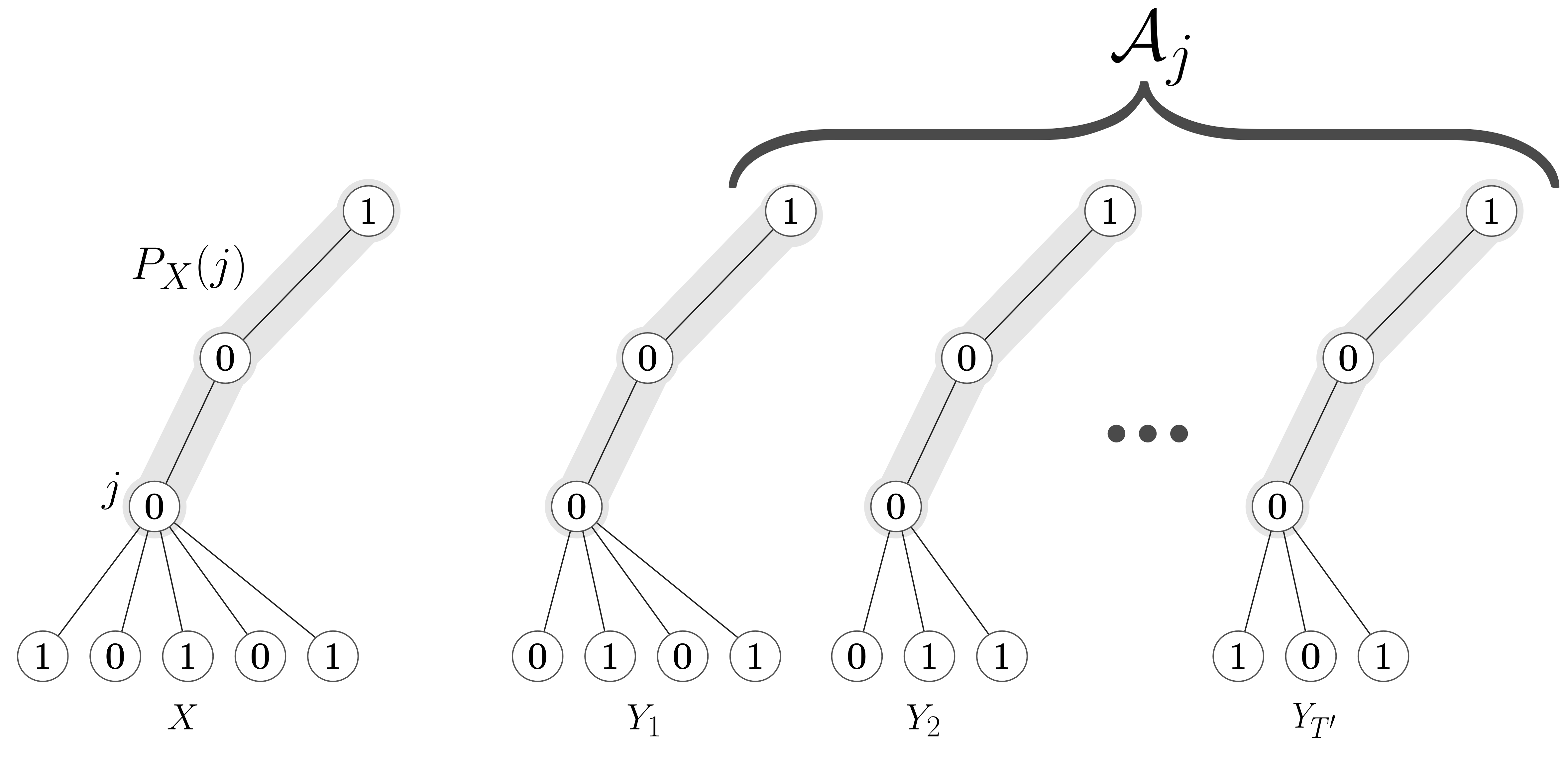}
	\caption{An example set of traces $\mathcal{A}_j$ appearing in \algref{alg:TED_k_large}. The two salient points are that, with high probability: (1)  the bits on the highlighted paths are all the same as the original bits in $P_{X}(j)$, and (2) the restriction to leaves corresponds to string trace reconstruction.} 
	\figlab{Aj-example}
\end{figure*}

Now that we have fully described the reconstruction algorithm, we are ready to prove that it correctly reconstructs the labels of $X$ with high probability. 
The following lemma is an important step towards this. 

\begin{lemma}\lemlab{alg_analysis}
There exist finite positive constants $c'$ and $c''$ such that the following holds. 
Let $T = \exp( c' \log_{k} (n)) \cdot T(k,1/n^{2})$ 
and let $Y_{1}, \ldots, Y_{T}$ be i.i.d. traces from the TED deletion channel. 
With probability at least $1 - \exp( - c'' \sqrt{k})$ the following hold: 
\begin{enumerate}
\item The \textsc{FindPaths} algorithm is fully correct for all traces $Y_{1}, \ldots, Y_{T}$. 
\item For every $j \in \mathcal{J}_{d-1}$ we have that $\left| \mathcal{A}_{j} \right| \geq T(k,1/n^{2})$.
\end{enumerate}
\end{lemma}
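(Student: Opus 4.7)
The plan is to prove the two claims separately and then combine them by a union bound, since both holding is what the lemma asserts.

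For claim~(1), I would apply Lemma~\lemref{findpaths} to each of the $T$ traces and take a union bound. Since the per-trace failure probability is $\exp(-c_1 \sqrt{k})$ for some $c_1 = c_1(q) > 0$, the total failure probability is at most $T \cdot \exp(-c_1 \sqrt{k})$. The key calculation is that $\log T$ is negligible compared to $\sqrt{k}$: by Theorem~\thmref{StringMain}, $T = \exp(c' \log_k n) \cdot T(k, 1/n^2) \leq \exp\pth{O\pth{\log_k n + k^{1/3}}}$, and under the hypothesis $k \geq c \log^2 n$, both $\log_k n = O(\log n / \log\log n)$ and $k^{1/3}$ are $o(\sqrt{k})$. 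Hence the union-bound failure probability for (1) is $\exp(-\Omega(\sqrt{k}))$.

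For claim~(2), fix $j \in \mathcal{J}_{d-1}$ and consider a single trace $Y_t$. The key observation is that, conditional on FindPaths being fully correct on $Y_t$, the event $\{j \in \widehat{\mathcal{S}}_t\}$ is equivalent to two events in $Y_t$: (a) all $d-1$ non-root nodes on the root-to-$j$ path survive, and (b) at least one of $j$'s $k$ children survives. This uses that in the TED model a deletion only moves descendants up, and the root is never deleted, so any leaf at depth $d$ in $Y_t$ whose parent is the image of $j$ must come from the root-to-$j$ path surviving in its entirety. These two events are independent with joint probability $p := (1-q)^{d-1}(1-q^k)$. Combining with the FindPaths guarantee via $\mathbb{P}(Y_t \in \mathcal{A}_j) \geq p - \exp(-c_1\sqrt{k})$, I would argue $\mathbb{P}(Y_t \in \mathcal{A}_j) \geq p/2$, since $p \geq n^{-O(1/\log \log n)}$ for $k \geq c \log^2 n$, whereas $\exp(-c_1\sqrt{k})$ is at most a small polynomial in $1/n$.

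Since the $T$ traces are independent, $|\mathcal{A}_j|$ is a sum of $T$ i.i.d.~Bernoulli variables with mean at least $p/2$. Choosing $c'$ large enough as a function of $q$ makes $Tp/2$ a constant factor larger than, say, $4 T(k,1/n^2)$, so a standard Chernoff bound gives $\mathbb{P}(|\mathcal{A}_j| < T(k, 1/n^2)) \leq \exp(-\Theta(T(k, 1/n^2))) \leq n^{-\Theta(1)}$ using the trivial lower bound $T(k, 1/n^2) = \Omega(\log n)$. A union bound over $|\mathcal{J}_{d-1}| \leq n$ values of $j$ keeps the failure at $n^{-\Theta(1)}$, which is $\leq \exp(-c''\sqrt{k})$ once $c''$ is chosen small enough, given that $\sqrt{k} \geq \sqrt{c}\log n$. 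The main obstacle is calibrating the three constants consistently: $c$ must be large enough that the error terms in part~(1) are $o(\sqrt{k})$; $c'$ must be large enough in terms of $q$ so that $Tp$ exceeds $T(k,1/n^2)$ by a comfortable margin; and $c''$ must then be small enough to absorb both the Chernoff tail of part~(2) and the union bound factor. Adding the two failure probabilities gives the stated bound $\exp(-c''\sqrt{k})$.
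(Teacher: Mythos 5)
Your overall strategy is exactly the paper's: claim (1) is \lemref{findpaths} plus a union bound over the $T = \exp(O(\log_k n + k^{1/3}))$ traces, and claim (2) is a survival-probability estimate for the root-to-$j$ path followed by a Chernoff bound and a union bound over $j \in \mathcal{J}_{d-1}$. In fact you are more careful than the paper in two places it elides: you include the factor $1-q^{k}$ for at least one child of $j$ surviving (needed for a depth-$d$ leaf to exist in the trace), and you condition on \textsc{FindPaths} being correct so that the trace lands in $\mathcal{A}_j$ rather than some $\mathcal{A}_{j'}$.

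There is, however, one step at the end that fails as written. You bound the claim-(2) failure probability by $n^{-\Theta(1)}$ and then assert $n^{-\Theta(1)} \leq \exp(-c''\sqrt{k})$ ``given that $\sqrt{k} \geq \sqrt{c}\log n$.'' That inequality points the wrong way: $\sqrt{k} \geq \sqrt{c}\log n$ gives $\exp(-c''\sqrt{k}) \leq n^{-c''\sqrt{c}}$, so a bound of the form $n^{-\Theta(1)}$ only suffices at the boundary $k = \Theta(\log^2 n)$; for $k \gg \log^2 n$ (the theorem allows $k$ up to nearly $n$) the target $\exp(-c''\sqrt{k})$ is far smaller than any fixed polynomial in $1/n$, and your chain of bounds does not reach it. The fix is to not throw away the strength of the Chernoff estimate: the per-$j$ failure is $\exp(-\Theta(Tp)) \leq \exp(-\Theta(T(k,1/n^2)))$, and since $T(k,1/n^2) = \widetilde{\Omega}(k^{3/2})$ (or any lower bound of order $\sqrt{k} + \log n$ with a large enough constant), the total failure $n\exp(-\Theta(T(k,1/n^2)))$ is indeed at most $\exp(-c''\sqrt{k})$ for all admissible $k$. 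With that substitution your argument is complete and matches the paper's.
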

\begin{proof}
The first claim follows from \lemref{findpaths} and a union bound, using the fact that 
$T = \exp \left( O( \log_{k}(n) + k^{1/3} ) \right)$. 
For each $j \in \mathcal{J}_{d-1}$, the path $P_{X}(j)$ consists of $d-1$ non-root nodes and hence it survives in a trace with probability $(1-q)^{d-1}$. 
Since $T \geq 2 (1-q)^{-(d-1)} T(k,1/n^{2})$, the second claim follows from a standard Chernoff bound. 
\end{proof}

\subsubsection*{Finishing the proof of \thmref{ted-large}}

\begin{proof}[Proof of \thmref{ted-large}] 
The reconstruction algorithm is described in \algref{alg:TED_k_large}, with a subroutine described in \algref{findpaths_kary_large_TED}. 
Let $\mathcal{E}$ be the event that 
(1) the \textsc{FindPaths} algorithm is fully correct for all traces $Y_{1}, \ldots, Y_{T}$, 
and 
(2) for every $j \in \mathcal{J}_{d-1}$ we have that $\left| \mathcal{A}_{j} \right| \geq T(k,1/n^{2})$. 
By \lemref{alg_analysis} we have that $\mathbb{P} \left( \mathcal{E} \right) \geq 1 - \exp( - c'' \sqrt{k})$. 

Conditioned on the event $\mathcal{E}$, the reconstruction algorithm correctly reconstructs the labels of $P_{X}(j)$ for every $j \in \mathcal{J}_{d-1}$ (see lines 14--17 of \algref{alg:TED_k_large}); in other words, the reconstruction algorithm correctly reconstructs the labels of all internal nodes of $X$. 

We next turn to the leaves of $X$. Conditioned on the event~$\mathcal{E}$ we have that 
$\left| \mathcal{A}_{j} \right| \geq T(k,1/n^{2})$, so using string trace reconstruction we can correctly reconstruct the labels of all children of $j$ with probability at least $1-1/n^{2}$. Since there are at most $n$ nodes in $\mathcal{J}_{d-1}$, a union bound shows that, conditioned on the event~$\mathcal{E}$, we can correctly reconstruct the labels of all leaves of $X$ with probability at least $1-1/n$. 

Overall, the error probability in reconstructing the labels of $X$ is at most $\exp( - c'' \sqrt{k}) + 2/n$. 
\end{proof}

\subsection{Proof of \thmref{ted-small} Concerning Arbitrary Degree Trees}
Recall the definition for $\cal{I}$ defined in the first paragraph of \secref{tree-prelims}: 
$\calI := \bigcup_{t = 1}^{d-1} \calI_{t}$, where 
 $\calI_1 := \calJ_1 = \{0,1,\ldots, k-1\}$ and for $t \geq 2$,
$\calI_t := \{i \in \calJ_t \mid i\ \mathrm{mod}\ k \neq 0\}$, 
and $\calJ_t$ is the set of the nodes at depth~$t$.
We use traces that have a strong underlying structure, which we call $s$-stable; see \figref{stable} for an illustration.

\begin{definition}\deflab{stable}  
	A trace $Y$ is {\em $s$-stable for $i \in \calI$} if $G_Y(i) \neq \perp$, and for every internal node $v$ in $G_Y(i)$ with height $h \leq s$ in $Y$, each of the $k$ children of $v$ has height exactly $h-1$ in $Y$.
\end{definition}

\begin{figure*}[h!]
	\centering
	\includegraphics[angle=0,width=0.4\textwidth]{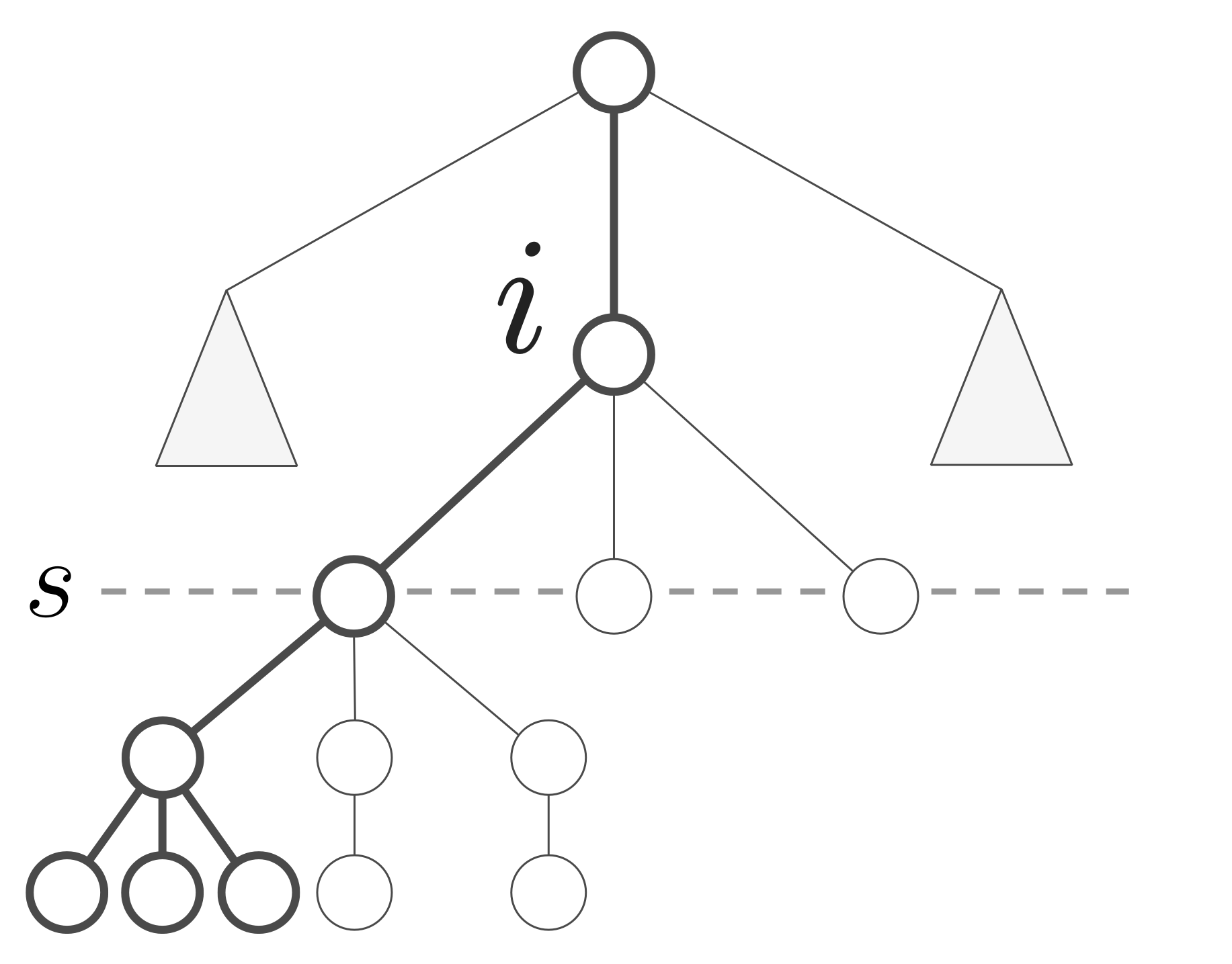}
	\caption{An example of a trace that is $s$-stable for $i$ (where here $s=3$).} 
	\figlab{stable}
\end{figure*}

\algref{kary_arbitrary_TED} below states, in pseudocode, our reconstruction algorithm for proving \thmref{ted-small}. 
At a high level, we will recover the labels for $G_X(i)$ separately for each $i \in \calI$, which is sufficient because these subtrees cover all of the non-root nodes in $X$. 
\begin{algorithm}
	\caption{Reconstructing $k$-ary trees, arbitrary $k$, TED deletion model}
	\alglab{kary_arbitrary_TED}
	\hspace*{\algorithmicindent} Set $s = \lceil  \log_k \log_{1/q} (3dk)\rceil $ and 
		$T = C \log(n) \cdot (1-q)^{-(dk+s^2k)}$ (for a large enough $C$). \\
	\hspace*{\algorithmicindent} \textbf{Input:} traces $Y_{1}, \ldots, Y_{T}$ sampled independently from the TED deletion channel.
	\begin{algorithmic}[1] 
		\State Set  $\mathcal{A} = \left\{ Y_{1}, \ldots, Y_{T} \right\}$.
		\For{$i \in \mathcal{I}$}
		\State Initialize $\mathcal{A}_i = \emptyset$.  
		\For{$t = 1$ to $T$}
		\If{$Y_t$ is $s$-stable for $i$}
		add $Y_t$ to $\mathcal{A}_i$.
		\EndIf	
		\EndFor
		\For{node $b$ in $G_X(i)$}
		\State Let the learned label of $b$ be the majority vote over all $Y \in  \mathcal{A}_i$ of the labels on node $b$ in $G_Y(i)$, as in \lemref{stable-prob}.
		\EndFor
		\EndFor
		\State \textbf{Output:} Union the learned labels of $G_X(i)$ over all $i \in \mathcal{I}$ to reconstruct labels of $X$, as described in \remref{SeparateReconstruction}.
	\end{algorithmic}
\end{algorithm}

The challenge is that, in the TED deletion model,  $G_X(i)$ may shift to an incorrect position, even when $G_Y(i) \neq \perp$. This happens, for example, when the parent of $i$ has children deleted in such a way that $i$ moves to the left or right, but $i$ still has $k-1$ siblings (some of which are new); see \figref{GXplus-with-traces} for an illustration.
The intuition for overcoming this issue is as follows. Let $u$ be a node in $G_X(i)$ with child $u'$ that is not a leaf (so $u$ and $u'$ both originally have $k$ children). If $u$ and all of its $k$ children survive in a trace, then we will be in good shape. However, consider the situation when $u$ survives and $u'$ is deleted. In the TED model, we expect $(1-q)k$ children of $u'$ to move up to become children of $u$. Since this occurs for every deleted child of $u$, we expect $u$ to now have many more than $k$ children. 

The bad case is when $u$ has exactly $k$ children in a trace after some of its original children are deleted; see \figref{GXplus-with-traces} for an illustration. This only happens when subtrees rooted at children of $u$ are completely deleted. If such a subtree is large (that is, $u$ is higher up in the tree), then this is extremely unlikely.
To deal with the nodes $u$ closer to the leaves, we use the $s$-stable property to force the relevant subtrees to survive. 

An obvious way for $Y$ to be $s$-stable is for it to contain $G_X(i)$ and enough relevant descendants of nodes in $G_X(i)$. Let $G^+_X(i)$ be the union of $G_X(i)$ and the $k$ children of every internal node in~$G_X(i)$; see \figref{GXplus-example} for an illustration. Then $Y$ will be $s$-stable if it contains $G_X^+(i)$ and at least one path to a leaf (in $X$) from every node in $G^+_X(i)$ with height at most $s$. In \lemref{stable}, we even argue that this happens with high enough probability to achieve the bound in the theorem.

Unfortunately, we cannot directly check whether $Y$ contains the exact nodes in $G^+_X(i)$. We can check if $Y$ is $s$-stable for $i$ by examining the nodes of $G_Y(i)$ and their descendants in~$Y$.  But if $Y$ is $s$-stable, then it is still not necessarily the case that $G_Y(i) = G_X(i)$, since the nodes in $G_X(i)$ may have shifted in $Y$ or been deleted.

To get around this complication, we rely on the $s$-stable property of a trace. We argue in \lemref{stable-prob} that if $s$ is large enough and a trace $Y$ is $s$-stable for $i$, then with probability at least~2/3, we have $G_Y(i) = G_X(i)$. 
We take a majority vote of $G_Y(i)$ over $O(\log n)$ traces $Y$ to recover $G_X(i)$ with high probability. Since the subtrees $G_X(i)$ for $i \in \calI$ cover $X$, we will be done.

\begin{figure*}[t!]
	\centering
	\includegraphics[angle=0,width=0.95\textwidth]{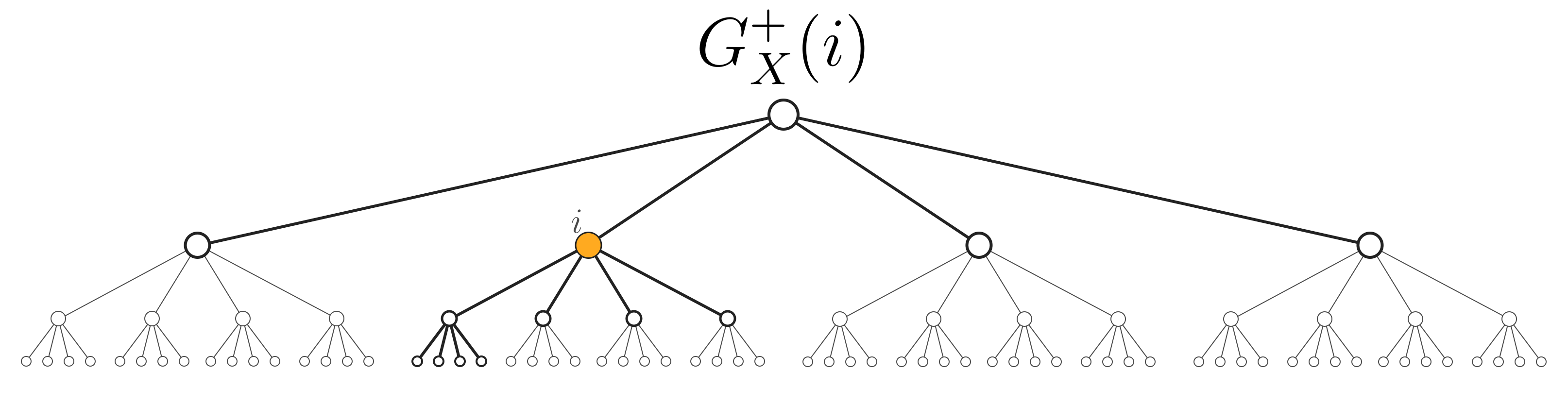}
	\caption{Example of $G^+_X(i)$, where the node $i$ is orange, and the full subtree is bold.} 
	\figlab{GXplus-example}
\end{figure*}

\begin{figure*}[t!]
	\centering
	\includegraphics[angle=0,width=0.95\textwidth]{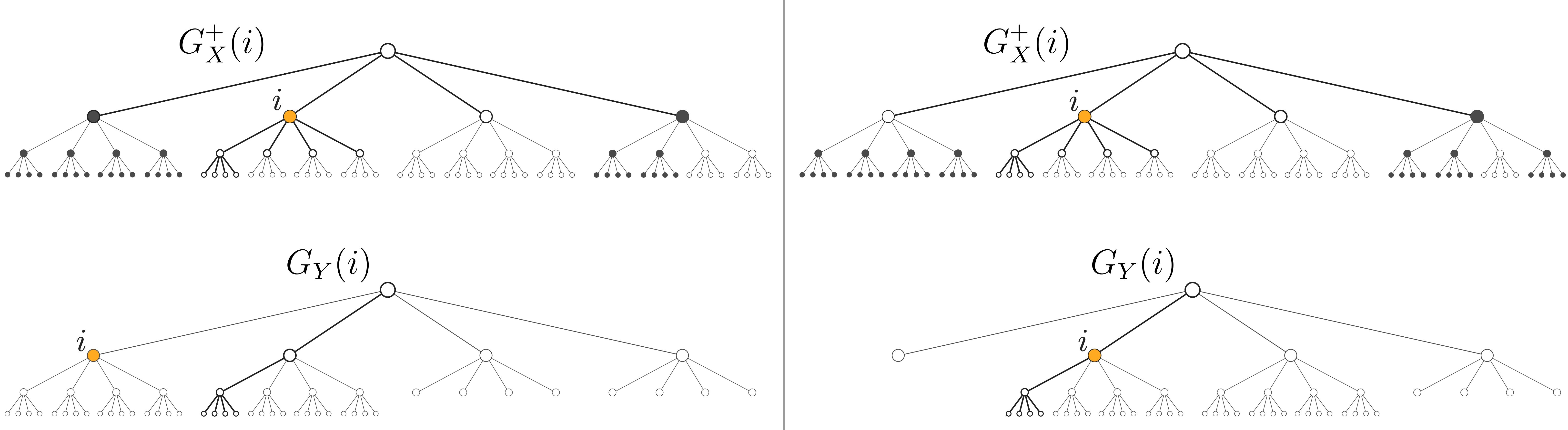}
	\caption{\textbf{Two traces, one ``bad'' and one ``good''.} In the top two trees, gray nodes in $X$ are deleted to produce the corresponding traces below. The trace on the left has the subtree rooted at~$i$ in the incorrect location (it moved over to the left). The trace on the right has the subtree in the correct location.} 
	\figlab{GXplus-with-traces}
\end{figure*}

\subsubsection*{Analyzing and using stable traces}

\newcommand{\sbound}{\left \lceil \log_k \log_{1/q}(3dk)\right \rceil}
In what follows, we fix $s = \sbound$. We first show that a trace is $s$-stable with good enough probability.

\begin{lemma}\lemlab{stable} For $i \in \calI$, a trace is $s$-stable for $i$ with probability at least $(1-q)^{dk + s^2k}$.
\end{lemma}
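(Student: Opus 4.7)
The plan is to exhibit a sufficient event $\calE$ under which $Y$ is $s$-stable for $i$, and then to lower-bound $\Pr(\calE)$ by counting the number of deletion-channel coin flips that must land the right way. Specifically, I will take $\calE$ to be the intersection of two events: (a) every non-root node of $G^+_X(i)$ survives in $Y$, and (b) for every internal node $u$ of $G_X(i)$ lying at depth $t \in \{d-s, d-s+1, \ldots, d-2\}$ and every non-path child $c$ of $u$ (i.e., a child of $u$ in $X$ which is not in $G_X(i)$), all nodes on the left-only path from $c$ down to its leftmost leaf descendant in $X$ also survive in $Y$.

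Next I would verify that $\calE$ implies $s$-stability. Under (a), no child of any internal $u \in G_X(i)$ is deleted, so no TED shift can promote a grandchild to become a child of $u$; hence the multiset of $u$'s children in $Y$ equals the multiset of $u$'s $k$ children in $X$. Walking down from the root and choosing position $\pi_i(t-1)$ at each step therefore follows $G_X(i)$ exactly, so $G_Y(i) = G_X(i) \neq \perp$. A quick induction on height then shows that each internal $u \in G_X(i)$ at depth $t$ has height exactly $d-t$ in $Y$: at depth $d-1$ the $k$ surviving broom leaves give height $1$, and at higher levels the path-child already contributes height $d-t-1$ while TED deletions cannot increase $u$'s height above its $X$-value. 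Consequently the internal nodes of $G_Y(i)$ with height at most $s$ in $Y$ are exactly those at depths $t \in \{d-s, \ldots, d-1\}$, and the stability condition reduces to showing every child of such a $u$ has height exactly $d-t-1$. For the path child this is the induction; for non-path children at $t=d-1$ they are surviving broom leaves of height $0$; and for non-path children $c$ at $t \in \{d-s, \ldots, d-2\}$, condition (b) provides a surviving length-$(d-t-1)$ path from $c$, so $c$'s height in $Y$ is at least $d-t-1$, and it is at most $d-t-1$ because that already equals $c$'s height in $X$.

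Finally I would count. The non-root nodes of $G^+_X(i)$ number $(d+k-1) + (d-1)(k-1) = dk$, a direct count after noting that $G_X(i)$ has $d+k$ nodes (one of them the root) and $G^+_X(i) \setminus G_X(i)$ consists of the $k-1$ non-path children at each of the $d-1$ internal depths $0, \ldots, d-2$. The extra nodes demanded by (b) sit in subtrees rooted at the non-path children $c$, which are pairwise disjoint from one another and from $G^+_X(i)$; the cost at each depth $t$ is $(k-1)(d-t-1)$, and summing $t$ from $d-s$ to $d-2$ gives $(k-1)\, s(s-1)/2 \le s^2 k$. By independence of deletions,
\[
\Pr(\calE) \;\ge\; (1-q)^{\,dk \,+\, (k-1)s(s-1)/2} \;\ge\; (1-q)^{\,dk + s^2 k},
\]
which is the claimed bound.

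I expect the main obstacle to be the verification that condition (a) alone already pins down $G_Y(i)$ to equal $G_X(i)$ in the face of TED-shifts happening elsewhere in the tree; this rests on the observation that the only way an internal $u \in G_X(i)$ can acquire a spurious extra child in $Y$ is through the deletion of one of its $X$-children, which (a) rules out. Once that point is in place, the height calculation and the counting are routine.
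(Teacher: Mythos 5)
Your proof is correct and takes essentially the same approach as the paper's: force survival of $G^+_X(i)$ to pin down $G_Y(i) = G_X(i)$, and additionally require surviving paths below the low-height non-path children to guarantee the height condition; the paper also uses one path of length $h$ per non-path child at each of the bottom $s$ levels. Your accounting is marginally tighter — you exclude the non-path children themselves (already in $G^+_X(i)$) and skip the automatic depth-$(d-1)$ level — but both give the stated bound.
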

\begin{proof}
	Being $s$-stable has two conditions. First, we need $G_Y(i) \neq \perp$. Let $G^+_X(i)$ be the union of $G_X(i)$ and the $k$ children of every internal node in $G_X(i)$, where $|G^+_X(i)| = dk+1$. We will prove that if $Y$ contains $G^+_X(i)$, then $G_Y(i) \neq \perp$, because in fact, $G_Y(i) = G_X(i)$. Since the root is never deleted, all nodes in $G^+_X(i)$ survive in a trace with probability $(1-q)^{dk}$, and so $G_Y(i) = G_X(i)$ with at least this probability.
	
	Assume that $Y$ contains $G^+_X(i)$. Let $G_X(i) = u_0,\ldots, u_{d+k-1}$, and consider building $G_Y(i) = v_0,\ldots, v_{d+k-1}$ using $\pi_i$. We argue recursively: For $t \in [d-1]$, we assume that $v_{t'} = u_{t'}$ for all $t' < t$, and we prove that $v_{t} = u_{t}$ as well. The base case $t' = 0$ holds because the root $v_0 = u_0$ is never deleted. Then, since $Y$ contains $G^+_X(i)$, we know that $v_{t'} = u_{t'}$ has exactly $k$ children in $Y$, which are the children of $u_{t'}$ in $X$. Moreover, the left-to-right order of these $k$ children is preserved in the deletion model. Therefore, the child of $v_{t'}$ in position $\pi_i(t')$ must indeed be $u_{t'+1}$ for all $t' < t$. 
	This establishes $v_t = u_t$ for all $t \in \{0,1,\ldots, d-1\}$.
	For the leaves of $G_X(i)$, when $v_{d-1} = u_{d-1}$, and $v_{d-1}$ has $k$ children in $Y$, then we must also have $v_{d},\ldots, v_{d+k-1} = u_{d},\ldots, u_{d+k-1}$.
	
	For the second condition of $s$-stability, consider an internal node $u_t$ in $G_X(i)$ with height $h = d-t$ satisfying $1 \leq h \leq s$. Let $u'_0,\ldots, u'_{k-1}$ be the children of $u_t$ in~$X$. Because $u'_j$ has height $h-1$ in $X$, there is some path with $h$ nodes from $u'_{j}$ to a leaf in $X$. Consider one such path for each $j = 0,\ldots, k-1$ such that $j \neq \pi_i(t)$. Since there are $k-1$ choices for $j$, let $P_t$ be the union of these $k-1$ paths, where $|P_t| = h(k-1) \leq s(k-1)$. The survival of $P_t$ guarantees that $u'_j$ has the correct height for $Y$ to be $s$-stable. Since $|\bigcup_{t=d-s}^{d-1} P_t | \leq s^2(k-1)$, and each node survives independently with probability $(1-q)$, we have that $P_{d-s},\ldots, P_{d-1}$ survive with probability at least $(1-q)^{s^2(k-1)}$.

	Combining these two conditions, $Y$ is $s$-stable with probability at least $(1-q)^{dk + s^2k}$.
\end{proof}

We now formalize the intuition that if all nodes in $G_Y(i)$ have $k$ children, and the parents are high enough in the tree, then the children are probably correct. The reason is that subtrees rooted at their children are unlikely to be completely deleted. This is the only bad case, since otherwise, we expect deleted nodes to cause their parents to have many more than $k$ children. Finally, since the trace is $s$-stable, the nodes near the leaves will be correct as well.

\begin{lemma}\lemlab{stable-prob} For $i \in \calI$, if $Y$ is a random $s$-stable trace for $i$, then $G_Y(i) = G_X(i)$ with probability at least $2/3$.
\end{lemma}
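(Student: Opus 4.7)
The plan is to establish $\mathbb{P}(G_Y(i) \ne G_X(i) \mid Y \text{ is } s\text{-stable for } i) \le 1/3$. Let $u_0, u_1, \ldots, u_{d-1}$ and $v_0, v_1, \ldots, v_{d-1}$ denote the internal nodes (in depth order) of $G_X(i)$ and $G_Y(i)$, with $u_0 = v_0$ the root. By the recursive construction of $G_Y(i)$ via $\pi_i$, an induction on $t$ reduces $G_Y(i) = G_X(i)$ to the claim that, for each $t \in \{1,\ldots,d-1\}$, if $v_{t-1} = u_{t-1}$ then the $\pi_i(t-1)$-th child of $u_{t-1}$ in $Y$ equals $u_t$. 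Let $B_t$ denote the first-level failure of this step; it suffices to show $\sum_{t=1}^{d-1} \mathbb{P}(B_t \mid s\text{-stable}) \le 1/3$ by splitting the analysis on $h := d - (t-1)$, the height of $u_{t-1}$ in $X$.

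For the low regime $h \le s$, the plan is to show $\mathbb{P}(B_t \mid s\text{-stable}) = 0$. The key observation is that under $s$-stability, $v_{t-1}$'s height in $Y$ is exactly $h$: it is at least $h$ because $v_{t-1}, v_t, \ldots, v_{d-1}$ together with a leaf of $G_Y(i)$ at depth $d$ form a length-$h$ path in $Y$, and it is at most $h$ because TED deletions cannot increase heights relative to $X$. Since $h \le s$, the $s$-stability hypothesis forces each of the $k$ children of $v_{t-1}$ in $Y$ to have height exactly $h-1$. However, any promoted grandchild of $u_{t-1}$ has an original $X$-subtree of depth only $h-2$, hence height at most $h-2 < h-1$ in $Y$, and so cannot appear among the children of $v_{t-1}$. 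Consequently the $k$ children of $v_{t-1}$ in $Y$ lie entirely among the surviving children of $u_{t-1}$ in $X$; since $u_{t-1}$ has only $k$ children in $X$, all of them must survive, so the left-to-right orderings coincide and $v_t = u_t$.

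For the high regime $h > s$, the failure $B_t$ requires a TED deletion-and-compensation pattern on the $k$ children of $u_{t-1}$ in $X$: either $u_t$ is deleted (and its deletion frontier, which must have size exactly $1$ to preserve the $k$-child count forced by $s$-stability, fills the slot), or some child to the left of $u_t$ is deleted and its frontier has size $\neq 1$, shifting positions. Both scenarios force the complete deletion of some subtree of height $h-1 \ge s$, which has size at least $k^{h-1} \ge k^s \ge \log_{1/q}(3dk)$ by the choice of $s = \lceil \log_k \log_{1/q}(3dk) \rceil$. The probability of such a collapse is at most $q^{\log_{1/q}(3dk)} = 1/(3dk)$, and the conditioning on $s$-stability, which concerns essentially disjoint subtrees of $G_Y(i)$, can be absorbed without affecting the leading term. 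A union bound over the $d-1$ levels then yields $\sum_t \mathbb{P}(B_t \mid s\text{-stable}) \le (d-1)/(3dk) \le 1/3$, completing the proof. The main technical obstacle is a careful combinatorial enumeration of all bad deletion-compensation patterns in the high regime and showing that each indeed forces a subtree collapse of size at least $\log_{1/q}(3dk)$, together with handling the conditioning on $s$-stability cleanly.
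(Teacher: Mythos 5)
Your two-regime split (near-leaf levels handled by $s$-stability with probability-$0$ failure, far-from-leaf levels handled by a probability bound on complete subtree deletion) is the same strategy as the paper's; the paper packages the far-from-leaf argument as the event $\calE_t$ = ``every one of the $k$ subtrees rooted at children of $u_{t-1}$ has a surviving node,'' which makes the induction clean. Your low-regime argument is correct and essentially identical to the paper's.

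There is a genuine flaw in your high-regime enumeration. You assert that \emph{both} failure scenarios ``force the complete deletion of some subtree of height $h-1 \ge s$.'' That is false for your first scenario: if $u_t$ is deleted and its frontier in $Y$ has size exactly $1$ while every other subtree contributes exactly one child, no subtree rooted at a child of $u_{t-1}$ is completely deleted. What actually rules this case out is different --- the frontier node has height at most $h-2$ in $Y$ (deletions only decrease heights and the frontier sits strictly below $u_t$), so the recursive construction of $G_Y(i)$ cannot reach depth $d$ and $G_Y(i) = \perp$, contradicting $s$-stability directly. This is exactly why the paper separates the analysis into (i) $\calE_t$ (no complete subtree collapse) giving a probability bound, and (ii) a deterministic argument that, given $\calE_t$ and $s$-stability, $u_t$ itself must survive because otherwise $G_Y(i)=\perp$. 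Your ``careful combinatorial enumeration'' would need to incorporate this second mechanism, and the scenarios you list do not reduce to a single clean ``some subtree is completely deleted'' event. A secondary slip: $\Pr[B_t]$ should be bounded by $k\cdot q^{k^s} \le k/(3dk) = 1/(3d)$ (union over the $k$ children), not $1/(3dk)$; the missing factor of $k$ is harmless since $(d-1)/(3d)<1/3$ still closes the argument, but as written the bound is not justified. Finally, ``the conditioning on $s$-stability \ldots can be absorbed'' is waved through; the paper also treats this lightly, but the clean way to say it is that both $\calE_{\le d-s}$ and $s$-stability are increasing events in the survival indicators, so FKG gives $\Pr[\calE_{\le d-s} \mid s\text{-stable}] \ge \Pr[\calE_{\le d-s}]$.
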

\begin{proof}	
	Since $Y$ is $s$-stable, $G_Y(i) \neq \perp$. 
	Let $G_Y(i) = v_0,\ldots, v_{d+k-1}$ and $G_X(i) = u_0,\ldots, u_{d+k-1}$, where $v_t$ and $u_t$ have depth~$t \in \{0,1,\ldots, d-1\}$, and $v_{d-1}$ and $u_{d-1}$ have children $v_{d}, \ldots, v_{d+k-1}$ and $u_{d}, \ldots, u_{d+k-1}$, respectively. 
	Our strategy is to define an event $\calE$ that happens with probability at least $2/3$ and implies that $v_t = u_t$ for $t \leq d+k-1$. Consider $t \in [d]$, and let $u'_0,\ldots, u'_{k-1}$ be the children of $u_{t-1}$ in $X$. Define $\calE_t$ to be the event that, for every $j \in \{0,1,\ldots, k-1\}$, at least one node in the subtree rooted at $u'_j$ survives in $Y$. Then, define $\calE_{\leq m}  = \bigcap_{t = 1}^{m} \calE_t$ and set $\calE = \calE_{\leq d}$.

	We first argue that when $\calE_{\leq m}$ holds, then $v_t = u_t$ for all $t \leq m$.  Because the root has not been deleted, we have $v_0 = u_0$. Then, for $t \in [m]$, we assume that $v_{t'} = u_{t'}$ for $t' < t$, and we prove that $v_t = u_t$.    
 
	Because $Y$ is $s$-stable, $v_{t-1}$ has $k$ children in $Y$. Denote them $v'_0,\ldots, v'_{k-1}$. We need to show that~$u_{t}$ is in position $\pi_i(t-1)$ among them, so that $v_t = v'_{\pi_i(t-1)} = u_t$. Since~$\calE_{t}$ holds, there is some surviving node in $Y$ from the subtree rooted at each original child of $u_{t-1}$ in $X$. Moreover, since $u_{t-1} = v_{t-1}$, this accounts for at least $k$ children of $v_{t-1}$ in $Y$. Because there are exactly $k$ children of $v_{t-1}$, it must be the case that $v'_{\pi_i(t-1)}$ is originally from the subtree rooted at $u_{t}$ in $X$. In particular, $v'_{\pi_i(t-1)} = u_t$ if and only if $u_t$ survives in $Y$. 
	 
	We claim that if $u_t$ were deleted, then it would contradict $Y$ being $s$-stable, since we would have $G_Y(i) = \perp$ instead. Indeed, the deletion of $u_t$ would cause $v'_{\pi_i(t-1)}$ to have height less than $d-t$ in $Y$. This would imply that at some depth $d'$ with $t < d' < d$, the node $v_{d'}$ in $G_Y(i)$ would be a leaf, leading to $G_Y(i) = \perp$. We conclude that~$u_t$ survives in $Y$, and so that $v_t = v'_{\pi_i(t-1)} = u_t$, as desired.

	We have  shown that $\calE$ guarantees that $v_{t} = u_{t}$ for all $t \leq d-1$. In particular, $v_{d-1} = u_{d-1}$, and the~$k$ children of $v_{d-1}$ in $Y$ must be the children of $u_{d-1}$ in $X$. This finishes the argument that $\calE$ implies that $v_t = u_t$ for all $t \leq d+k-1$, that is, $G_Y(i) = G_X(i)$.
	
	Now, we prove that $\calE$ happens with probability at least $2/3$ in an $s$-stable trace. We prove this in two steps. First, we argue that~$\calE_{\leq d-s}$ occurs with probability at least $2/3$. Then, we show that $\calE_{\leq d-s}$ implies $\calE$. 
	Consider the node $u_{t-1}$ in $G_X(i)$ for $t \in [d-s]$, and let $u'_0,\ldots, u'_{k-1}$ be the~$k$ children of~$u_{t-1}$ in $X$. Since the height of $u'_j$ is at least $s$, the subtree rooted at $u'_j$ in $X$ contains at least $\sum_{\ell=0}^s k^\ell \geq k^s$ nodes. The probability that all of these nodes are deleted is at most~$q^{k^s}$. Because $s = \sbound$, this is at most $1/(3dk)$. Taking a union bound over the $k$ children implies that $\calE_t$ occurs with probability at least $1 - 1/(3d)$, and taking a union bound over $t \in [d-s]$ implies that $\calE_{\leq d-s}$ holds with probability at least $2/3$.
	
	The final step is to prove that $\calE$ happens with probability one, in an $s$-stable trace, assuming that~$\calE_{\leq d-s}$ holds. More precisely, we will show that $\calE_{\leq d-s+\ell}$ implies $\calE_{d-s+\ell+1}$ for $\ell = 0,1\ldots, s-1$. We have already argued that $\calE_{\leq d-s +\ell}$ guarantees that  $v_{d-s +\ell} = u_{d-s +\ell}$. We claim that the~$k$ children $v'_0,\ldots, v'_{k-1}$ of $v_{d-s +\ell}$ are the original children of $u_{d-s+\ell}$ in $X$ (and this clearly implies $\calE_{d-s+\ell+1}$). Since $Y$ is $s$-stable, there is a path with $s -\ell +1$ nodes from $v'_j$ to a leaf in $Y$. If $v'_j$ were not an original child of $u_{d-s+\ell}$, then all such paths would have at most $s-\ell$ nodes. This implies no children of $u_{d-s+\ell} = v_{d-s+\ell}$ have been deleted in~$Y$, and their existence witnesses the survival of the subtrees needed for $\calE_{d-s+\ell+1}$. Since this holds for $\ell = 0,1\ldots, s$, we conclude that $\calE = \calE_{\leq d}$ follows from $\calE_{\leq d-s}$ in an $s$-stable trace, and $\Pr[G_Y(i) = G_X(i)] \geq \Pr[\calE] = \Pr[\calE_{\leq d-s}] \geq 2/3$.
\end{proof}

\subsubsection*{Completing the proof of \thmref{ted-small}}

\begin{proof}[Proof of \thmref{ted-small}]
	
	Let $\A$ be a set of $T = C\log(n)  / (1-q)^{dk + s^2k}$ traces with $C$ a large enough constant. By \lemref{stable}, each trace in $\A$ is $s$-stable for $i$ with probability $(1-q)^{dk + s^2 k}$. Therefore, by setting $C$ large enough and taking a union bound over $i \in \calI$, we can ensure that with probability at least $1-1/n^2$, for every $i \in \mathcal{I}$ there is a subset $\A_i \subseteq \A$ of $s$-stable traces for~$i$ with $|\A_i| \geq C' \log n$, for a constant $C'$ to be set later.
	
	By  \lemref{stable-prob}, each trace $Y  \in \A_i$ has the property that $G_Y(i) = G_X(i)$ with probability at least~$2/3$. Let $f_i(Y) \in \bit^{d+k-1}$ be the labels of  $G_Y(i)$ in $Y$. In expectation over $Y  \in \A_i$, we have that at least a 2/3 fraction of $Y$ satisfy $f_i(Y) = f_i(X)$. Therefore, since $|\A_i| \geq C' \log n$ for a large enough constant $C'$, we have by a standard Chernoff bound that the majority value of $f_i(Y)$ over $Y  \in \A_i$ is equal to $f_i(X)$, with probability at least $1-1/n^2$. For each $i \in \calI$, our reconstruction algorithm uses this majority vote to deduce the labels for $G_X(i)$. Taking a union bound over~$i \in \calI$, where $|\calI| \leq n$, we correctly label all nodes with probability at least $1-1/n$.
	
	It remains to show that $T = \exp(O(dk))$, where $d = O(\log_k n)$. Recall that  we have set $s = \sbound$. If $k \geq d$, then $s \leq c \log \log k / \log k$ for a constant $c$, since $q$ is a constant, and so $s = O(1)$. If $k \leq d$, then $s \leq c \log \log d$, and in particular, $s^2 < c''d$ for some constant $c''$. Therefore, for any $k$, we have $T \leq C \log n \cdot \exp(c'dk)$ for some constant $c'> 1$ depending only on~$q$, and since $\log \log n < dk$, we conclude that that $T = \exp(O(dk))$. 
\end{proof}


\section{Reconstructing Trees, Left-Propagation Model}\seclab{left}

In this section we present our two algorithms for $k$-ary trees in the Left-Propagation deletion model.

\begin{figure*}[t!]
	\centering
	\includegraphics[angle=0,width=0.6\textwidth]{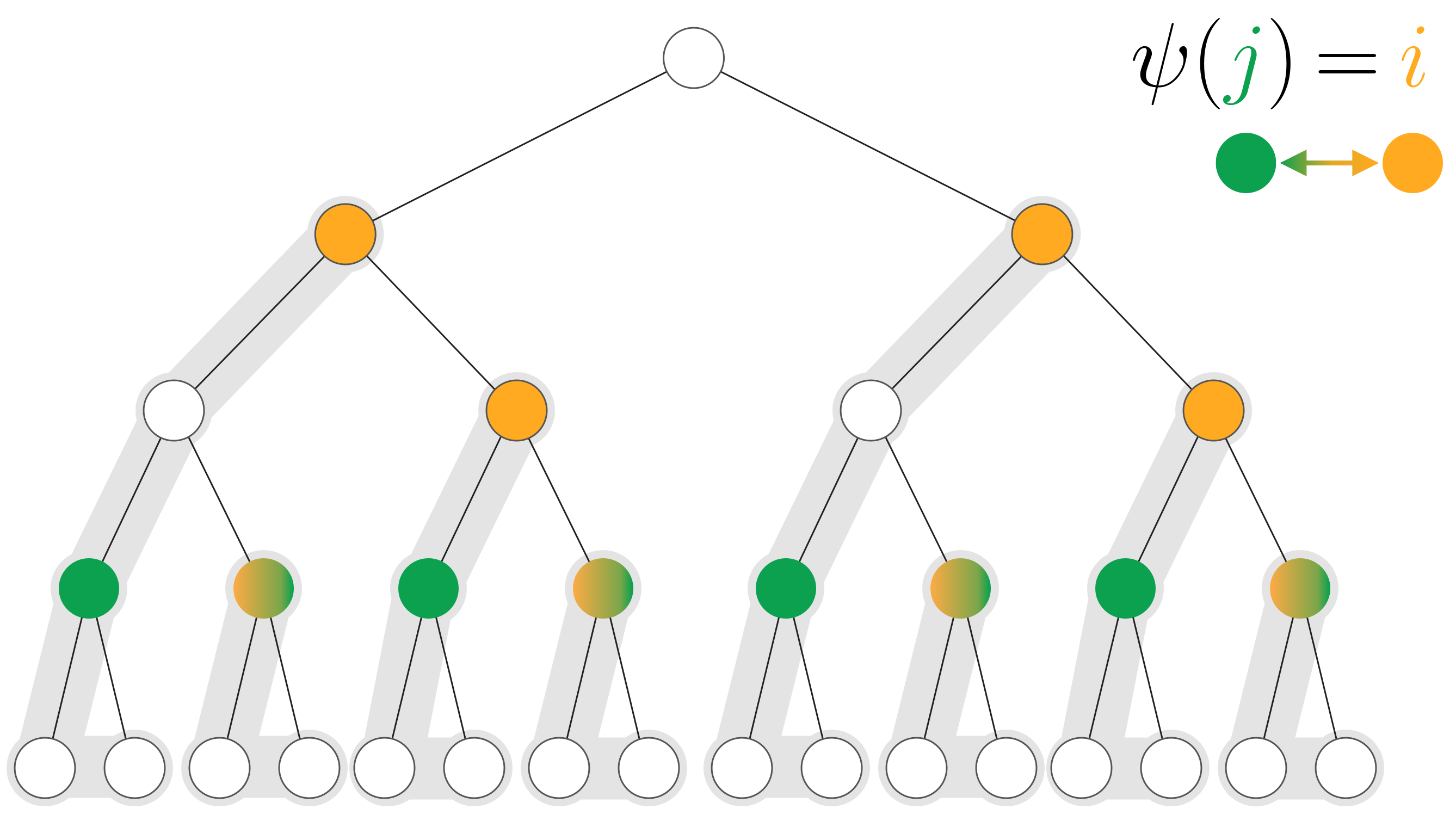}
	\caption{The partition of the tree $X$ into the subtrees $\left\{ H_X(i) : i \in \mathcal{I} \right\}$ is illustrated in gray. The bijection $\psi : \mathcal{J}_{d-1} \to \mathcal{I}$ maps, for each gray subtree, a single green node to a single orange node in that subtree. Nodes that are both green and orange map to themselves.} 
	\figlab{left-prop-partition}
\end{figure*}

\subsection{Proof of \thmref{left-large} Concerning Large Degree Trees}\seclab{outline-left-large}
Recall the definitions from the first paragraph in \secref{tree-prelims}, in particular that of $\calI$.
Recall also that the sets $H_X(i)$ for $i \in \calI$ partition the non-root nodes of $X$, and each $H_X(i)$ contains exactly one node from $\calJ_{d-1}$; see \figref{left-prop-partition} for an illustration.
	 Define the bijection $\psi: \calJ_{d-1}  \to \calI$ as $\psi(j) = i$ for the distinct $i$ such that $j \in H_X(i)$; see \figref{left-prop-partition} again for an illustration. 
	For each fixed $j \in \calJ_{d-1}$ and each trace $Y$, we will extract a bit-string $s_Y(j)$
	and use it to reconstruct the labels for $H_X(\psi(j))$. 
	We only define $s_{Y}(j)$ whenever $P_{Y}(j) \neq \perp$, but this suffices for our purposes (as discussed later). 
	To define $s_Y(j)$, we need some notation.
	Let $v_0,v_1,\ldots, v_{d-1}$ be the nodes in $P_Y(j)$, where $v_t$ has depth $t$ in $Y$. Then, let $v_d,\ldots,v_{d+k'}$ be the children of $v_{d-1}$ in $Y$, where $k' \leq k -1$. Let $i = \psi(j)$ and let $t_i$ be the depth of $i$ in $X$. Finally, define $s_Y(j)$ as a bit-string of length $d+k'-t_i+1$ consisting of the labels in $Y$ of the nodes $v_{t_i},\ldots, v_{d+k'}$; see \figref{left-prop-traces-large-degree} for an illustration.

\begin{figure*}[h!]
	\centering
	\includegraphics[angle=0,width=0.8\textwidth]{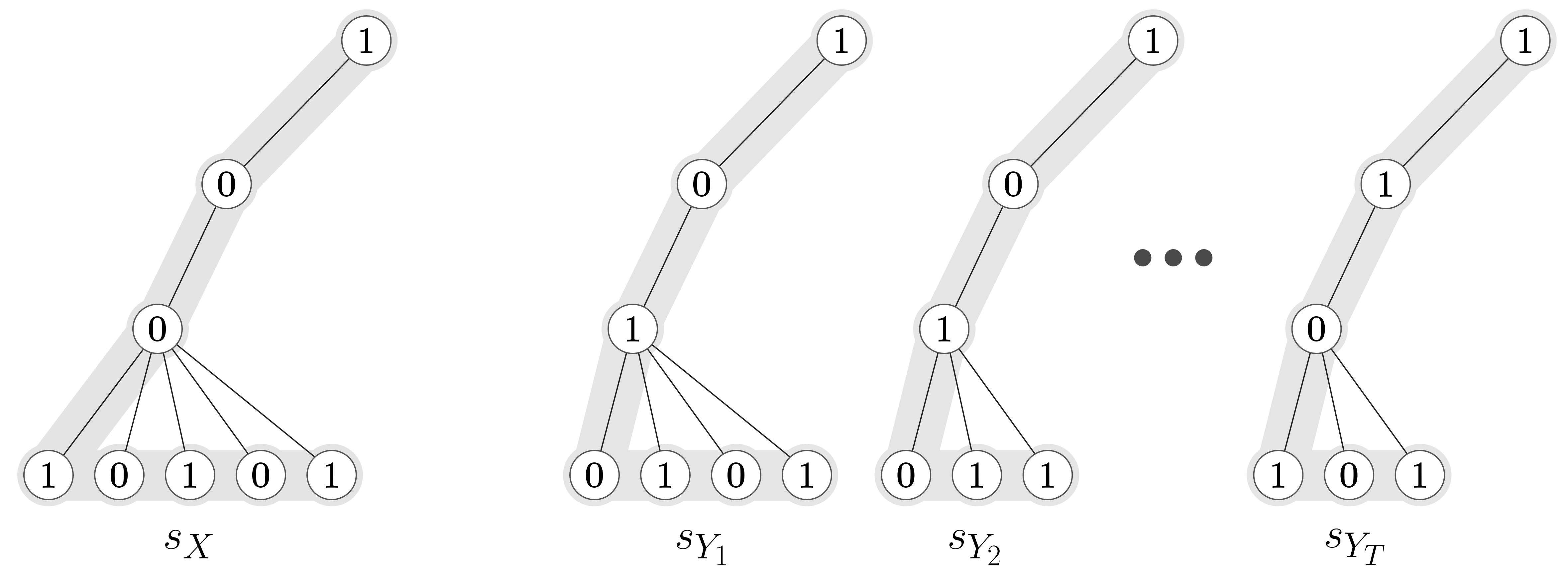}
	\caption{Extracting the strings $s_Y(j)$ from the subtrees that contain the path $P_Y(j)$.} 
	\figlab{left-prop-traces-large-degree}
\end{figure*}

\algref{kary_large_LP} below states, in pseudocode, our reconstruction algorithm for proving \thmref{left-large}. 
We note that this reconstruction algorithm is tailored to the Left-Propagation deletion model. 
\begin{algorithm}
	\caption{Reconstructing $k$-ary trees, $k \geq c \log(n)$, Left-Propagation deletion model}
	\alglab{kary_large_LP}
	\hspace*{\algorithmicindent} Set $T = T(d+k,1/n^2)$. \\
	\hspace*{\algorithmicindent} \textbf{Input:} traces $Y_{1}, \ldots, Y_{T}$ sampled independently from the TED deletion channel.
	\begin{algorithmic}[1] 
		\State Set $\mathcal{A} = \left\{ Y_{1}, \ldots, Y_{T} \right\}$. 
		\If{there exists a trace $Y \in \mathcal{A}$ and a node $j \in \mathcal{J}_{d-1}$ such that $P_{Y}(j) = \perp$ \Comment{This happens with vanishing probability.}\\}
		Terminate the algorithm and produce no output. 
		\Else{ \Comment{This happens with high probability.}
			\For{$j \in \mathcal{J}_{d-1}$}
			\State Reconstruct labels of $H_X(\psi(j))$ from $\left\{s_{Y}(j) \right\}_{Y \in \mathcal{A}}$, via string trace recon.\ (\thmref{StringMain}).
			\EndFor}
		\State \textbf{Output:} Union the learned labels of $H_X(i)$ over all $i \in \calI$ to reconstruct labels of $X$, as described in \remref{SeparateReconstruction}.
		\EndIf
	\end{algorithmic}
\end{algorithm}

For complete $k$-ary trees with sufficiently large $k \geq c \log n$,
a trace $Y$ has $P_Y(j) \neq \perp$ 
for all nodes $j \in \calJ_{d-1}$ with high probability. 
When $P_Y(j) \neq \perp$ and $j \in H_X(i)$,
we can extract a subset of $H_X(i)$ that behaves as if it went through the string deletion channel
(i.e., as if it were a path on $|H_X(i)|$ nodes). 
Therefore, using traces with $P_Y(j) \neq \perp$ for all $j \in \calJ_{d-1}$,
we reduce to string trace reconstruction (see \figref{left-prop-traces-large-degree} for an illustration), and we reconstruct the labels for each $H_X(i)$ separately. This suffices because the subtrees $H_X(i)$ for $i \in \calI$ partition the non-root nodes of $X$ (see \figref{left-prop-partition} for an illustration).

We first argue that $P_Y(j) \neq \perp$ with high probability when $k \geq c \log n$. 

\begin{lemma}\lemlab{left-good-trace}
	Let $k \geq c \log n$. In the Left-Propagation model, a random trace $Y$ has $P_Y(j) \neq \perp$ for every $j \in \calJ_{d-1}$ with probability at least $1 - \exp(-c'k)$.
\end{lemma}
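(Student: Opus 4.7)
The plan is to recast the event as one about subtree survival, then apply a Chernoff-plus-union-bound argument. First I would observe the equivalence
\[
\{P_Y(j) \neq \perp \text{ for all } j \in \mathcal{J}_{d-1}\} = \mathcal{E},
\]
where $\mathcal{E}$ is the event that every internal node of $X$ at depth at most $d-2$ has exactly $k$ children in $Y$. The forward direction is immediate from the construction of $P_Y(j)$: as long as $v_{t-1}$ has $k$ children, step $t$ succeeds. For the reverse, if some internal $u$ of depth $\leq d-2$ has fewer than $k$ children in $Y$, then picking any $j \in \mathcal{J}_{d-1}$ whose canonical path in $X$ passes through $u$ forces $P_Y(j) = \perp$. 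Equivalently, $\mathcal{E}$ says that for every such $u$ and every child $c$ of $u$ in $X$, the subtree $T_c$ rooted at $c$ in $X$ retains at least one surviving node in $Y$.

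Next I would bound $\Pr[T_c \text{ is entirely removed}]$ using the Left-Propagation dynamics. The key observation is that each marked non-root node $v$ triggers the removal of exactly one node---the leaf at the end of $v$'s current left-only path---and this removed leaf always lies in $v$'s original subtree in $X$, since $v$'s current subtree can only shrink over time. Consequently, a mark on a node $v$ that is neither in $T_c$ nor a strict ancestor of $c$ cannot remove any node of $T_c$. Hence entirely removing the $|T_c|$ nodes of $T_c$ requires at least $|T_c|$ marks among the at most $|T_c| + d$ nodes in $T_c \cup \mathrm{Ancestors}(c)$, giving
\[
\Pr[T_c \text{ is entirely removed}] \leq \Pr\bigl[\mathrm{Binomial}(|T_c| + d,\ q) \geq |T_c|\bigr].
\]
Since $|T_c| \geq k+1$ (as $c$ has depth at most $d-1$) and $d = O(\log_k n)$, for $k \geq c \log n$ with $c$ chosen large (depending on $q$) the ratio $d/|T_c|$ becomes arbitrarily small, so the threshold $|T_c|/(|T_c|+d)$ exceeds $q$ by a positive constant. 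A standard Chernoff bound then yields $\Pr[T_c \text{ is entirely removed}] \leq \exp(-c'' k)$ for some $c'' > 0$ depending only on $q$.

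Finally I would union-bound over all pairs $(u,c)$ with $u$ internal of depth $\leq d-2$ and $c$ a child of $u$. There are at most $n$ such pairs, so $\Pr[\mathcal{E}^c] \leq n \exp(-c'' k) \leq \exp(-c' k)$ for $k \geq c \log n$ once $c$ is chosen large enough to absorb the $\log n$ factor into the exponent.

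The main obstacle is the structural observation in step two: verifying, despite the dynamic nature of the deletion process (left-only paths shift as leaves disappear), that marks in branches disjoint from $T_c$ cannot cascade into it. This rests on the invariant that a node's subtree only shrinks under deletions, so a mark on $v$ always deletes within $v$'s original subtree. Once this is in hand, the remaining ingredients---Chernoff plus union bound---are entirely routine.
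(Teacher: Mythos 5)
Your proof is correct. It follows the same overall template as the paper's (reduce to a structural survival condition, then Chernoff plus a union bound over at most $n$ events), but the intermediate event and the justification of the structural step are genuinely different. The paper works, for each $j \in \calJ_{d-1}$, with the thin subtree $G_X(j)$ (the root-to-$j$ path plus the $k$ children of $j$, i.e.\ $d+k-1$ non-root nodes) and with the event that at least $d$ of these nodes survive; it then asserts that this, holding for all $j$ simultaneously, forces every node of every $P_Y(j)$ to keep exactly $k$ children. You instead union directly over the child subtrees $T_c$ and bound the probability that any one is erased, via the invariant that each mark removes exactly one structural leaf lying inside the marked node's original subtree, so that erasing $T_c$ requires at least $\left|T_c\right|$ marks among the at most $\left|T_c\right|+d$ nodes of $T_c$ together with the ancestors of $c$. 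That invariant is exactly the content the paper leaves implicit: its event only constrains marks inside $G_X(j)$, which is disjoint from most of the sibling subtrees whose survival is actually what is needed, and in the Left-Propagation model an unmarked node of $T_c$ can still be structurally displaced out of $T_c$ by marks on ancestors. Your mark-counting argument makes the implication ``enough survival $\Rightarrow$ every node keeps $k$ children'' fully explicit at no quantitative cost: both routes end with a Chernoff bound at scale $k$ (using $d = O(\log_k n) = o(k)$ when $k \geq c\log n$) and a union bound over at most $n$ objects, absorbed by taking $c$ large enough depending on $q$.
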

\begin{proof}
	The property that $Y$ has $P_Y(j) \neq \perp$ for every $j \in \calJ_{d-1}$ is equivalent to $Y$ containing a complete $k$-ary subtree of depth $d-1$ with the same root as $X$. Consider any node $j \in \calJ_{d-1}$, and recall that the subtree $G_X(j)$ has $d+k-1$ non-root nodes. Each non-root node in $G_X(j)$ survives independently in $Y$ with probability $(1-q)$. Let $\calE'$ be the event that at least~$d$ nodes from $G_X(j)$ survive in $Y$ for every $j \in \calJ_{d-1}$. Because $k \geq c\log n$ and $d \leq  \log_k n$ and $|\calJ_{d-1}| \leq n$, a standard Chernoff and union bound implies that $\calE'$ holds with probability $1 - \exp(-c'k)$ for a constant $c' > 0$ depending on $q$. When $\calE'$ holds, for all $j \in \calJ_{d-1}$, every node in $P_Y(j)$ has exactly $k$ children in $Y$ and $P_Y(j) \neq \perp$ . 
\end{proof}

\begin{proof}[Proof of \thmref{left-large}] 
	Let $T = T(d+k, 1/n^2)$ be the number of traces needed to learn $d+k$ bits with probability $1-1/n^2$ in the string model with deletion probability $q$. 
	We will reconstruct $X$ with probability $1 - O(1/n)$ using~$T$ traces from the Left-Propagation model. 
	
	By \lemref{left-good-trace}, a trace $Y$ has $P_Y(j) \neq \perp$ for every $j \in \calJ_{d-1}$ with probability $1-\exp(-c'k)$. 
	By \thmref{StringMain} we have that $T = \exp( O( (d+k)^{1/3} ) ) = \exp( O( k^{1/3} ) )$, where the second inequality is due to $d \leq k$. 
	Thus by a union bound it follows that, with probability at least $1 - \exp( - c'' k )$ for some constant $c'' > 0$, we have $P_{Y}(j) \neq \perp$ for \emph{all} traces $Y$ and nodes $j \in \mathcal{J}_{d-1}$. 
	So from now on we assume that $P_Y(j) \neq \perp$ for every trace $Y$ and every $j \in \mathcal{J}_{d-1}$.

	Decompose $X$ into subtrees $H_X(\psi(j))$ for $j \in \calJ_{d-1}$; see \figref{left-prop-partition} for an illustration. For each of the $T$ traces $Y$, extract the bit-string $s_Y(j)$; see \figref{left-prop-traces-large-degree} for an illustration. Consider these as $T$ traces from the string deletion model on $|H_X(\psi(j))| < d+k$ bits. More precisely, let $s_X(j)$ be the labels in $X$ for the nodes in $H_X(\psi(j))$. We claim that $s_Y(j)$ is a valid trace for the string deletion model with unknown string $s_X(j)$. In the Left-Propagation model, when $P_Y(j) \neq \perp$, the nodes considered in $Y$ for $s_Y(j)$ form a subsequence of the corresponding nodes in $X$. Therefore, since each node is deleted with probability $q$, the bits in $s_Y(j)$ will be a trace of the string $s_X(j)$. 
	Though we only consider traces with at least $d$ bits remaining, the probability that at least one trace of $T$ has
	less than $d$ bits occurs with probability at most $\exp(-O(k))$.
	So by slightly increasing the factor $C'$ in $T$, we can use
	\thmref{StringMain} to see $T$ traces suffice to reconstruct $s_X(j)$ with probability $1-1/n^2$. 
	Moreover, $s_X(j)$ are the labels for $H_X(\psi(j))$. Taking a union bound over $|\calI| \leq n$, we can reconstruct $H_X(i)$ for all $i \in \calI$ with probability at least $1-1/n$.
\end{proof}

\subsection{Proof of \thmref{left-small} Concerning Arbitrary Degree Trees}\seclab{proof-left-small}

\algref{kary_arb_LP} below states, in pseudocode, our reconstruction algorithm for proving \thmref{left-small}. 
\begin{algorithm}
	\caption{Reconstructing $k$-ary trees, arbitrary $k$, Left Propagation deletion model}
	\alglab{kary_arb_LP}
	\hspace*{\algorithmicindent} Set $T = C (1-q)^{-(d+c'k)} \log(n)$ (for large enough $c'$ and $C$). \\
	\hspace*{\algorithmicindent} \textbf{Input:} traces $Y_{1}, \ldots, Y_{T}$ sampled independently from the TED deletion channel.
	\begin{algorithmic}[1] 
		\For{every $i \in \mathcal{I}$}
		\State Initialize $\mathcal{A}_i = \emptyset$. 
		\For{$t=1$ to $T$}  
		\If{$Y_t$ has $G_{Y_{t}}(i) \neq \perp$}
		add $Y_t$ to $\mathcal{A}_i$.
		\EndIf
		\EndFor
		\EndFor
		\If{there exists $i \in \mathcal{I}$ such that $\mathcal{A}_{i} = \emptyset$ \Comment{This happens with vanishing probability.}\\}
		Terminate the algorithm and produce no output.
		\Else{\Comment{This happens with high probability.}
		\For{every $i \in \mathcal{I}$} 
		\State Choose an arbitrary $Y \in \mathcal{A}_i$.
		\State Reconstruct labels in $H_X(i)$ bit by bit as those of $H_Y(i)$, using \remref{BitByBit}.
		\EndFor}
		\State \textbf{Output:} Union the learned labels of $H_X(i)$ over all $i \in \mathcal{I}$ to reconstruct labels of $X$, as described in \remref{SeparateReconstruction}.
		\EndIf
	\end{algorithmic}
\end{algorithm}

As in the proof of \thmref{left-large}, we reconstruct 
 $X$ by reconstructing the subtrees $H_X(i)$ for $i \in \calI$, 
which partition the non-root nodes of $X$. Instead of reducing to string reconstruction, we  use traces with $G_Y(i) \neq \perp$ to directly obtain labels for $H_X(i)$. 
We only need to take enough traces to balance out the fact that a trace with $G_Y(i) \neq \perp$ 
for $i \in \calI$ occurs with probability $\exp(-O(d+k))$. 

\subsubsection*{Recovering the labels for subtrees}

We first show that if a trace satisfies $G_Y(i) \neq \perp$, then we can reconstruct the labels of $H_X(i)$; see \figref{left-prop-traces-large-degree-caterpillar} for an illustration.

\begin{figure*}[t!]
	\centering
	\includegraphics[angle=0,width=0.75\textwidth]{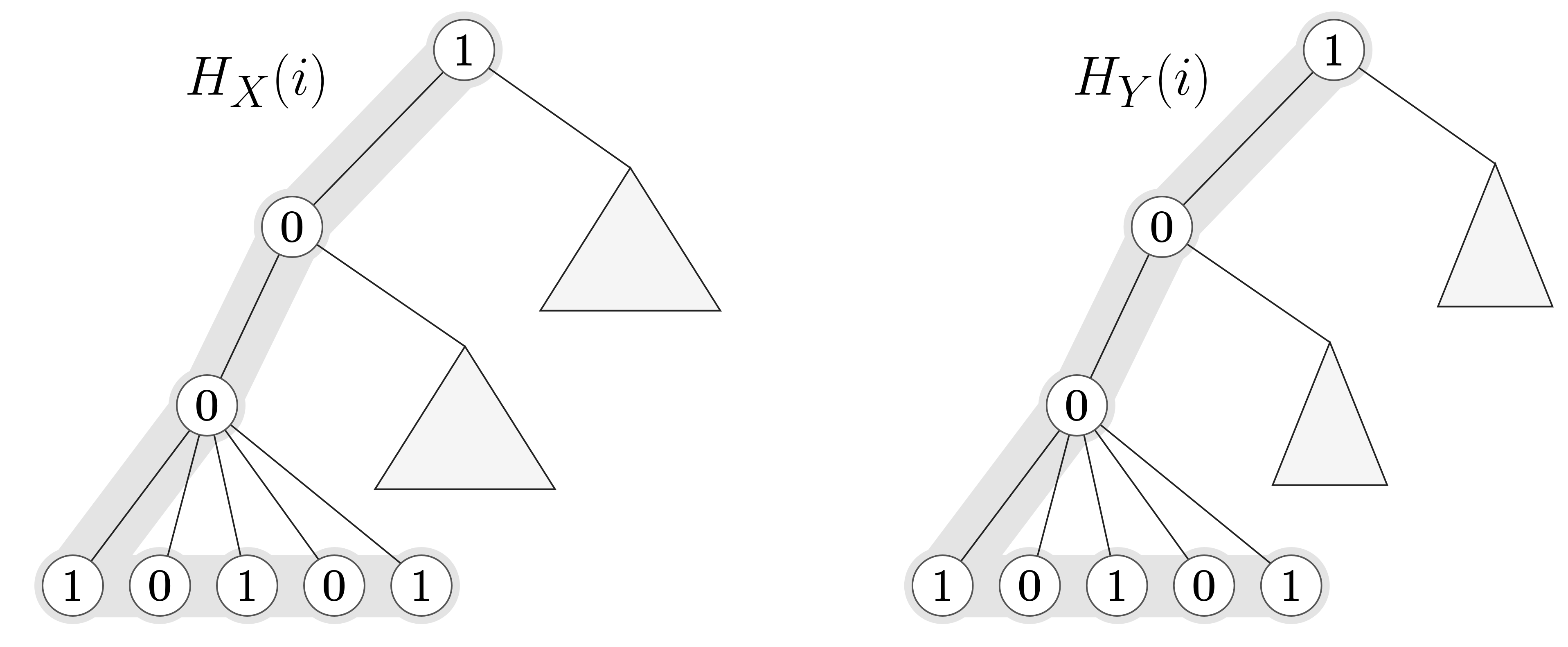}
	\caption{Extracting correct labels for $s_X(j)$ from a single trace containing a caterpillar,  $G_Y(i) \neq \perp$.} 
	\figlab{left-prop-traces-large-degree-caterpillar}
\end{figure*}

\begin{lemma}\lemlab{label-left} In the Left-Propagation model, if $G_Y(i) \neq \perp$, then $H_Y(i) = H_X(i)$ and the labels for these subtrees are identical in $Y$ and $X$.
\end{lemma}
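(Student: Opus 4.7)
The plan is to exploit the defining feature of the Left-Propagation model: deleting a node $v$ only removes the leaf at the bottom of $v$'s left-only path and shifts labels upward along that path, leaving every other position and every other label untouched. Consequently, each surviving position in $Y$ sits at the same structural location (described by a sequence of child-indices from the root) as in $X$, and its label equals the original $X$-label unless some deletion's left-only path passed through it.

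I would first show that $G_Y(i) \neq \perp$ forces every node of $G_X(i)$ to survive in $Y$ at its original structural location, i.e., $v_t$ and $u_t$ live at the same position for each $0 \leq t \leq d+k-1$. This follows because $v_{d-1}$ must have exactly $k$ children in $Y$, so none of the leaves $u_d, \ldots, u_{d+k-1}$ of $G_X(i)$ has been removed; together with the $\pi_i$-path requirement, this also pins down $v_{t_i}, \ldots, v_{d-1}$ to coincide with $u_{t_i}, \ldots, u_{d-1}$, and in fact forces every ancestor of $i$ in $X$ to survive with all $k$ of its children intact. Next, I would argue that no node of $H_X(i)$ can itself have been marked for deletion in the channel: deletion of any $u_t$ with $t_i \leq t \leq d-1$ would, via its left-only path $u_t, u_{t+1}, \ldots, u_d$, ultimately force the removal of one of the children $u_d, \ldots, u_{d+k-1}$ of $u_{d-1}$, contradicting the $k$-child count at $v_{d-1}$; direct deletion of a leaf $u_{d+j}$ simply removes it, producing the same contradiction.

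Third, I would show that no deletion outside $H_X(i)$ shifts a label at any position in $H_X(i)$. For the path nodes $u_{t_i}, \ldots, u_d$, the only deletions that could shift their labels are those of a node $v$ whose left-only path reaches them---necessarily an ancestor of $u_t$ in $X$ reached from $v$ via only-left moves. But any such $v$'s left-only path then extends all the way down to $u_d$, so deleting $v$ would force the removal of $u_d$, once again contradicting $G_Y(i) \neq \perp$. The non-leftmost leaves $u_{d+1}, \ldots, u_{d+k-1}$ lie on no left-only path from any internal node, so their labels can only be altered by their own direct deletion, already excluded. Combining everything, each $u_t \in H_X(i)$ is alive in $Y$ at precisely the position where $H_Y(i)$ looks for it and still carries its original $X$-label, yielding $H_Y(i) = H_X(i)$ as labeled subtrees. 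The main obstacle I anticipate is carefully tracking the associative-deletion dynamics: a later deletion's left-only path is computed in the \emph{current} tree, which may have been reshaped by earlier deletions, so one must verify that any reshaped left-only path reaching a node of $H_X(i)$ still forces the removal of a leaf in $\{u_d, \ldots, u_{d+k-1}\}$, and thus a contradiction with the hypothesis $G_Y(i) \neq \perp$.
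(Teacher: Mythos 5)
Your proposal is correct and follows essentially the same route as the paper: the paper compresses your case analysis into the single observation that $s_Y(i)$ is a string-deletion-channel trace of $s_X(i)$ of equal length, which forces equality because any deletion event touching $H_X(i)$ --- whether of a node of $H_X(i)$ itself or of an ancestor whose left-only path enters it --- removes one of the $k$ leaf children of $u_{d-1}$ and thus contradicts $G_Y(i) \neq \perp$. Your explicit handling of ancestor deletions and of the fact that left-only paths are computed in the current, partially deleted tree correctly fills in details the paper leaves implicit.
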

\begin{proof} 
Let $s_Y(i) = v_{\ell},\ldots, v_{b}$ be the labels on the left only path from $i$ to the leaf plus its siblings in trace $Y$. 
These are the labels on $H_Y(i)$.
Similarly, let $s_X(i) = u_{\ell'},\ldots, u_{b'}$ be the labels on the left-only path from $i$ to the leaf plus its siblings in $X$. 
These are the labels on $H_X(i)$. 
Due to the behavior of the Left-Propagation model, the string $s_Y(i)$ is a trace of the string $s_X(i)$. 
When $G_Y(i) \neq \perp$, no nodes of $H_X(i)$ could have been deleted to obtain the trace $H_Y(i)$,
otherwise the number of leaves in $G_Y(i)$ would be too small and $G_Y(i)$ would not be defined.
As $s_Y(i)$ is a trace of $s_X(i)$ and $|H_Y(i)| = |H_X(i)|$, this implies $H_Y(i) = H_X(i)$. 
\end{proof}

\begin{lemma}\lemlab{cater-left} 
Fix $i \in \mathcal{I}$ and let $Y$ be a trace from the Left-Propagation deletion model. 
There exists an absolute constant $c' > 1$ such that with probability at least $(1-q)^{d + c'k}$ we have that $G_Y(i) \neq \perp$. 
\end{lemma}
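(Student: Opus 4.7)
The plan is to identify a tractable sufficient event for $\{G_Y(i)\neq\perp\}$ whose probability is at least $(1-q)^{d+c'k}$ for a suitable absolute constant $c'>1$. Let $A$ be the event that every one of the $d+k-1$ non-root nodes of $G_X(i)$ survives, and let $B$ be the event that for every non-$G_X(i)$ child $c$ of an internal $G_X(i)$ node $v_t$ (with $t\leq d-2$), not all of the nodes in the subtree of $c$ in $X$ are deleted. I will show $A\cap B$ implies $G_Y(i)=G_X(i)$, and in particular $G_Y(i)\neq\perp$.

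The structural step uses that every ancestor in $X$ of every node in $G_X(i)$ itself lies in $G_X(i)\cup\{\mathrm{root}\}$, which follows from a short case check on the definitions of $P_X(i)$ and $H_X(i)$. Under $A$ none of these ancestors are deleted, so no left-only-path contraction originating outside the non-$G_X(i)$-child subtrees can perturb labels or positions inside $G_X(i)$. A careful unpacking of the Left-Propagation rule (using associativity of deletions to pick a convenient processing order, e.g., deepest-first) then shows that the only remaining way a non-$G_X(i)$ child $c$ of some $v_t$ can be removed as a position is for $c\in D$ and for all of $c$'s descendants to already be emptied at the time $c$ is processed---which inductively happens exactly when the entire subtree of $c$ lies in $D$. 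Event $B$ forbids this, so each internal $v_t$ retains all $k$ children in $Y$ with unshifted labels and $G_Y(i)=G_X(i)$.

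For the probability bound, the subtree rooted at each non-$G_X(i)$ child $c$ is disjoint from $G_X(i)$ and from the subtrees of the other such $c$'s (any intermediate $v_{t'}$ already lies on $G_X(i)$, hence outside the $c$-subtrees). Hence $A$ and the various subtree-survival events in $B$ depend on disjoint node-deletion indicators and are independent, giving
\[
\Pr[A\cap B] \;=\; (1-q)^{d+k-1}\,\prod_{c}\bigl(1-q^{s_c}\bigr),
\]
where $s_c$ is the size of $c$'s subtree; the smallest $s_c$ equals $k+1$ (at depth $d-1$), and in general $s_c\geq k^{d-t-1}$. Using $\log(1-x)\geq -2x$ together with the super-exponentially decaying bound $\sum_c q^{s_c}\leq (k-1)\sum_{s\geq 1}q^{k^s}=O(kq^k)$ gives $\prod_c(1-q^{s_c})\geq \exp(-O(kq^k))$, so $\Pr[G_Y(i)\neq\perp]\geq (1-q)^{d+k-1}\exp(-O(kq^k))$. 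Taking logs, this is at least $(1-q)^{d+c'k}$ as soon as $((c'-1)k+1)\log(1/(1-q))\geq O(kq^k)$, which holds for any $c'>1$ chosen large enough in terms of $\log(1/(1-q))$ alone.

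The main obstacle is the structural claim of the second paragraph: precisely characterizing, conditional on $A$, which mechanisms can cause a non-$G_X(i)$ child position of some $v_t$ to disappear from $Y$. This requires carefully tracking the Left-Propagation dynamics and inductively ruling out every mechanism other than the full-subtree cascade in $B$. The remaining pieces---the disjointness-driven independence decomposition and the geometric-type series bound on $\sum_c q^{s_c}$---are short calculations.
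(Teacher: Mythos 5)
Your proof is correct and follows essentially the same route as the paper's: the sufficient event you construct ($A\cap B$) coincides with the paper's event that $G_X(i)$ survives together with $\bigcap_t \calE_t$ (where $\calE_t$ asks that each child subtree of $u_{t-1}$ retain at least one node, which for the on-path child is already implied by $A$), and the disjointness/independence decomposition and the product bound $\prod_c(1-q^{s_c})\geq(1-q)^{O(k)}$ match the paper's $\prod_{h}(1-q^{k^h})^{k-1}=(1-q)^{ck}$. You are in fact more explicit than the paper on the structural step (the paper simply asserts that the event implies $G_Y(i)\neq\perp$), and your verification of it is sound.
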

\begin{proof} There are $d+k$ nodes in $G_X(i)$ and they all survive in $Y$ with probability $(1-q)^{d+k}$. From now on, we assume this holds. Let $G_X(i) = u_0,\ldots, u_{d+k-1}$, where $u_t$ has depth~$t \in \{0,1,\ldots, d\}$, and $u_{d-1}$ has children $u_{d}, \ldots, u_{d+k-1}$. Consider $t \in [d]$, and let $u'_0,\ldots, u'_{k-1}$ be the $k$ children of~$u_{t-1}$ in $X$. Define $\calE_t$ to be the event that, for every $j \in \{0,1,\ldots, k-1\}$, at least one node in the subtree rooted at $u'_j$ survives in $Y$. We observe that, in the Left-Propagation model, if both $\bigcap_{t = 1}^{d} \calE_t$ holds and $G_X(i)$ survives, then we have $G_Y(i) \neq \perp$. 

Because $G_X(i)$ surviving implies that the $k$ children of $u_{d-1}$ survive, we already know that $\calE_d$ holds. For $t \in [d-1]$, each child $u'_j$ of $u_{t-1}$ has height $h = d-t+1$ in $X$. In particular, the subtree rooted at $u'_j$ in $X$ contains at least $k^{d-t+1}$ nodes. If $u'_j \in G_X(i)$, then we have assumed it survives, otherwise there are $k-1$ other subtrees. Since the subtrees considered are independent, at least one node survives from each of them (for all $t \in [d-1]$) with probability at least
\[\prod_{h = 2}^{d} \left (1-q^{k^{h}}\right )^{k-1} = (1-q)^{ck},\] for some constant $c$. Putting everything together, $G_X(i)$ survives and $\bigcap_{t = 1}^{d} \calE_t$ holds, and therefore $G_Y(i) \neq \perp$, with probability at least
$(1-q)^{d + k} \cdot (1-q)^{ck} = (1-q)^{d+(c+1)k}$.
\end{proof}

\subsubsection*{Completing the proof of \thmref{left-small}}\seclab{outline-left-small}

\begin{proof}[Proof of \thmref{left-small}.] 
	By \lemref{cater-left}, 
	the probability that none of $T$ traces satisfy $G_Y(i) \neq \perp$ for some $i \in \calI$ is at most 
	\[|\calI|\pth{1 - (1-q)^{d + c'k}}^T \leq n\exp(-T(1-q)^{d + c'k}).\]
	To ensure that at least one trace has $G_Y(i) \neq \perp$ for every $i \in \calI$ with high probability, we take 
	$T' = O(T \log n)$ traces, where $T$ satisfies
	\[T = (1-q)^{-(d + c'k)} \leq (1-q)^{-\ln(n)/\ln(k) - c'k} = n^{\ln(1/(1-q))(c'k/\ln(n) + 1/\ln(k))}.\]
	By \lemref{label-left}, any trace with $G_Y(i) \neq \perp$ induces the correct labeling of $H_X(i)$ by using the labels for the nodes in $H_Y(i)$. In other words, with high probability, a set of $T' = O(T \log n)$ traces yields a correct labeling of $H_X(i)$ for all $i \in \calI$. Since the subtrees $H_X(i)$ for $i \in \calI$ form a partition of $X$, we can recover all labels in $X$.  
\end{proof}


\section{Reconstructing Spiders}\seclab{spiders}

In this section, we describe how to reconstruct spiders and prove \thmref{MainSpider} and \propref{SpiderCor}.
 We start with preliminaries in \secref{spider-prelims}.
An outline of the proof for \thmref{MainSpider} is followed by the full proof in \secref{proof_mainspider}.
The proof assumes a lemma requiring complex analysis that is deferred to \secref{LemmaContour}.
\propref{SpiderCor} is proven in \secref{spiders-from-strings}.
The remaining proofs of lemmas stated in this section are detailed in \secref{spider_proofs}.

\subsection{Spider Algorithm Preliminaries}\seclab{spider-prelims}

When a labeled $(n,d)$-spider, $X$, goes through the deletion channel, we assume that its trace, $Y$, is an $(n,d)$-spider by inserting length $d$ paths of $0$s after the remaining paths and nodes labeled $0$ to the end of paths. After this, traces have $n/d$ paths of length~$d$ (excluding the root). 

We define a left-to-right ordered DFS index for $(n,d)$-spiders, illustrated in \figref{spider}. 
The labels increase along the length of the paths from the root and increase left to right among the paths. 
Specifically, if node $v$ is in the \nth{i} path from the left and has depth $j$, 
then its label is $(i-1)d + j-1$. 
These labels will be used to define appropriate generating functions. 
As discussed in \remref{root}, 
we need not consider the root as part of the generating function.

\subsection{Proof of \thmref{MainSpider} Concerning $(n,d)$-spiders with Small $d$}\seclab{proof_mainspider}

In the regime where spiders have short paths ($d \leq \log_{1/q} n$), we use mean-based algorithms that generalize the methods of~\cite{DeOdonnellServedio17-WorseCase,de2019optimal,NazarovPeres17-WorseCase}. Using the DFS indexing of nodes, 
let $X$ be an $(n,d)$-spider with labels $\left\{ a_{j} \right\}_{j = 0}^{n-1}$ 
and let $Y$ be a trace of $X$, with the labels of $Y$ denoted by $\left\{ b_{j} \right\}_{j=0}^{n-1}$. 
Consider now the random generating function 
\[
 \sum_{j=0}^{n-1} b_{j} w^{j}
\]
for $w \in \mathbb{C}$. 
Due to the special structure of spiders, the expected value of this random generating function can be computed (see \lemref{GeneratingFunction} below), and while it is more complicated than the corresponding formula for strings, it is still tractable. 
This is useful since by averaging samples we can approximate this expected value.

We then show that for every pair of labeled $(n,d)$-spiders, $X^{1}$ and $X^{2}$, with different binary labels, 
we can carefully choose $w \in \mathbb{C}$ 
so that the corresponding values of the respective generating functions differ in expectation at some index $j = j(X^{1}, X^{2})$. 
In choosing between candidate spiders $X^{1}$ and $X^{2}$, 
the algorithm deems the better match of the pair to be the spider for
which the expected value of the generating function at $w \in \mathbb{C}$ is closer to the mean of the traces at $j$. 
If any spider is a better match compared to every other spider, 
it is said to be the best match, and the algorithm outputs that spider.

For the quantitative estimates, the key technical challenge is to lower bound the modulus of the (expected) generating function on a carefully chosen arc of the unit disc in the complex plane. Our analysis, based on harmonic measure, is inspired by~\cite{Littlewood}, as well as the recent work of~\cite{HartungHP18}.

\begin{figure}[t]
	\center
	\includegraphics[angle=0,width=0.55\textwidth]{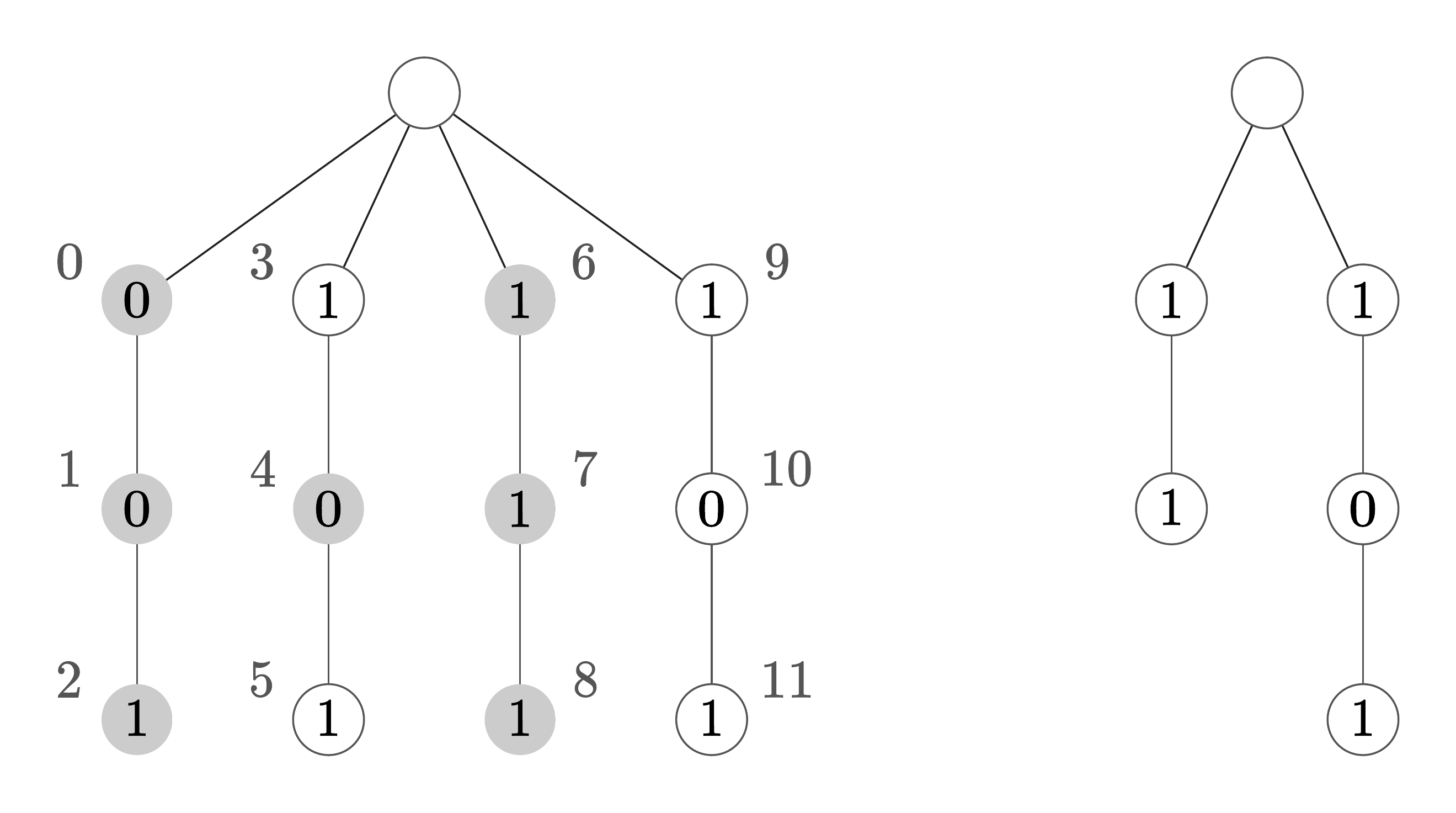}
	\caption{DFS indexing and example trace (in both deletion models) for a $(12,3)$-spider.}
	\figlab{spider}
\end{figure}

When $d$ is constant, the reconstruction problem on $(n,d)$-spiders can be reduced to string trace reconstruction 
(see \propref{RowReduction}). Hence, we will assume that $d$ is greater than a specific constant ($d \geq 20$ suffices). 
We begin by computing the expected value of the generating function for an $(n,d)$-spider which has gone through a deletion channel with parameter $q$. We denote this expected generating function by $A(w)$, where $w\in \mathbb{C}$.

\begin{lemma}\lemlab{GeneratingFunction}
Let $a=\left\{a_{i}\right\}_{i=0}^{n-1}$ be the labels of an $(n,d)$-spider with labels $a_i \in \mathbb{R}$ and let $b = \left\{ b_{j} \right\}_{j=0}^{n-1}$ be the labels of its trace from the deletion channel with deletion probability $q$. Then 
\begin{align*}
A(w):=\mathbb{E} \left (  \sum\limits_{j =0}^{n-1} b_{j} w^{j}\right)  &= (1-q)\sum\limits_{\ell=0}^{n-1}  a_{\ell} (q+(1-q)w)^{\ell \Mod d}~(q^d+(1-q^d)w^d)^{ \lfloor  \frac{\ell}{d}\rfloor },
\end{align*}
where the expectation is over the random labels $b$.
\end{lemma}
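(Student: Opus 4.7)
The plan is to use linearity of expectation to reduce the computation to evaluating $\mathbb{E}[b_j]$ for each trace index $j$. Since every padded position carries label $0$ deterministically, such positions contribute nothing to the expectation, so it suffices to compute, for every original index $\ell$ and trace index $j$, the probability $\mathbb{P}[\ell \to j]$ that the node labeled $\ell$ in $X$ ends up at position $j$ in the padded trace. Then
\[
A(w)=\sum_{\ell=0}^{n-1} a_\ell \sum_{j=0}^{n-1} w^{j}\, \mathbb{P}[\ell \to j],
\]
and the task becomes evaluating the inner sum in closed form.

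Next, I would use the DFS indexing to write $\ell = i d + j'$ and $j = i_Y d + j_Y$ with $j',j_Y\in\{0,\dots,d-1\}$ and $i,i_Y\in\{0,\dots,n/d-1\}$, so that $j'=\ell\bmod d$ is the depth of $\ell$ within its path and $i=\lfloor \ell/d\rfloor$ is the $0$-indexed position of its path (and similarly for $j_Y,i_Y$). The event $\{\ell\to j\}$ then decomposes into three sub-events: (i) node $\ell$ itself survives; (ii) exactly $j_Y$ of the $j'$ nodes lying above $\ell$ in its path survive, which pins the depth within the path; and (iii) exactly $i_Y$ of the $i$ paths strictly to the left of $\ell$'s path survive, where a path is said to survive iff at least one of its $d$ nodes does. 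These three events involve pairwise-disjoint sets of nodes and are therefore independent, with probabilities $(1-q)$, $\binom{j'}{j_Y}(1-q)^{j_Y}q^{j'-j_Y}$, and $\binom{i}{i_Y}(1-q^{d})^{i_Y}(q^{d})^{i-i_Y}$ respectively.

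Multiplying these and summing against $w^{j}=w^{i_Y d}\cdot w^{j_Y}$, the double sum over $(j_Y,i_Y)$ factors:
\[
\sum_{j} w^{j}\,\mathbb{P}[\ell\to j]
= (1-q)\!\left(\sum_{j_Y=0}^{j'}\!\binom{j'}{j_Y}((1-q)w)^{j_Y}q^{j'-j_Y}\right)\!\left(\sum_{i_Y=0}^{i}\!\binom{i}{i_Y}((1-q^{d})w^{d})^{i_Y}(q^{d})^{i-i_Y}\right).
\]
Two applications of the binomial theorem collapse these factors to $(q+(1-q)w)^{j'}$ and $(q^{d}+(1-q^{d})w^{d})^{i}$. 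Substituting back and recalling $j'=\ell\bmod d$ and $i=\lfloor\ell/d\rfloor$ yields exactly the claimed formula.

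The only delicate point is event (iii) under conditioning on $\ell$ surviving: one must note that the survival of $\ell$ forces path $i$ to survive but leaves the survival indicators of the other paths untouched, so the count of surviving paths among indices $0,\dots,i-1$ remains a genuine $\mathrm{Binomial}(i,1-q^{d})$ and is independent of events (i)--(ii). I would also briefly verify that restricting the outer sum to $j\in\{0,\dots,n-1\}$ rather than truncating at the actual trace length is harmless: any $j$ whose position is unreachable by some $\ell$ gives $\mathbb{P}[\ell\to j]=0$ automatically through the binomial coefficients, and padded positions contribute $0$ to $\mathbb{E}[b_j]$ by definition.
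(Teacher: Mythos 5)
Your proposal is correct and follows essentially the same route as the paper's proof: both use linearity of expectation, decompose the probability that a given original node lands at a given (padded) trace position into three independent events on disjoint node sets (the node's own survival, a binomial over the $j'$ nodes above it in its path, and a binomial over the $i$ paths to its left surviving), and then collapse the two sums with the binomial theorem. The only cosmetic difference is that the paper indexes the sum the other way (fixing the trace position and summing over the possible originating nodes before swapping the order of summation), and your explicit note that conditioning on $\ell$'s survival does not disturb the left-path survival indicators is an implicit step in the paper's argument made explicit.
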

While $A(w)$ is written as only a function of $w$, it implicitly depends on the labels $a$ of the original spider. 
The proof of \lemref{GeneratingFunction} is in \secref{spider_proofs}, as it follows from a standard manipulation of equations. 
We use this generating function to distinguish between two candidate $(n,d)$-spiders $X^{1}$ and $X^{2}$, 
which have labels $a^{1} = \{ a^{1}_{j} \}_{j=0}^{n-1}$ and $a^{2} = \{ a^{2}_{j} \}_{j=0}^{n-1}$ 
which are different (that is, there exists $j \in \{ 0, 1, \ldots, n-1\}$ such that $a^{1}_{j} \neq a^{2}_{j}$). 
Let $Y^{1}$ and $Y^{2}$ denote random traces with labels 
$b^{1} = \{ b^{1}_{j} \}_{j=0}^{n-1}$ and $b^{2} = \{ b^{2}_{j} \}_{j=0}^{n-1}$
that arise from passing $X^{1}$ and $X^{2}$ through the deletion channel with deletion probability $q$.

Define $a:= a^{1} - a^{2}$ and 
let $A \left( w \right)$ be the expected generating function with input $a$. 
From \lemref{GeneratingFunction} we have that 
\begin{equation}\eqlab{genfn0}
 \sum_{j=0}^{n-1} \left( \mathbb{E} \left[ b^{1}_{j} \right] - \mathbb{E} \left[ b^{2}_{j} \right] \right) w^{j} 
 = A(w).
\end{equation}
Let
$\ell^{*} := \argmin_{\ell \geq 0} \left\{ a_{\ell} \neq 0 \right\}$
(note that $\ell^{*} \leq n-1$ by construction)
and define 
\[
 \widetilde{A}(w) := (1-q) \sum_{\ell = \ell^{*}}^{n-1} a_{\ell} (q+(1-q)w)^{\ell \Mod d}~(q^d+(1-q^d)w^d)^{ \lfloor  \frac{\ell}{d}\rfloor - \lfloor  \frac{\ell^*}{d}\rfloor}.
\]
Observe that 
$A(w) =(q^d + (1-q^d)w^d)^{\lfloor \frac{\ell^*}{d} \rfloor} \cdot \widetilde{A}(w)$; 
accordingly, we call $\widetilde{A}(w)$ the factored generating function. 
Taking absolute values in \Eqref{genfn0} we obtain that 
\begin{equation}\eqlab{genfunc}
\sum\limits_{j=0}^{n-1}   \left| \mathbb{E} \left[ b^{1}_{j} \right] - \mathbb{E} \left[ b^{2}_{j} \right] \right| \left| w \right|^j 
\geq |A(w)|
=  (1-q) \left| (1-q^d)w^d+q^d \right|^{\lfloor \frac{\ell^{*}}{d}\rfloor} \left| \widetilde{A}(w) \right|. 
\end{equation} 
Ultimately, we aim to bound from below 
$\max_{j} \left| \mathbb{E} \left[ b^{1}_{j} \right] - \mathbb{E} \left[ b^{2}_{j} \right] \right|$ 
by choosing $w \in \mathbb{C}$ appropriately. 
To do this, we balance $|w|$ on the left hand side of \Eqref{genfunc} 
with $|(1-q^d)w^d+q^d|$ on the right; 
it would be best if $|w|$ were small while $|(1-q^d)w^d+q^d|$ were large, 
and so a compromise is to let $w$ vary along an arc of the unit disc $\mathbb{D}$. 
In particular, let 
$\gamma_L := \{e^{i \theta} : - \pi/L \leq \theta \leq \pi / L\}$, 
where we assume that $L\geq20$, a choice which will become clear later. 
The following lemma bounds $\left|(1-q^d)w^d+q^d\right|$ from below while $w \in \gamma_L$. 
The proof is a standard calculation, deferred to \secref{spider_proofs}. 
\begin{lemma}\lemlab{BoundVariable}
For $w \in \gamma_L$ 
we have that 
\begin{align*} 
\left| (1-q^d)w^d+q^d \right| \geq \exp \left(-2 \pi^2 \cdot  q^d(1-q^d) d^2/L^2 \right).
\end{align*}
\end{lemma}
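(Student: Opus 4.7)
The plan is to parametrize $w \in \gamma_L$ as $w = e^{i\theta}$ with $|\theta| \leq \pi/L$ and then compute $|(1-q^d)w^d + q^d|^2$ explicitly. Writing $w^d = e^{i d\theta}$ and applying the identity $|a + b e^{i\alpha}|^2 = a^2 + b^2 + 2ab\cos\alpha$ with $a = q^d$, $b = 1-q^d$, and $\alpha = d\theta$ yields
\begin{align*}
\left| (1-q^d) w^d + q^d \right|^2 \;=\; 1 - 2 q^d (1 - q^d)\bigl(1 - \cos(d\theta)\bigr),
\end{align*}
which reduces the question to a one-dimensional trigonometric estimate.

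Next I would apply the elementary bound $1 - \cos x \leq x^2/2$ (valid for all real $x$, since $1 - \cos x = 2 \sin^2(x/2) \leq x^2/2$) together with $|\theta| \leq \pi/L$ to get
\begin{align*}
\left| (1-q^d) w^d + q^d \right|^2 \;\geq\; 1 - q^d (1-q^d) \pi^2 d^2 / L^2.
\end{align*}
Finally, invoking the inequality $1 - y \geq e^{-2y}$ (valid for $y \in [0,1/2]$) and taking square roots would give $|(1-q^d)w^d + q^d| \geq \exp\bigl(-\pi^2 q^d (1-q^d) d^2 / L^2\bigr)$, which is actually a factor of two stronger in the exponent than the stated bound.

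The only subtlety is the edge case where $y := \pi^2 q^d (1-q^d) d^2 / L^2 > 1/2$, so that the exponential inequality does not apply directly. In that regime I would fall back on the uniform pointwise lower bound $|(1-q^d) w^d + q^d| \geq |1 - 2 q^d|$ (the minimum modulus on the unit circle, attained when $w^d = -1$) and verify that $|1 - 2q^d| \geq \exp(-2\pi^2 q^d(1-q^d) d^2/L^2)$ when $y > 1/2$; this comparison is elementary, using $\log(1-u) \geq -2u$ for $u \leq 1/2$ applied to $u = 2q^d$ (and noting that $q^d < 1/2$ in the regime $q < 1/\sqrt{2}$ of interest for the main theorem). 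The factor-of-two slack in the lemma's exponent is precisely what allows this case to be absorbed. I do not anticipate any genuine obstacle; the lemma amounts to a careful calibration of elementary inequalities, and the only mildly delicate step is writing the case analysis cleanly.
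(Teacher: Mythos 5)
Your proposal is correct and essentially identical to the paper's proof: both compute $|(1-q^d)w^d+q^d|^2 = 1-2q^d(1-q^d)(1-\cos(d\theta))$, bound $1-\cos(d\theta)\le d^2\theta^2/2$, and finish with an exponential estimate of the form $1-y\ge e^{-cy}$. The one small simplification in the paper is that it uses $1-y\ge e^{-4y}$ (valid on $[0,0.9]$) and observes that $y=q^d(1-q^d)d^2\theta^2$ lies well inside that range under the standing assumptions $q<0.7$, $d\ge 20$, $L\ge 20$, so the case split you introduce for large $y$ never arises; this sacrifices the factor of two you gain from $e^{-2y}$ but avoids the fallback argument entirely and still yields exactly the stated constant.
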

Additionally, we bound $\sup_{\gamma_L}|\widetilde{A}(w)|$ from below using \lemref{RidLogFactor}, 
whose proof is in \secref{LemmaContour}. 
\begin{lemma}\lemlab{RidLogFactor}
Let $0<q<0.7$ be a constant. There exists $\zeta \in \gamma_{L}$, as well as a constant $C>0$ depending only on $q$, such that 
$|\widetilde{A}(\zeta)| \geq \exp \left(-C \cdot d L \right)$. 
\end{lemma}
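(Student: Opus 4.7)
The plan is to prove the lemma by a harmonic-measure argument on the unit disc, combined with the two-variable structure of $\widetilde{A}$ exposed by the substitutions $z = q + (1-q)w$ and $u = q^d + (1-q^d)w^d$. Under these substitutions one may write
\[
\widetilde{A}(w) = (1-q)\left[z^{r^{*}} p_0(z) + \sum_{k\geq 1} u^k\, q_k(z)\right],
\]
where $p_0, q_k$ are polynomials in $z$ of degree at most $d-1$ with coefficients in $\{-1,0,1\}$, and crucially $p_0(0) = a_{\ell^{*}} \in \{\pm 1\}$ is nonzero.

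The key is to find a point $w_0$ strictly inside the unit disc at which $|\widetilde{A}(w_0)|$ is bounded below, then transfer this to $\gamma_L$ via the two-constants theorem. A natural family of candidates are the $d$ solutions of $u(w)=0$, namely $w_0^{(k)} = (q/(1-q^d)^{1/d})\, e^{i\pi(2k+1)/d}$ for $k = 0,\ldots,d-1$. These lie strictly inside the unit disc precisely when $q^d < 1/2$, which is guaranteed by the assumption $q < 0.7 < 1/\sqrt{2}$ (this is where the hypothesis enters). At each such $w_0^{(k)}$ the $u$-terms vanish, so $\widetilde{A}(w_0^{(k)}) = (1-q)\, z(w_0^{(k)})^{r^{*}}\, p_0(z(w_0^{(k)}))$. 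Orthogonality of the $d$-th roots of unity yields $\tfrac{1}{d}\sum_{k} p_0(z(w_0^{(k)})) = p_0(q)$, since $\deg p_0 \leq d-1$; hence at least one $k^{*}$ gives $|p_0(z(w_0^{(k^{*})}))| \geq |p_0(q)|$. Together with the fact that $|z(w_0^{(k)})|$ is bounded above and below by positive constants depending only on $q$, this produces $|\widetilde{A}(w_0^{(k^{*})})| \geq \exp(-Cd) \cdot |p_0(q)|$ for some $C = C(q)$.

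We then invoke the two-constants theorem on the unit disc. Since $|w_0^{(k^{*})}|$ is bounded strictly away from $1$ for $q<0.7$, the harmonic measure $\omega$ of $\gamma_L$ at $w_0^{(k^{*})}$ satisfies $\omega \asymp 1/L$. Combined with the trivial sup bound $\sup_{|w|=1}|\widetilde{A}(w)| \leq (1-q)n$, the inequality
\[
\log|\widetilde{A}(w_0^{(k^{*})})| \leq \omega \log\sup_{\gamma_L}|\widetilde{A}| + (1-\omega)\log\sup_{|w|=1}|\widetilde{A}|
\]
rearranges to give $\sup_{\gamma_L}|\widetilde{A}| \geq \exp(-C' dL)$, using $\log n = O(d)$ in the relevant regime $d \leq \log_{1/q} n$.

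The principal obstacle is lower-bounding $|p_0(q)|$. Since $p_0$ is a Littlewood polynomial of degree at most $d-1$ with constant term $\pm 1$, the naive triangle inequality gives $|p_0(q)| \geq 1 - q/(1-q)$, which is useless for $q \geq 1/2$. To bypass this, one likely must either perturb the evaluation point to a nearby complex $w_0'$ (exploiting that $p_0$ has at most $d-1$ zeros, so a mild perturbation of the focal point finds a location where $|p_0|$ is bounded below by $\exp(-O(d))$), or else work directly with the Blaschke product of zeros of $\widetilde{A}$ inside the disc via Jensen's formula. Making these estimates quantitative while tracking how the constants depend on $q$ as $q \to 1/\sqrt{2}$ will be the main technical content of \secref{LemmaContour}.
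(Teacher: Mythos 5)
Your proposal takes a genuinely different route from the paper, but it has a gap at the crucial step---the lower bound at the interior focal point---and you have correctly identified that gap yourself without closing it. Let me be concrete about why the gap is real, and then describe the trick the paper uses instead.

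You place $w_0$ among the roots of $u(w) = q^d + (1-q^d)w^d$ so that the $u$-terms vanish and $\widetilde A(w_0) = (1-q)\,z(w_0)^{r^*} p_0(z(w_0))$, then use root-of-unity orthogonality to get $\frac{1}{d}\sum_k p_0(z(w_0^{(k)})) = p_0(q)$. This is correct arithmetic, but it only guarantees $\max_k |p_0(z(w_0^{(k)}))| \geq |p_0(q)|$, and $|p_0(q)|$ can be exponentially small or even zero: $p_0$ is an arbitrary $\{-1,0,1\}$-coefficient polynomial of degree $\leq d-1$ with nonzero constant term, and such polynomials can have roots at, or very near, any fixed $q > 1/2$. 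Your ``perturb the focal point'' idea does not obviously resolve this, because moving off the roots of $u$ brings back the $u$-terms, which is exactly the interaction you were trying to kill; and there are only $d$ roots available, so ``at most $d-1$ zeros of $p_0$'' does not rule out $|p_0|$ being exponentially small at all of them simultaneously. The Jensen/Blaschke alternative you mention also faces the problem that Jensen's formula controls $\log|p_0(0)|$ in terms of the zeros and the boundary sup, but does not by itself give a pointwise lower bound of the right form at a \emph{fixed} interior point without more structure. So the ``principal obstacle'' is genuine, and as written the argument does not go through.

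The paper's Lemma~\ref{lemma:LBA0} avoids the obstacle by \emph{not} choosing a root of $u$. Instead it chooses $w_0$ with $|w_0| = q$ and $w_0^d = -q^d$ (i.e.\ $w_0 = -q$ for odd $d$, $w_0 = q e^{i\pi(d-1)/d}$ for even $d$). This yields $\alpha := |q + (1-q)w_0| = q^2$ (for odd $d$; $\geq q^2$ in general) and $\beta := |q^d + (1-q^d)w_0^d| = q^{2d}$, so in particular $\beta \leq \alpha^d$. With that inequality, the full sum over $\ell > \ell^*$ is dominated termwise by a geometric series in $\alpha$: each term $\alpha^{\ell \bmod d}\beta^{\lfloor \ell/d\rfloor - \lfloor \ell^*/d\rfloor}$ is $\leq \alpha^{\ell - d\lfloor \ell^*/d\rfloor}$, so the tail sums to $\frac{\alpha}{1-\alpha}\alpha^{\ell^* \bmod d}$, and the leading term $\alpha^{\ell^* \bmod d}$ dominates whenever $\alpha < 1/2$. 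Since $\alpha = q^2$, this is exactly the constraint $q < 1/\sqrt 2$, explaining the hypothesis $q < 0.7$. That is the entire point of the choice: $w_0$ is placed so that the $u$-factor is small enough to make the geometric series converge with room to spare, without killing it entirely. Your identification of where $q < 1/\sqrt 2$ ``enters'' (roots of $u$ inside the disc) is a red herring relative to the paper; the real constraint is $|z(w_0)| < 1/2$.

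On the harmonic-measure side your proposal is fine, though different from what the paper does in its main proof. Using the full disc with the trivial bound $\sup_{|w|=1}|\widetilde A| \leq (1-q)n$ and then invoking $\log n = O(d)$ (valid when $d \leq \log_{1/q} n$, as assumed in Theorem~\ref{theo:MainSpider}) indeed gives $\sup_{\gamma_L}|\widetilde A| \geq \exp(-O(dL))$. This is essentially the paper's elementary alternative Lemma~\ref{lemma:SimpleWithLog} (stated only for $q < 1/2$ there because it uses the focal point $w_0 = 0$). The paper's main proof instead constructs the region $\Omega_L$ and uses the growth bound of Lemma~\ref{lemma:UBA}, $|\widetilde A(w)| \leq \frac{1}{(1-q)(1-|w|)}$, to avoid any dependence on $n$; this is cleaner but requires the more delicate boundary-integral estimates. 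The bottom line: your overall architecture is the right one, your harmonic-measure step is acceptable in the relevant regime, but the lower bound at $w_0$ is missing and the specific choice of $w_0$ you made leads into a dead end that the paper's choice elegantly sidesteps.
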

We are now ready to prove \thmref{MainSpider}. 
\begin{proof}[Proof of \thmref{MainSpider}]

Let $\zeta \in \gamma_L$ be the point guaranteed by \lemref{RidLogFactor}. 
Substituting $\zeta$ into \Eqref{genfunc}, we use \lemref{RidLogFactor} 
and the fact that $\left| \zeta \right| = 1$ to see that
\begin{align*}
	\sum\limits_{j=0}^{n-1}   \left| \mathbb{E} \left[ b^{1}_{j} \right] - \mathbb{E} \left[ b^{2}_{j} \right] \right|  
	\geq |A(\zeta)| 
	&=  (1-q) \left| (1-q^d)\zeta^d+q^d \right|^{\lfloor \frac{\ell^{*}}{d}\rfloor} \left| \widetilde{A}(\zeta) \right| \\
	&\geq  (1-q) \left| (1-q^d)\zeta^d+q^d \right|^{\lfloor \frac{\ell^{*}}{d}\rfloor} \exp \left(-C \cdot d  L \right),
\end{align*} 
for a constant $C > 0$ depending only on $q$. 
Using the bound $\ell^{*} < n$, as well as \lemref{BoundVariable} (where we drop the factor of $1-q^{d}$ in the exponent), we have that 
\[
 \sum\limits_{j=0}^{n-1}   \left| \mathbb{E} \left[ b^{1}_{j} \right] - \mathbb{E} \left[ b^{2}_{j} \right] \right| 
 \geq (1-q) \exp \left(-2 \pi^2 \cdot q^d nd/ L^2 \right)  \exp \left(-C \cdot d  L \right).
\]
Setting 
$L =\max\{ \left( 4 \pi^{2} n q^{d} / C \right)^{1/3}, 20\}$
and plugging into the display above, we find that there exists an index $j$ such that 
\begin{align}\eqlab{eq:1}
\left| \mathbb{E} \left[ b^{1}_{j} \right] - \mathbb{E} \left[ b^{2}_{j} \right] \right| 
\geq \frac{1}{n} \exp \left(- C' \cdot d (n q^d)^{1/3} \right)
\end{align} 
for some constant $C' > 0$ depending only on $q$. 
Therefore, we have shown that there is some index $j = j \left( X^{1}, X^{2} \right)$ where we expect the traces corresponding to $X^{1}$ and $X^{2}$ to differ significantly.

Suppose spider $X^1$ goes through the deletion channel and 
we observe $T$ samples, $S^1, \ldots, S^T$
where sample $S^t$ has labels $\{u^t_j\}_{j=0}^{n-1}$.
Let $\eta$ denote the right hand side of \Eqref{eq:1}.
We say that a spider $X^2$ is a \textit{better match} than $X^1$ for traces $\{S^t\}_{t \in [T]}$
if at the index $j=j(X^1,X^2)$, $X^2$ looks closer to the traces than $X^1$; that is, if 
\begin{align*}
\left | \frac{1}{T} \sum\limits_{t=1}^T u_j^t - \mathbb{E}\left[b^2_j\right] \right | \leq  \left | \frac{1}{T} \sum\limits_{t=1}^T u_j^t - \mathbb{E}\left[b^1_j\right]\right |.
\end{align*}
As before, the expectation is over the random labels $b^1$ and $b^2$.
A Chernoff bound implies that if the traces $\{S^t\}_{t \in [T]}$ came from spider $X^1$, then the probability that $X^2$ is a better match than $X^1$ is at most $ \exp (-T \eta^2/2)$.  
Repeating this for all pairs of binary labeled $(n,d)$-spiders, the algorithm outputs $X^*$, the $(n,d)$-spider which is a better match than all others 
(the best match), if such a spider exists. Otherwise, the algorithm outputs a random binary labeled $(n,d)$-spider. 

Lastly, we show that the algorithm correctly reconstructs an $(n,d)$-spider with high probability when $d \leq \log_{1/q} n$.
We bound from above the probability that the algorithm 
does not find that $X^1$ is the best match by a combination of a union bound and a Chernoff bound (as discussed above). 
The probabilities below are taken over the random traces $\{S^t\}_{t \in [T]}$:
\begin{align*}
\Pr [ X^* \neq X^1] &\leq \sum\limits_{X^{2} : X^2 \neq X^1} \Pr[X^2 \text{ is a better match than } X^1] \leq 2^n \cdot 
\exp \left (-T \eta^2/2 \right) \\
&= 2^n \exp \left(-\frac{T}{2n^2} \exp\left(-C \cdot d (n q^d)^{1/3} \right) \right)
\end{align*}
for a constant $C > 0$ depending only on $q$. 
This latter expression is at most $1/n$ if and only if 
\[
T \geq 2n^{2} \left( n \ln(2) + \ln(n) \right) \exp \left( C d \left( n q^{d} \right)^{1/3} \right).
\]
This holds if 
$T \geq \exp \left( c d \left( n q^{d} \right)^{1/3} \right)$ 
for a large enough constant $c$ depending only on~$q$. 
\end{proof}

\subsection{Proof of \lemref{RidLogFactor}}\seclab{LemmaContour}

We assume basic knowledge of subharmonic functions and harmonic measure.
 For background, we refer readers to any introductory complex analysis book (e.g.,~\cite{Ahlfors, HarmonicMeasure}). 
 For a more elementary (but slightly weaker) bound, see \lemref{SimpleWithLog} in \secref{spider_proofs}.

Let $\Omega \subset \mathbb{C}$ be a bounded, open region, and let $\partial \Omega$  denote its  boundary. 
The \emph{harmonic measure} of a subset $\gamma \subset \partial \Omega$ with respect to a point $w_0 \in \Omega$, denoted $\mu^{w_0}_{\Omega}(\gamma)$, is the probability that a Brownian motion starting at $w_0$ exits $\Omega$ through $\gamma$.
Let $f(w)$ denote an analytic function; 
we will choose $f = \widetilde{A}$, which is a polynomial and hence analytic.
Given $|f(w_{0})|$ at a point $w_{0} \in \Omega$ and a
condition on the growth of $|f|$ in $\Omega$,
we utilize harmonic measure to bound $|f|$ on $\partial \Omega$.
Specifically, we use that $\log|f|$ satisfies the \textit{sub-mean value property}: for all $w_{0} \in \Omega$ we have that 
\begin{equation}\Eqlab{w0bound}
\log|f(w_0)| \leq \int\limits_{\partial \Omega} \log|f(w)| d\mu^{w_0}_{ \Omega}(w).
\end{equation}

As in \Eqref{w0bound}, we will define a region of integration 
where the value of $\log \big|\widetilde{A}(w) \big|$ is controlled along the boundary,
and the boundary will contain $\gamma_L = \{e^{i \theta}: -\pi/L \leq \theta \leq \pi/L\}$. 
We need to separate this boundary into a few different pieces 
and use different techniques to upper bound $\log|\widetilde{A}(w)|$ on each curve. 
In fact, the methods of~\cite{HartungHP18} show a lower bound for $\sup_{\gamma_L}|f(w)|$ 
for an analytic function $f(w)$ satisfying the growth condition in~\lemref{UBA} (see below), 
by using~\Eqref{w0bound} and a particular choice of $w_{0}$. 
We show that $\widetilde{A}$ satisfies the growth condition specified in~\lemref{UBA}, 
then borrow techniques from~\cite{HartungHP18} to upper bound 
the right hand side of~\Eqref{w0bound}. 
However, we have to work more to find an appropriate point $w_0 \in \mathbb{D}$ in order to find a 
lower bound for the left hand side of~\Eqref{w0bound},
so that we can also show a lower bound for $\sup_{\gamma_L}|\widetilde{A}(w)|$.
We discuss this difficulty in~\remref{ForAllq}.

In what follows, open discs of radius $r$ centered at a point $z$ are denoted as $D_r(z)$.
The unit disc, $D_1(0)$, is an exception, denoted as $\mathbb{D}$. 
Recall that $L \geq 20$, and let
upper and lower case \emph{c}'s with tick marks denote constants depending only on $q$.

\begin{proof}[Proof of \lemref{RidLogFactor}]
First, we choose an appropriate point $w_0$ where 
we can lower bound $\big|\widetilde{A}(w_0)\big|$, as stated in \lemref{LBA0}.
\begin{lemma}\lemlab{LBA0}
Let $q < 0.7$ and for fixed integer $d>0$, 
let $w_0 := -q$ if $d$ is odd and $w_0 := q e^{i \cdot \pi(d-1)/d}$ if $d$ is even. 
Then there exists a constant $c>0$ depending only on $q$ such that $\big|\widetilde{A}(w_0)\big| \geq e^{-c\cdot d}$.
\end{lemma}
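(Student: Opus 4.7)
The plan is to exploit the algebraic identity $w_0^d = -q^d$, which holds in both parities: for $d$ odd, $(-q)^d = -q^d$ directly, and for $d$ even, $\bigl(qe^{i\pi(d-1)/d}\bigr)^d = q^d e^{i\pi(d-1)} = -q^d$ because $d-1$ is odd. Setting $\alpha := q + (1-q)w_0$ and $\beta := q^d + (1-q^d)w_0^d$, this immediately yields $\beta = q^d - q^d(1-q^d) = q^{2d}$ in both cases, so $|\beta|$ is exponentially small in $d$.

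For $d$ odd one has additionally $\alpha = q + (1-q)(-q) = q^2$, and the key coincidence $\alpha^d = q^{2d} = \beta$ lets the two exponents in $\widetilde{A}$ be merged into one, giving
\[
\widetilde{A}(w_0) \;=\; (1-q)\,\alpha^{s^*}\sum_{j\geq 0}a_{\ell^*+j}\,\alpha^j,
\qquad s^*:=\ell^*\bmod d \leq d-1.
\]
Since $a_{\ell^*}\in\{-1,+1\}$ and $|a_{\ell^*+j}|\leq 1$, the triangle inequality gives
\[
\Bigl|\sum_{j\geq 0}a_{\ell^*+j}\alpha^j\Bigr| \;\geq\; 1-\sum_{j\geq 1}q^{2j} \;=\; \frac{1-2q^2}{1-q^2},
\]
which is a strictly positive constant whenever $q<1/\sqrt{2}$, and in particular for $q<0.7$. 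Combined with $|\alpha|^{s^*}=q^{2s^*}\geq q^{2(d-1)}$, this yields $|\widetilde{A}(w_0)|\geq e^{-cd}$ for $c$ depending only on $q$, finishing the odd case cleanly.

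For $d$ even the identity $\alpha^d=\beta$ fails, so I would handle the $\beta$-dependence perturbatively. Grouping the sum by block index $m := \lfloor\ell/d\rfloor-\lfloor\ell^*/d\rfloor$ writes
\[
\widetilde{A}(w_0) \;=\; (1-q)\Bigl[P_0(\alpha)+\sum_{m\geq 1}\beta^m P_m(\alpha)\Bigr],
\]
where each $P_m$ is a polynomial of degree $<d$ with coefficients in $\{-1,0,1\}$ and $P_0(\alpha)=\alpha^{s^*}(a_{\ell^*}+a_{\ell^*+1}\alpha+\cdots)$. The tail is bounded by $|\beta|/\bigl((1-|\beta|)(1-|\alpha|)\bigr) = O(q^{2d})$, hence negligible. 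For the leading block, the triangle inequality gives $|P_0(\alpha)|\geq |\alpha|^{s^*}(1-2|\alpha|)/(1-|\alpha|)$ whenever $|\alpha|<1/2$. A direct calculation shows $|\alpha|^2 = q^4+2q^2(1-q)(1-\cos(\pi/d))\to q^4$ as $d\to\infty$, so for $q<0.7$ there is a threshold $d_0=d_0(q)$ above which $|\alpha|<1/2$; for such $d$ the analysis then mirrors the odd case. The finitely many remaining even $d<d_0$ reduce to an algebraic check that the fixed $\alpha(d,q)$ is bounded away from the roots of the (finitely many) polynomials of degree $<d$ with $\{-1,0,1\}$ coefficients and constant term $\pm 1$.

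\textbf{Main obstacle.} The even case carries the technical weight: the clean factorization $\alpha^d=\beta$ that makes the odd case immediate is unavailable, so one must bound the $\beta$-tail perturbatively and separately control the complex-valued leading polynomial $P_0(\alpha)$. The numerical threshold $q<0.7$ (rather than the slightly wider $q<1/\sqrt{2}$) is precisely what guarantees $|\alpha|<1/2$ for all sufficiently large even $d$; the small-$d$ even regime requires a finite direct verification.
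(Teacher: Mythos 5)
Your odd-$d$ argument is correct and is exactly the paper's: the coincidence $\alpha^d = \beta$ (with $\alpha = q^2$ real) collapses the double exponent into a single geometric series, and the triangle inequality gives a lower bound proportional to $\alpha^{s^*}$ whenever $q^2 < 1/2$.

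The even-$d$ argument, however, has a genuine gap. You bound the tail over blocks $m \geq 1$ by $|\beta|/\bigl((1-|\beta|)(1-|\alpha|)\bigr) = O(q^{2d})$ and compare it against $|P_0(\alpha)| \geq |\alpha|^{s^*}(1-2|\alpha|)/(1-|\alpha|)$. But the tail bound carries no $|\alpha|^{s^*}$ prefactor while the leading-term bound does, and $s^*$ can be as large as $d-1$. In that worst case the leading-term lower bound is roughly $q^{2(d-1)}(1-2|\alpha|)/(1-|\alpha|)$, which is of the \emph{same} order $q^{2d}$ as the tail but multiplied by the small factor $(1-2|\alpha|)/q^2$; for $q$ near $0.7$ this factor is about $0.1$, whereas the tail bound is about $q^{2d}/(1-|\alpha|) \approx 1.9\,q^{2d}$. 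The difference $|P_0(\alpha)| - |\mathrm{tail}|$ is then negative, so the argument as written produces no useful bound.

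The missing step is one you already have the ingredients for: you compute $|\alpha| \geq q^2$ for even $d$, which gives $|\alpha|^d \geq q^{2d} = \beta$ in \emph{both} parities. Substituting $\beta^m \leq |\alpha|^{dm}$ inside the tail (before taking moduli) collapses the block structure into a single geometric series in $|\alpha|$ starting at power $s^*+1$, so the tail is bounded by $|\alpha|^{s^*}\cdot\frac{|\alpha|}{1-|\alpha|}$ and now carries the same $|\alpha|^{s^*}$ prefactor as the leading term. This yields the paper's unified bound $(1-q)\frac{1-2|\alpha|}{1-|\alpha|}|\alpha|^{s^*}$ for both parities whenever $|\alpha| < 1/2$, with no perturbative block decomposition needed. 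The constraints $q < 0.7$ and $d \geq 20$ (the latter is assumed throughout this section of the paper) are precisely what guarantee $|\alpha| \leq 0.49 < 1/2$; this also removes the need for your ``finite algebraic check'' on small even $d$, which as stated is incomplete anyway since bounding $P_0(\alpha)$ uniformly away from zero over all admissible label differences is not a routine verification.
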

The calculation justifying the bound is standard and deferred to \secref{spider_proofs}, 
but the choices of $w_0$ 
are quite careful. 
The choices depend on the parity of $d$ 
so as to control the sign of $w_0^d$. 
With these choices 
we can relate $|q+(1-q)w_0|$ to $|q^d+(1-q^d)w_0^d|$, 
a motive which becomes clear in the proof of \lemref{LBA0}. 
Note that our inability to handle constant $q \geq 1/\sqrt{2}$
comes from  \lemref{LBA0}, as detailed in \remref{ForAllq}.

In the following we fix $w_{0}$ according to the specifications of \lemref{LBA0}. 
Next, we define a region $\Omega_L$ that contains $w_0$, 
and whose boundary we integrate over; see \figref{contour} for an illustration. 
For a translate $h_L \in \mathbb{R}^+$, let $D_L = D_{1/2}(1/2)+ h_L$, where $h_L$ is chosen so that $D_L \cap \partial \mathbb{D} = \gamma_L$. Observe that $L\geq20$ implies $h_L \leq 1/10$. 
We also define a rectangle $R \subset \mathbb{D}$ that has the following properties: 
$R$ contains $w_{0}$, 
$R$ has nonempty intersection with $D_{L}$, 
and $\partial R$ has bounded distance from $w_{0}$ and $\partial \mathbb{D}$. 
As we only consider $q$ bounded away from $1$ and $d \geq 20$, 
we may (and will) choose $R$ to be centered about the real axis, 
with height $1/5$ and with length extending from $-0.8$ to $1/2$. 
Our region of integration is then defined as 
$\Omega_{L} := \left( \mathbb{D} \cap D_{L} \right) \cup R$.

We partition the boundary of $\Omega_{L}$ into three parts by defining 
\begin{align*}
\gamma_L' := \left \{ w \in \partial \Omega_L \setminus \gamma_L : |w| > \frac12 + h_L \right  \}, 
\qquad 
\gamma_L'' :=  \left \{ w \in \partial \Omega_L : |w| \leq \frac12 + h_L \right \}.
\end{align*}
We thus have that 
$\partial \Omega_{L} = \gamma_{L} \cup \gamma_{L}' \cup \gamma_{L}''$; 
see \figref{contour}, where the different parts of the contour are colored differently.
\begin{figure}[t]
    \centering
    \includegraphics[angle=0, width=0.45 \textwidth]{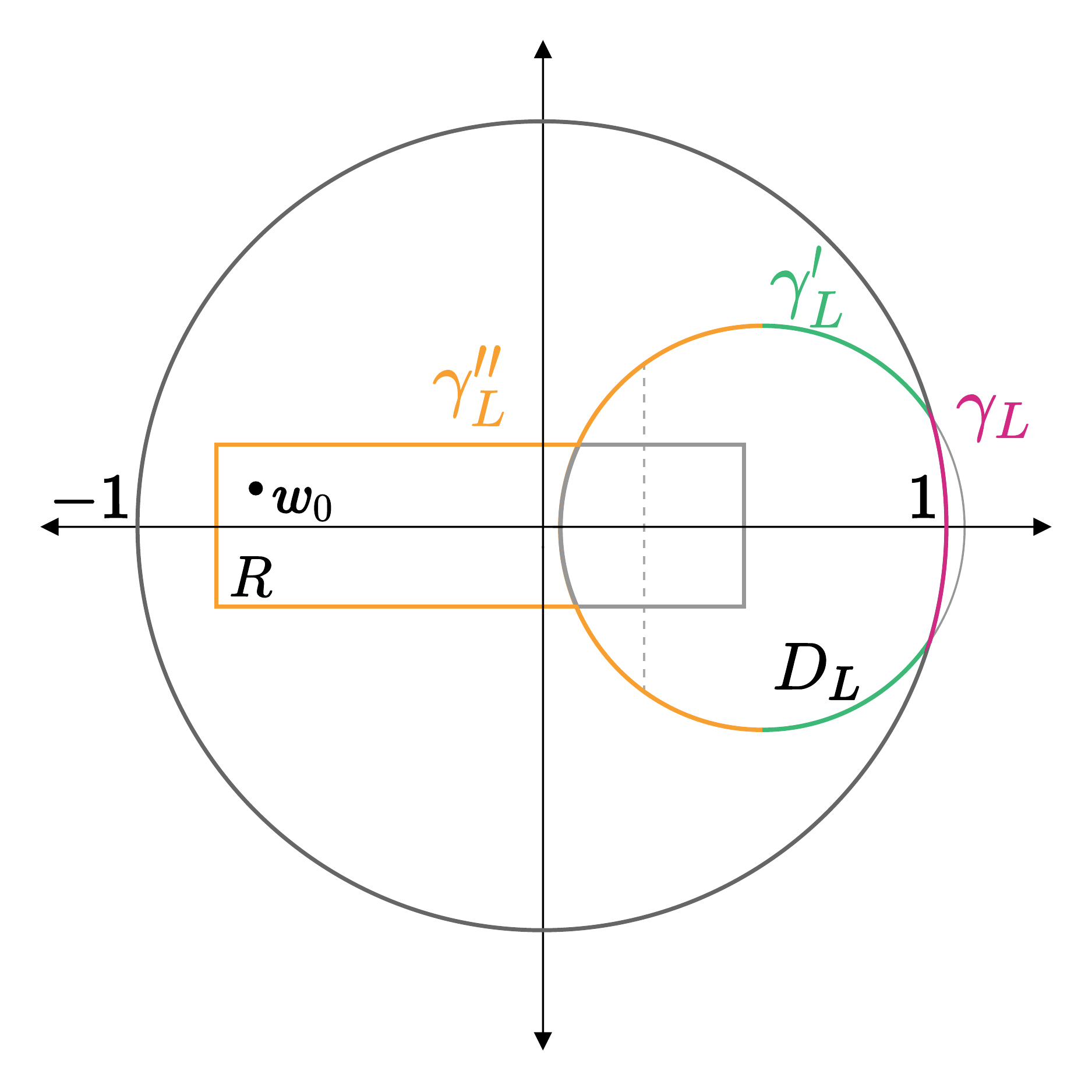} 
    \caption{The region $\Omega_L = \left( \mathbb{D} \cap D_{L} \right) \cup R$ with boundary $\partial \Omega_{L} = \gamma_{L} \cup \gamma_{L}' \cup \gamma_{L}''$.}
    \figlab{contour}
\end{figure}
Using the sub-mean value property, we see that
\begin{align*}
\log \left|\widetilde{A}(w_0) \right| &\leq \int_{\partial \Omega_L} \log \left|\widetilde{A}(w) \right| d \mu_{\Omega_L}^{w_0}(w)  \\ &=
\int_{\gamma_L} \log \left|\widetilde{A}(w) \right| d \mu_{\Omega_L}^{w_0}(w) + 
\int_{\gamma_L'} \log \left|\widetilde{A}(w) \right| d \mu_{\Omega_L}^{w_0}(w) + 
\int_{\gamma_L''} \log \left|\widetilde{A}(w) \right| d \mu_{\Omega_L}^{w_0}(w).
\end{align*}

Next, we upper bound each of the integrals.
Our upper bounds for $\gamma_L$ and $\gamma_L''$ are simple modulus length bounds. 
We start with the integral over $\gamma_{L}$. 
We know that the boundary $\partial \Omega_{L}$ has constant length, 
the curve $\gamma_{L}$ has length bounded by $c/L$ for some constant $c$, 
and $w_{0}$ is bounded away from $\gamma_{L}$. 
Therefore the probability that a Brownian motion starting at $w_{0}$ exits $\Omega_{L}$ through the arc $\gamma_{L}$ is at most $C/L$ for some constant $C$.
That is, $\mu_{\Omega_L}^{w_0}(\gamma_L) \leq C/L$, where 
we can choose $C$ to hold for all $w_0$ as $q$ and $d$ vary. 
Therefore we have that 
\begin{align}\eqlab{UB1}
\int_{\gamma_L} \log \left|\widetilde{A}(w) \right| d \mu_{\Omega_L}^{w_0}(w) 
\leq \mu_{\Omega_L}^{w_0}(\gamma_L) \cdot \log \sup\limits_{w \in \gamma_L} \left|\widetilde{A}(w) \right| 
\leq \frac{C}{L} \cdot \log \sup\limits_{w \in \gamma_L} \left|\widetilde{A}(w) \right|.
\end{align}

Somewhat more work is needed to show that
\begin{align}\eqlab{UB2}
    \int_{\gamma_L'} \log \left| \widetilde{A}(w) \right| d \mu_{\Omega_L}^{w_0}(w) \leq c' \quad \text{and} \quad
    \int_{\gamma_L''} \log \left| \widetilde{A}(w) \right| d \mu_{\Omega_L}^{w_0}(w) \leq c''.
\end{align}
To prove these bounds, we use \lemref{UBA}, a growth condition for the generating function.
\begin{lemma}\lemlab{UBA}
For all $w \in \mathbb{D}$ and all deletion probabilities $q \in (0,1)$, 
we have $\big|\widetilde{A}(w)\big| \leq \frac{1}{(1-q)(1-|w|)}$.
\end{lemma}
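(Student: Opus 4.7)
The plan is to combine the triangle inequality with the bound $|a_\ell|\leq 1$ (which holds because $a_\ell = a^1_\ell - a^2_\ell$ with $a^1,a^2\in\{0,1\}^n$) and then dominate the resulting double sum by a geometric series. Setting $u:=q+(1-q)w$ and $v:=q^d+(1-q^d)w^d$, I would first bound
\[
|\widetilde{A}(w)| \ \leq\ (1-q) \sum_{\ell=\ell^*}^{n-1} |u|^{\ell \Mod d}\, |v|^{\lfloor \ell/d\rfloor - \lfloor \ell^*/d\rfloor} \ \leq\ (1-q) \cdot \frac{1-|u|^d}{1-|u|} \cdot \frac{1}{1-|v|},
\]
where the final step dominates the finite sum by the full double sum over $r\geq 0$ and $s\in\{0,\ldots,d-1\}$, which converges since $|w|<1$ implies $|u|,|v|<1$.

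Next I would apply the triangle inequality on $u$ and $v$ themselves to obtain $|u|\leq q+(1-q)|w|$ and $|v|\leq q^d+(1-q^d)|w|^d$. Using monotonicity of $x\mapsto \sum_{s=0}^{d-1}x^s$ on $[0,1)$ together with the identities $1-(q+(1-q)|w|)=(1-q)(1-|w|)$ and $1-(q^d+(1-q^d)|w|^d)=(1-q^d)(1-|w|^d)$, substitution yields
\[
|\widetilde{A}(w)| \ \leq\ \frac{1-(q+(1-q)|w|)^d}{(1-|w|)(1-q^d)(1-|w|^d)},
\]
so the lemma reduces to the one-variable polynomial inequality
\[
(1-q)\bigl(1-(q+(1-q)t)^d\bigr) \ \leq\ (1-q^d)(1-t^d) \qquad \text{for all } t\in[0,1],\ q\in(0,1),\ d\geq 1,
\]
with $t$ playing the role of $|w|$.

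To establish this inequality, I would expand $(q+(1-q)t)^d=\sum_{k=0}^d \binom{d}{k}q^{d-k}(1-q)^k t^k$ by the binomial theorem; since these weights sum to $1$, we get $1-(q+(1-q)t)^d=\sum_{k=1}^d \binom{d}{k}q^{d-k}(1-q)^k(1-t^k)$. Factoring $(1-t)$ out of both sides via $1-t^k=(1-t)\sum_{j=0}^{k-1}t^j$, it then suffices (for $t<1$; the case $t=1$ is trivial as both sides vanish) to compare the resulting polynomials in $t$ coefficient by coefficient. For each $j\in\{0,\ldots,d-1\}$, the left-hand coefficient equals
\[
(1-q) \sum_{k=j+1}^d \binom{d}{k}q^{d-k}(1-q)^k \ =\ (1-q)\,\Pr\bigl[\mathrm{Bin}(d,1-q)>j\bigr] \ \leq\ 1-q,
\]
while the right-hand coefficient equals $1-q^d\geq 1-q$. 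Since all coefficients involved are non-negative, the inequality on polynomials (and hence on values at any $t\in[0,1]$) follows.

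I expect no serious obstacle in this argument, which is entirely real-analytic (in contrast to the harmonic-measure techniques needed for \lemref{RidLogFactor}). The main subtlety is recognizing that the two applications of the triangle inequality on $u$ and $v$, combined with the factorization $1-q^d=(1-q)(1+q+\cdots+q^{d-1})$, produce exactly the extra factor of $(1-q)$ that appears in the target denominator $\frac{1}{(1-q)(1-|w|)}$.
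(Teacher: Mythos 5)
Your proof is correct, and it takes a genuinely different route from the paper's, although both are ultimately built out of the triangle inequality, geometric series, and binomial expansion. The paper's key step is the pointwise inequality $q^d + (1-q^d)|w|^d \leq (q+(1-q)|w|)^d$, which lets it replace every factor $|q^d+(1-q^d)w^d|$ by a power of $q+(1-q)|w|$, collapsing the bound into a \emph{single} geometric series in $q+(1-q)|w|$ that sums directly to $\frac{1}{(1-q)(1-|w|)}$; that inequality is itself proved in one line via the binomial theorem and $|w|^{d-j}\geq|w|^d$. You instead keep the two bases separate, dominate the sum by a \emph{product} of two geometric series $\frac{1-|u|^d}{1-|u|}\cdot\frac{1}{1-|v|}$, and then reduce to the weaker polynomial inequality $(1-q)\bigl(1-(q+(1-q)t)^d\bigr)\leq(1-q^d)(1-t^d)$, which you establish by factoring out $(1-t)$ and comparing coefficients of $t^j$ using the binomial tail probability bound $\Pr[\mathrm{Bin}(d,1-q)>j]\leq 1$ together with $1-q\leq 1-q^d$. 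Both arguments are sound; the paper's is shorter and actually proves the stronger bound $\frac{1}{1-|w|}$ (it drops the leading $(1-q)$ in the definition of $\widetilde{A}$ rather than using it), while yours has the merit of a self-contained, probabilistic interpretation of the coefficient comparison. One small typo: when you dominate the finite sum by the full double sum, the roles of the two index ranges are swapped — it should be $r\in\{0,\ldots,d-1\}$ (the exponent of $|u|$) and $s\geq 0$ (the exponent of $|v|$) — but the resulting bound $\frac{1-|u|^d}{1-|u|}\cdot\frac{1}{1-|v|}$ is the correct one, so this does not affect the argument.
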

The proof of \lemref{UBA} is a triangle inequality calculation that we defer to \secref{spider_proofs}. 
We can directly apply \lemref{UBA} to bound the integral over $\gamma_{L}''$. 
By our choice of $R$ and the fact that $h_L \leq 1/10$, 
we have for all $w \in \gamma_{L}''$ that $\left| w \right| \leq 0.83$. 
Applying \lemref{UBA}, we see that 
$\big|\widetilde{A}(w)\big| \leq 10/(1-q)$ 
for all $w \in \gamma_{L}''$. 
Noting that harmonic measure is a probability measure and thus 
$\mu_{\Omega_L}^{w_0}(\gamma_{L}'') \leq 1$, 
we have that 
\[
 \int_{\gamma_{L}''} \log \left| \widetilde{A}(w) \right| d \mu_{\Omega_{L}}^{w_{0}}(w) 
 \leq \sup_{w \in \gamma_{L}''} \log \left| \widetilde{A}(w) \right| 
 \leq \log \frac{10}{1-q}.
\]

We turn now to the integral over $\gamma_{L}'$, 
where we cannot use modulus length bounds because as $|w|$ approaches 1, 
the factor $1/(1-|w|)$ becomes arbitrarily large. 
However, we can still use \lemref{UBA} to obtain that 
\begin{align*}
\int_{\gamma_L'} \log \left|\widetilde{A}(w)\right| d \mu_{\Omega_L}^{w_0}(w) 
\leq \log \left( \frac{1}{1-q} \right) + \int_{\gamma_L'} \log \left( \frac{1}{1-|w|} \right) d \mu_{\Omega_L}^{w_0}(w).
\end{align*} 
It remains to bound the integral on the right hand side of the display above. 
A Brownian motion in $\Omega_L$ starting at $w_0$ must hit the segment
$s_L =\{w \in \Omega_L : \mathrm{Re}(w) =\frac{1}{4}+h_L\}$ before it hits $\gamma_{L}'$, so 
\[
 \int_{\gamma_L'} \log \left( \frac{1}{1-|w|} \right) d \mu_{\Omega_L}^{w_0}(w) 
 \leq 
 \sup_{z \in s_{L}}
 \int_{\gamma_L'} \log \left( \frac{1}{1-|w|} \right) d \mu_{\Omega_L}^{z}(w).
\]

Note that $z$ and $\partial \Omega_L$ will 
not be arbitrarily close.
Considering Brownian motions starting at $z$, 
we will upper bound the probability that it exits $\Omega_L $ through $\gamma_L'$
by the probability that it exits $D_L $ through $\gamma_L'$. 
As $z$ is bounded away from $\partial \Omega_L$ and $\partial D_L$, the
measures $\mu^z_{D_L}$ and $\mu^{z}_{\Omega_L}$ are equivalent, 
meaning they have the same null sets.
Then by the Radon-Nikodym theorem, there exists a measurable function 
$f = d \mu^{z}_{\Omega_L} / d \mu^z_{D_L}$ such that for any measurable 
set $S$,
$\mu^{z}_{\Omega_L}(S)  = \int_S f(w) d \mu^z_{D_L}(w)$.
From the probabilistic definition of harmonic measure, 
observe that there exists a constant $c>0$ such that for any measurable $S \subset \gamma_L'$,
\begin{align*}
\mu_{\Omega_L}^{z}(S) 
\leq c \cdot \mu^z_{D_L}(S).
\end{align*}
Since $\mu^{z}_{\Omega_L} / \mu^z_{D_L}$ is bounded on $\gamma_L'$,
so is $f$ up to a set of measure 0
\footnote{If $f$ is unbounded on a set of positive measure, $A \subset \gamma_L'$,
then for all $C$, $f \geq C$ on $A$. Writing 
$ \mu^{z}_{\Omega_L}(A)  = \int_A f  d \mu^z_{D_L}  \geq C  \mu^z_{D_L}(A)$, 
we see that $\mu^{z}_{\Omega_L}(A) / \mu^z_{D_L}(A)$ is unbounded.}. 
The upper bound on $f$ almost everywhere is sufficient.
Then returning to our remaining integral over $\gamma_L'$, we obtain the upper bound
\begin{align*}
	\sup_{z \in s_{L}}
	\int_{\gamma_L'} \log \left( \frac{1}{1-|w|} \right) d \mu_{\Omega_L}^{z}(w)
	&=
	 \sup_{z \in s_{L}}\int_{\gamma_L ' } \log \left (\frac{1}{1-|w|} \right) f(w) d \mu_{D_L}^{z}(w) 
	 \\&\leq c \cdot  \sup_{z \in s_{L}} \int_{\gamma_L ' } \log \left (\frac{1}{1-|w|} \right) d \mu_{D_L}^z(w).
\end{align*}

We move to harmonic measure with respect to a disc, instead of $\Omega_L$, so that we can 
switch measures to integrate with respect to angles on the disc.
Specifically, 
we have an explicit form for the Radon-Nikodym derivative.
Letting $s$ denote the arc length measure and $D_r$ denote a disc of radius $r$ containing the point $z$,
$d \mu^z_{D_r} / d s$ at a point $\zeta \in \partial D_r$ is
 the Poisson kernel $P(z,\zeta)= \frac{r^2-|z|^2}{2 \pi \cdot r|z-\zeta|^2}$.
Note that $P(z,\zeta)$ is uniformly bounded above by a constant for all $\zeta \in \gamma_L'$,
as all $z \in s_L$ are bounded away from $\gamma_{L }'$, 
since $w \in \gamma_{L }'$ has $\mathrm{Re}(w) \geq 1/2$. 
This is a useful observation, as now we can integrate
with respect to the angle on $D_L$ between $w=(1/2+h_L) + e^{i \theta}/2 \in \gamma_L '$ 
and $x=e^{i \pi/L}=1/2+h_L +  e^{i \theta_0}/2$, while only gaining a constant 
factor depending on $q$ in the upper bound. 
Recall $x$ is the intersection point of $\partial D_L$ and $\partial \mathbb{D}$ in the first quadrant.
We use the following lemma to obtain a new bound.
There are several proofs for it using only elementary geometry,
and we include one in \secref{spider_proofs}.
\begin{lemma}\lemlab{BoundInt}
For translate $0<h_L \leq 1/10$ satisfying $  e^{i \pi/L} \in D_L \cap \mathbb{D}$,
consider the points $w=(1/2+h_L) + e^{i \theta}/2$ with $0 \leq \theta \leq \pi/2$,
and $x = e^{i \pi / L} = 1/2+h_L +  e^{i \theta_0}/2$ with $0 \leq \theta_0 \leq \theta$.
Then $1-|w| \geq \frac{1}{64}(\theta - \theta_0)^4$.
\end{lemma}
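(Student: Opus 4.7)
\textbf{Proof proposal for \lemref{BoundInt}.} The plan is to reduce the claimed lower bound on $1 - |w|$ to an elementary trigonometric inequality by exploiting the fact that both $w$ and $x$ lie on the circle $\partial D_L$ centered at the real point $c := 1/2 + h_L$, while $x$ additionally lies on the unit circle. I will first compute $|w|^2$ exactly, then use the normalization $|x|=1$ to eliminate $c$, and finally turn the resulting difference of cosines into a product of sines using a product-to-sum identity. This reduces everything to lower-bounding $\sin((\theta-\theta_0)/2)$ and $\sin((\theta+\theta_0)/2)$ on the range $\theta,\theta_0\in[0,\pi/2]$.

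More explicitly, expanding $w = c + e^{i\theta}/2$ gives
\begin{equation*}
|w|^2 = c^2 + c\cos\theta + \tfrac14.
\end{equation*}
Applying the same calculation to $x$ and using $|x|^2 = 1$ yields $c^2 + c\cos\theta_0 + 1/4 = 1$, and subtracting these two identities produces the clean formula
\begin{equation*}
1 - |w|^2 = c\,(\cos\theta_0 - \cos\theta).
\end{equation*}
This is where the geometry of the construction pays off: the real center $c$ and the unit-circle condition on $x$ combine to give an exact expression with no error terms. The product-to-sum identity then gives $\cos\theta_0 - \cos\theta = 2\sin((\theta+\theta_0)/2)\sin((\theta-\theta_0)/2)$, which is nonnegative because $\theta \geq \theta_0 \geq 0$.

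Next, I will bound each sine from below. On $[0,\pi/2]$ we have the standard inequality $\sin t \geq (2/\pi)\,t$, so
\begin{equation*}
\sin\!\pth{\tfrac{\theta - \theta_0}{2}} \geq \tfrac{\theta - \theta_0}{\pi}, \qquad \sin\!\pth{\tfrac{\theta + \theta_0}{2}} \geq \tfrac{\theta + \theta_0}{\pi} \geq \tfrac{\theta - \theta_0}{\pi}.
\end{equation*}
Combined with $c \geq 1/2$, this gives $1 - |w|^2 \geq (\theta-\theta_0)^2/\pi^2$. Finally, since $|w| \leq 1$ I use $1 - |w|^2 = (1-|w|)(1+|w|) \leq 2(1-|w|)$ to convert this into $1 - |w| \geq (\theta-\theta_0)^2/(2\pi^2)$. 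The claimed bound $(\theta-\theta_0)^4/64$ then follows from $\theta - \theta_0 \leq \pi/2$, since this forces $(\theta-\theta_0)^4 \leq (\pi^2/4)(\theta - \theta_0)^2$ and one checks $\pi^2/(4\cdot 64) \leq 1/(2\pi^2)$.

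There is no substantive obstacle: the only subtlety is checking that the construction indeed forces $|w| \leq 1$ on the range of interest (so that the $1 - |w|^2 \leq 2(1-|w|)$ step is valid), which is immediate since $\cos\theta \leq \cos\theta_0$ on $\theta \geq \theta_0$ makes $|w|^2 \leq 1$ automatic from the formula for $1-|w|^2$. The constraint $h_L \leq 1/10$ is used only to ensure $c \geq 1/2$, which is what lets us drop the $c$ factor at the end.
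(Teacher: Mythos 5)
Your proof is correct, and every step checks out: the formula $|w|^2 = c^2 + c\cos\theta + 1/4$ with $c = 1/2 + h_L$, the elimination of $c^2+1/4$ via $|x|^2=1$, the product-to-sum identity, Jordan's inequality $\sin t \geq (2/\pi)t$ on $[0,\pi/2]$, the factor $1-|w| \geq (1-|w|^2)/2$, and the final numerical check $2\pi^4 < 256$ are all sound. Conceptually this is the same strategy the paper uses (circle geometry, product-to-sum, and a lower bound on sine), but your key step is genuinely cleaner. The paper instead introduces the auxiliary point $c + (1/2+\varepsilon)e^{i\theta}$ with $\varepsilon := 1-|w|$ and asserts it has unit modulus; in fact $\bigl|c + (1/2+\varepsilon)e^{i\theta}\bigr|^2 = 1 - \varepsilon + 2\varepsilon^2 + 2c\varepsilon\cos\theta$, which equals $1$ only on a measure-zero set, and the inequality one would like ($\geq 1$) requires $2\varepsilon + 2c\cos\theta \geq 1$, which generically fails for $\theta$ bounded away from $0$. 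This conflates the radial distance from the origin ($1-|w|$) with the radial distance from $c$ along the chord, and those differ. Your derivation of the exact identity $1 - |w|^2 = c(\cos\theta_0 - \cos\theta)$ sidesteps this entirely and in fact yields the slightly sharper intermediate bound $c(\cos\theta_0 - \cos\theta) \leq 2(1-|w|)$ (versus the paper's $\leq 4(1-|w|)$). So what your route buys is rigor and a tighter constant in the middle of the argument; the paper's route is morally the same but its justification of the key modulus equation is imprecise.
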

As our region of integration is symmetric,
it suffices to only show the inequality for $\gamma_L'$ in the first quadrant.
From the boundedness of the Poisson kernel and \lemref{BoundInt},

\begin{align*}
c \cdot \sup_{z \in s_L} \int_{\gamma_L ' } \log \left (\frac{1}{1-|w|} \right) d \mu_{D_L}^z(w)
 &\leq C \cdot  \int_{\theta_0}^{\pi/2} \left (\log64+ \log \left (\frac{1}{\theta-\theta_0} \right)^4 \right) d \theta \\
 &\leq c'' \cdot \int\limits_{\theta_0}^{\pi/2} \log \left (\frac{1}{\theta-\theta_0} \right) d\theta \leq  c'.
\end{align*}

Having proven the bounds on the integrals in \Eqref{UB1} and \Eqref{UB2}, 
we are now ready to conclude the proof of \lemref{RidLogFactor}. Combining these bounds with \lemref{LBA0} and the sub-mean value property inequality, 
we see that
\[
 - c d \leq \frac{C}{L} \log \sup\limits_{w \in \gamma_L} \left|\widetilde{A}(w) \right| + c' + c'',
\]
where all constants are positive and depend only on $q$. 
Rearranging, we now have the lower bound
$\sup_{w \in \gamma_L} \big|\widetilde{A}(w) \big| 
\geq \exp \left( - C' d L \right)$ 
for some constant $C'$ depending only on $q$. 
As $\gamma_L$ is a closed arc,
there exists $\zeta \in \gamma_L$ such that $|\widetilde{A}(\zeta)|= \sup_{w \in \gamma_L} \big|\widetilde{A}(w)\big|$, as desired. 
\end{proof}

\subsection{Bounds for Spiders from String Trace Reconstruction}\seclab{spiders-from-strings}

String reconstruction methods can be used as a black box for spiders. For depth $d \geq \log_{1/q} n$, 
this achieves the best known bound. However, for smaller depths, our algorithm is more efficient.

\subsubsection*{Large depth $(n,d)$-spiders}\seclab{LargeSpider}

\begin{proof}[Proof of \propref{SpiderCor}]
With probability $(1-q^d)^{n/d}$, a trace contains at least one non-root node from each of the $n/d$ paths in the spider. 		
When all paths are present, we can match paths of the trace to paths of the original spider and learn paths separately. 
Using only such traces, 
we are faced, 
for each path, 
with a string trace reconstruction problem with censoring (see \apndref{TR_censoring}), 
where the string length is $d$, the deletion probability is~$q$, and the censoring probability is 
$\gamma = 1 - \left( 1 - q^{d} \right)^{n/d - 1}$. 
\lemref{censored_TR} in \apndref{TR_censoring} (with $\eps = 1/2$) tells us that
\[
 T_{\gamma}^{\textbf{cens}} \left( d, \frac{1}{n^{2}} \right) 
 \leq 
 \frac{3/2}{\left( 1 - q^{d} \right)^{n/d}} \cdot T \left( d, \frac{1}{2n^{2}} \right) 
 \leq 
 2 \cdot T \left( d, \frac{1}{2n^{2}} \right),
\]
where the second inequality holds (for all $n$ large enough) because 
$\left( 1 - q^{d} \right)^{n/d} \to 1$ 
as $n \to \infty$ 
when $d \geq \log_{1/q} n$. 
That is, if we observe 
$2 \cdot T \left( d, \frac{1}{2n^{2}} \right)$ 
traces of the spider, 
then the bits along each specific path can be reconstructed with error probability at most $1/n^{2}$. 
Hence, by a union bound the bits along all paths can be reconstructed with error probability at most 
$(n/d) \times (1/n^{2}) \leq 1/n$. 
\end{proof}

We can extend \propref{SpiderCor} to the following result. We omit the proof, which follows the same outline and ideas as the proof of \propref{SpiderCor}.

\begin{proposition}\proplab{ImproveCor}
For $n$ large enough, $\a \geq 0$, and $d \geq \log_{1/q} n - \log_{1/q} \left( \log^{1+\a} n \right)$,
an $(n,d)$-spider 
can be reconstructed with 
$\exp \left(C \left( \log^{\a} n \right) \right) \cdot T \left(d, \frac{1}{2n^{2}} \right)$ traces with high probability,
where $C$ is a constant depending only on $q$. 
\end{proposition}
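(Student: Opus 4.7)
The plan is to follow the same outline as the proof of \propref{SpiderCor}, restricting attention to traces in which each of the $n/d$ paths retains at least one non-root node, then invoking the censored string trace reconstruction lemma (\lemref{censored_TR}) and finishing with a union bound. The only new piece of work is controlling how quickly $(1-q^d)^{n/d}$ decays once $d$ is allowed to dip below $\log_{1/q} n$ by an additive $\log_{1/q}(\log^{1+\alpha} n)$.

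First I would observe that, since each non-root node is deleted independently with probability $q$, a given path of length $d$ is ``fully deleted'' (censored) with probability $q^d$, and the $n/d$ paths are censored independently. Conditioning on at least one non-root survivor per path lets us unambiguously match paths of the trace to paths of $X$ and then run independent string trace reconstruction on each matched pair, so each path becomes an instance of reconstructing a $d$-bit string in the string deletion channel with censoring probability $\gamma = 1 - (1-q^d)^{n/d - 1}$. Applying \lemref{censored_TR} (with, say, $\epsilon = 1/2$), the number of traces of $X$ required to reconstruct a particular path with error probability at most $1/n^2$ is at most
\[
T_{\gamma}^{\textbf{cens}}\!\left(d,\frac{1}{n^2}\right) \;\leq\; \frac{3/2}{(1-q^d)^{n/d}}\cdot T\!\left(d,\frac{1}{2n^2}\right).
\]

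The crux is then a calculation bounding $(1-q^d)^{n/d}$ from below. From the hypothesis $d \geq \log_{1/q} n - \log_{1/q}(\log^{1+\alpha} n)$ one gets $q^d \leq \log^{1+\alpha}(n)/n$, so $q^d \cdot n/d \leq \log^{1+\alpha}(n)/d$. For $n$ large enough, $d$ is comparable to $\log_{1/q} n$ (say $d \geq \tfrac12 \log_{1/q} n$), so $\log^{1+\alpha}(n)/d = O(\log^{\alpha} n)$. Using $1 - x \geq e^{-2x}$ for $x$ small enough, this yields
\[
(1-q^d)^{n/d} \;\geq\; \exp\!\left(-2 q^d \cdot n/d\right) \;\geq\; \exp\!\left(-C \log^{\alpha} n\right)
\]
for a constant $C$ depending only on $q$. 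Plugging this back gives per-path sample complexity $\exp(C \log^{\alpha} n) \cdot T(d, 1/(2n^2))$.

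Finally, a union bound over the $n/d \leq n$ paths shows that all paths are reconstructed correctly with probability at least $1 - 1/n$, completing the proof. The main (and essentially only) obstacle is verifying the decay estimate for $(1-q^d)^{n/d}$ in this slightly extended parameter range; everything else is a direct reuse of the argument for \propref{SpiderCor} together with \lemref{censored_TR}.
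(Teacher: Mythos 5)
Your proposal is correct and matches the route the paper intends: the paper explicitly omits the proof of \propref{ImproveCor}, stating only that it ``follows the same outline and ideas as the proof of \propref{SpiderCor},'' and you have filled in that outline faithfully. The one new ingredient is exactly the one you isolate, namely the lower bound $(1-q^d)^{n/d} \geq \exp(-C\log^{\alpha}n)$, and your derivation of it from $q^d \leq \log^{1+\alpha}(n)/n$, $d \gtrsim \log_{1/q} n$, and $1-x \geq e^{-2x}$ for small $x$ is correct; the rest (matching paths, invoking \lemref{censored_TR} with $\eps = 1/2$, union bound over the $n/d$ paths) is a verbatim reuse of the \propref{SpiderCor} argument.
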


\subsubsection*{Small depth $(n,d)$-spiders}\seclab{SmallSpider}

When $d = c\log_{1/q} n$ with constant $0<c<1$,
the same reconstruction strategy still applies, but it does worse than our mean-based algorithm (which results in \thmref{MainSpider}).
In this regime of $d$, to ensure that with high probability we see even a single trace containing all $n/d$ paths, 
we must take $\exp \left( \Omega \left( n^{1-c} / \log n \right) \right)$ traces. 
It suffices to take 
$\exp \left( O \left(n^{1-c} / \log n \right) \right)\cdot T\left( d, 1/n^{2} \right)=\exp(\tilde{O}(n^{1-c}))$ traces 
to ensure that enough traces contain all $n/d$ paths. 
However, our mean-based algorithm resulting in \thmref{MainSpider} does better than this, 
requiring only $\exp( \widetilde{O} ( n^{(1-c)/3} ))$ traces to reconstruct.

We observe that the previous results on string trace reconstruction can also be used to derive the following proposition (in addition to \propref{SpiderCor} and \propref{ImproveCor}). The consequences are twofold: (i) when $d = O(1)$, then the trace complexity of spiders is asymptotically the same as strings, and (ii) our result in \thmref{MainSpider} offers an improvement when $d = \omega(1)$.

\begin{proposition}\proplab{RowReduction} For $d < \log_{1/q} n$, we can reconstruct an $(n,d)$-spider with high probability by using at most  $\exp \left (\left (\frac{C'  n }{d (1-q)^{2d}} \right )^{1/3} \right )$
	traces, for $C' >0$ depending on $q$.
\end{proposition}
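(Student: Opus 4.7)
My plan is to reduce the spider reconstruction problem to string trace reconstruction and invoke \thmref{StringMain} directly. First, I view the spider's labels as a flattened string $s \in \{0,1\}^n$ by concatenating the labels of its $n/d$ paths in the DFS order defined in \secref{spider-prelims}. Given a spider trace $Y$, reading off the surviving (non-padded) labels in DFS order yields a subsequence of $s$. Since each non-root node is independently deleted with probability $q$ and the DFS order of surviving nodes is preserved by the spider deletion channel (deletions within one path do not affect the order among other paths), this subsequence is distributed exactly as a trace of $s$ in the string deletion channel with deletion probability $q$.

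Second, I apply \thmref{StringMain} to reconstruct the flattened spider: $T(n, 1/n^2) \leq \ln(n^2) \cdot \exp(C n^{1/3}) = \exp(O(n^{1/3}))$ string traces suffice to recover $s$ with failure probability at most $1/n^2$, and hence all non-root labels of the spider with high probability.

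Third, to match the form of the bound stated in \propref{RowReduction}, I observe that $d(1-q)^{2d}$ is bounded above by an absolute constant depending only on $q$ (its maximum over $d \geq 0$ is $1/(2e\ln(1/(1-q)))$, attained near $d = 1/(2\ln(1/(1-q)))$). Consequently $n \leq C \cdot n/(d(1-q)^{2d})$ for a constant $C$ depending on $q$, so
\[
\exp(O(n^{1/3})) \leq \exp\!\left(\left(C' n/(d(1-q)^{2d})\right)^{1/3}\right)
\]
for an appropriate $C' = C'(q)$, which yields the desired bound.

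The main subtle point is the first step: I must extract the surviving labels in DFS order from a spider trace, which requires distinguishing the genuine labels from the zero-padded labels introduced by the padding convention. This is straightforward if the observer has access to the underlying tree structure of the deleted spider (so that the actual number of surviving nodes in each path is known), which is the natural operational interpretation of the deletion channel; the padding to a full $(n,d)$-spider is a bookkeeping device used for the generating-function calculations earlier in the section rather than a genuine loss of information.
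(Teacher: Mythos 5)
Your proof is correct, but it takes a genuinely different route from the paper's. The paper's (sketched) argument is a \emph{path-level} reduction: declare a path ``deleted'' if any of its $d$ nodes is deleted, which turns the spider into a string of $n/d$ symbols passed through a string deletion channel with the larger, \emph{non-constant} deletion probability $q' = 1-(1-q)^{d}$; one must then ``open up'' the proof of \thmref{StringMain} to track the dependence of the constant on $q'$, yielding a bound of the form $\exp\bigl(O\bigl(((n/d)/(1-q')^{2})^{1/3}\bigr)\bigr) = \exp\bigl(O\bigl((n/(d(1-q)^{2d}))^{1/3}\bigr)\bigr)$. Your argument instead flattens the spider to a string of length $n$ in DFS order and observes---correctly---that the spider deletion channel acting on the DFS sequence \emph{is} the binary string deletion channel with the original constant parameter $q$, so \thmref{StringMain} applies as a black box and gives $\exp(O(n^{1/3}))$ outright, with no need to track how the constant scales with the deletion probability. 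Your handling of the padding convention is also right: \propref{SpiderCor}'s proof already relies on the algorithm seeing the actual trace tree (it must detect which paths were entirely deleted), so the un-padded tree is the intended observable and padding is only a bookkeeping device for the generating-function calculations. Your approach buys two things: it is simpler (black-box use of \thmref{StringMain}), and it is strictly stronger, since $d(1-q)^{2d}$ is bounded above by a constant depending on $q$ (you correctly compute the maximum $1/(2e\ln(1/(1-q)))$), so $\exp(O(n^{1/3}))$ is at most the stated bound for an appropriate $C'(q)$. In fact your proof shows the hypothesis $d < \log_{1/q} n$ is not needed, and the $d$-dependent form of the bound in \propref{RowReduction} is an artifact of the path-level reduction the authors had in mind rather than a real limitation.
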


We sketch the proof. A path in the spider of depth $d$ retains all of its nodes with probability $(1-q)^d$. Equivalently, some node is deleted with probability $q' = 1 - (1-q)^d$. For any trace, consider the modified channel that deletes any path entirely if it is missing at least one node. With this modification, every row of the spider behaves as if it were a string on $n/d$ bits in a channel with deletion probability $q'$. Opening up the proof of \thmref{StringMain}, for non-constant deletion probability $q' = 1 - (1-q)^d$, then gives the proposition.

\subsection{Additional Proofs and Remarks for Reconstructing Spiders}\seclab{spider_proofs}

\begin{remark}[Remark for \thmref{MainSpider}]\remlab{ForAllq}
	In the proof of \lemref{RidLogFactor}, which is needed to prove \thmref{MainSpider}, we are unable to handle general generating functions with deletion probability $1/\sqrt{2} < q < 1$. We require some anchor point, $w_0$, for which we can lower bound $|\widetilde{A}(w_0)|$ and a simple curve surrounding $w_0$ for which we can upper bound $|\widetilde{A}(w)|$ along that curve. For any fixed $|w|>1$, for $d=\Omega(1)$ we see that $|(1-q^d)w^d+q^d| >1$ for sufficiently large $n$. This results in terms on the order of $c^{n/d}$ in our generating function, for constant $c>1$. So our anchor point cannot lie outside of $\mathbb{D}$, and more specifically the surrounding curve cannot leave $\mathbb{D}$.
	
	Inside the unit disc, upper bounds on $|\widetilde{A}(w)|$ have a nice form due to \lemref{UBA}. It seems for any fixed point in $w_0 \in \mathbb{D}$, there is a factored generating function $\widetilde{A}(w)$ which is small at $w_0$, $|A(w_0)| =\Theta(q^d)$. However it is not clear whether for every factored generating function $\widetilde{A}(w)$ there is some $w_0 \in \mathbb{D}$, not tending to the boundary, such that $|\widetilde{A}(w_0)| >c$ for come constant $c$ depending only on $q$. Such arguments are common in complex analysis for families of analytic functions which are sequentially compact, but our family of generating functions does not satisfy this property.  
	\end{remark}

\begin{proof}[Proof of \lemref{GeneratingFunction}] 
We index the non-root nodes of the spider according to the DFS ordering described in \secref{spider-prelims}. 
We can uniquely write any $j \in \left\{ 0, 1, \ldots, n - 1 \right\}$ as 
$j = d \cdot s_{j} + r_{j}$ with $s_{j} \in \left\{ 0, 1, \ldots, n/d - 1 \right\}$ corresponding to a particular path of the spider and $r_{j} \in \left\{ 0, 1, \ldots, d - 1 \right\}$ describing where along this path node $j$ is. 
Consider two nodes, $j = d \cdot s_j + r_j$ and $\ell = d \cdot s_{\ell} + r_{\ell}$, with $j \geq \ell$. 
After passing $a$ through the deletion channel to get the trace $b$, 
$b_{\ell}$ comes from $a_{j}$ if and only if 
$a_{j}$ is retained, 
exactly $r_{\ell}$ of the first $r_j$ nodes in the path of $j$ are retained, 
and exactly $s_{\ell}$ of the first $s_j$ paths are retained. 
This leads to the following generating function:
\begin{align*}
    \mathbb{E} \left[\sum\limits_{\ell=0}^{n-1} b_{\ell} w^{\ell}\right]  &= (1-q) \sum\limits_{\ell = 0}^{n-1} w^{\ell}\sum\limits_{j=\ell}^{n-1} a_j \binom{r_j}{r_{\ell}} (1-q)^{r_{\ell}}q^{r_j-r_{\ell}} \binom{s_j}{s_\ell} q^{d(s_j-s_{\ell})}(1-q^d)^{s_{\ell}} \mathbf{1}_{\left\{ r_{\ell} \leq r_{j} \right\}}\\
&= (1-q) \sum\limits_{j = 0}^{n-1} a_j\sum\limits_{\ell=0}^{j}  \binom{r_j}{r_{\ell}} (1-q)^{r_{\ell}}q^{r_j-r_{\ell}} \binom{s_j}{s_{\ell}} q^{d(s_j-s_{\ell})}(1-q^d)^{s_{\ell}}w^\ell \mathbf{1}_{\left\{ r_{\ell} \leq r_{j} \right\}}\\
&= (1-q)\sum\limits_{s_j=0}^{n/d-1} \sum\limits_{r_j=0}^{d-1} a_{s_j d + r_j} \sum\limits_{s_{\ell}=0}^{s_{j}} \sum\limits_{r_{\ell}=0}^{r_j}
\binom{r_j}{r_{\ell}}(1-q)^{r_{\ell}}q^{r_j-r_{\ell}} \binom{s_j}{s_{\ell}}q^{d(s_j-s_{\ell})}(1-q^d)^{s_{\ell}}w^{s_{\ell}  d + r_{\ell}},
\end{align*}
where we used linearity of expectation and interchanged the order of summation. 
Observing that the sums are binomial expansions we have that 
\begin{align*}
    \mathbb{E} \left ( \sum\limits_{\ell=0}^{n-1} b_{\ell} w^{\ell} \right )
&= (1-q) \sum\limits_{s_j=0}^{n/d-1} \sum\limits_{r_j=0}^{d-1} a_{ds_j + r_j} (q+(1-q)w)^{r_j}(q^d+(1-q^d)w^d)^{s_j} \\
&= (1-q) \sum\limits_{j=0}^{n-1} a_{j} (q+(1-q)w)^{j \Mod d}~(q^d+(1-q^d)w^d)^{ \lfloor  \frac{j}{d}\rfloor},
\end{align*}
which proves the claim.
\end{proof}

\begin{proof}[Proof of \lemref{BoundVariable}]
Writing $w = \cos(\theta) + i \sin(\theta)$, we see that 
\begin{align*}
|(1-q^d)& w^d+q^d |^2\\ 
&= \left|(1-q^d)(\cos(\theta) + i \sin(\theta))^d+q^d \right|^2 = \left|(1-q^d)(\cos(d \theta) + i \sin(d \theta))+q^d \right|^2 \\
&= ((1-q^d)\cos(d \theta) + q^d)^2 + ((1-q^d)\sin(d \theta))^2 \\
&= (1-q^d)^2\cos^2(d \theta) + 2 q^d (1-q^d)\cos(d \theta) + q^{2d} + (1-q^d)^2\sin^2(d \theta) \\ 
&=(1-q^d)^2 + 2 q^d (1-q^d)\cos(d \theta) + q^{2d} = 1 - 2q^d + 2q^{2d} + 2q^d(1-q^d)\cos(d \theta)\\ 
&= 1-2q^d(1-q^d)(1-\cos(d \theta)).  
\end{align*}
Now using the fact that $1 - \cos(y) \leq y^{2} / 2$, 
as well as the inequality $1 - y \geq \exp(-4y)$ which holds for all $y \in [0,0.9]$ 
(in our case indeed $q^{d} (1 - q^{d}) d^{2} \theta^{2} \in [0,0.9]$ for all possible parameter values), 
we obtain that 
\[
 \left| (1-q^d)w^d+q^d \right|^2 
 = 1 - 2q^{d}(1-q^d)(1-\cos(d \theta))
 \geq\exp( - 4q^d (1-q^d) d^2\theta^2 ).
\]
Taking a square root of the last line shows $|(1-q^d)w^d+q^d| \geq \exp(-2 q^d(1-q^d) d^2\theta^2)$. 
Finally, the assumption that $w \in \gamma_{L}$ implies that $\theta^{2} \leq \pi^{2} / L^{2}$ and the claim follows. 
\end{proof}

\begin{proof}[Proof of \lemref{LBA0}]
We will consider the case of even and odd $d$ separately, starting with the cleaner case of when $d$ is odd. Recall that we assume that $d \geq 20$ and we choose $w_0 = -q$ when $d$ is odd and $w_0 = q e^{i\cdot\pi(d-1)/d}$ when $d$ is even. 
Let $\alpha :=|q+(1-q)w_0|$ and $\beta := |q^d+(1-q^d)w_0^d|$. 

When $d$ is odd, $\alpha = q^2$, and also $w_0^d =-q^d$, hence $\beta = q^{2d}$. 
When $d$ is even, $w_0$ is still chosen so that $w_0^d = -q^d$ and thus $\beta = q^{2d}$, 
as in the case when $d$ is odd. 
It is clear geometrically that $\alpha \geq q^2$, but we include the calculation as well:
\begin{align*}
\alpha &= \sqrt{(\mathrm{Re}(q+(1-q)w_0))^2 + (\mathrm{Im}(q+(1-q)w_0))^2}\\
&= \sqrt{(q+(1-q)q\cos(\pi (d-1)/d))^2 + ((1-q)q\sin(\pi (d-1)/d)))^2} \\
& = \sqrt{2q^2(1-q)(1+\cos(\pi (d-1)/d)) + q^4} \geq \sqrt{0 + q^4} =q^2,
\end{align*}
where we use that $\cos(\pi (d-1)/d) \geq -1$.
By our choice of $\beta$, we also see that $\alpha^d \geq q^{2d}=\beta$.

We are now ready to prove a lower bound on $\big|\widetilde{A}(w_0)\big|$ which holds for both $d$ even and $d$ odd. 
First, recalling the definition of $\widetilde{A}$ and the fact that $w_{0}^{d} = -q^{d}$, we have that 
\begin{align*}
 \widetilde{A} (w_{0}) 
 &= (1-q) \sum_{\ell = \ell^{*}}^{n-1} a_{\ell} \left( q + (1-q) w_{0} \right)^{\ell \Mod d} \left( q^{d} + (1-q^{d}) w_{0}^{d} \right)^{\left\lfloor \frac{\ell}{d} \right\rfloor - \left\lfloor \frac{\ell^{*}}{d} \right\rfloor} \\
 &= (1-q) \sum_{\ell = \ell^{*}}^{n-1} a_{\ell} \left( q + (1-q) w_{0} \right)^{\ell \Mod d} q^{2d \left( \left\lfloor \frac{\ell}{d} \right\rfloor - \left\lfloor \frac{\ell^{*}}{d} \right\rfloor \right)}.
\end{align*}
Now recall that $\left| a_{\ell^{*}} \right| = 1$ 
and thus the first term in the sum above (corresponding to index $\ell = \ell^{*}$) is, 
in absolute value, equal to 
$\alpha^{\ell^{*} \Mod d}$. 
Since $a_{\ell} \in \left\{ - 1, 0, 1 \right\}$ for all $\ell$, 
the rest of the sum above (adding terms corresponding to indices $\ell^{*}+1 \leq \ell \leq n-1$) is, 
in absolute value, at most 
\begin{align*}
 \sum_{\ell = \ell^{*}+1}^{n-1} \alpha^{\ell \Mod d} q^{2d \left( \left\lfloor \frac{\ell}{d} \right\rfloor - \left\lfloor \frac{\ell^{*}}{d} \right\rfloor \right)} 
 &\leq 
 \sum_{\ell = \ell^{*}+1}^{n-1} \alpha^{\ell \Mod d} \alpha^{d \left( \left\lfloor \frac{\ell}{d} \right\rfloor - \left\lfloor \frac{\ell^{*}}{d} \right\rfloor \right)} \\
 &= \alpha^{-d \left\lfloor \frac{\ell^{*}}{d} \right\rfloor} \sum_{\ell = \ell^{*} + 1}^{n-1} \alpha^{\ell} 
 \leq \alpha^{-d \left\lfloor \frac{\ell^{*}}{d} \right\rfloor} \frac{\alpha}{1-\alpha} \alpha^{\ell^{*}} 
 = \frac{\alpha}{1-\alpha} \alpha^{\ell^{*} \Mod d}.
\end{align*}
Putting these two bounds together we obtain that 
\[
 \left| \widetilde{A} (w_{0}) \right| 
 \geq (1-q) \left( \alpha^{\ell^{*} \Mod d} - \frac{\alpha}{1-\alpha} \alpha^{\ell^{*} \Mod d} \right) 
 = (1-q) \frac{1-2\alpha}{1-\alpha} \alpha^{\ell^{*} \Mod d}.
\]
When $\alpha$ is less than and bounded away from $1/2$, then this bound is at least a positive constant times 
$\alpha^{\ell^{*} \Mod d}$. 
Here, our choice of $d$ and $q$ becomes clear, as when $d \geq 20$ and $q \leq .7$ then $\alpha \leq .49$. 
Since $\ell^{*} \Mod d < d$, we have that 
$\alpha^{\ell^{*} \Mod d} \geq q^{2d}$ 
and the claim follows.
\end{proof}

\begin{proof}[Proof of \lemref{UBA}]
First, we show that $q^d + (1-q^d)|w|^d \leq (q+(1-q)|w|)^d$ for all $w \in \mathbb{D}$ and $q \in (0,1)$. 
This is because
\begin{align*}
    (q+(1-q)|w|)^d &= \sum\limits_{j=0}^d \binom{d}{j} q^j ((1-q)|w|)^{d-j} = q^d +\sum\limits_{j=0}^{d-1} \binom{d}{j} q^j ((1-q)|w|)^{d-j} \\
     &\geq q^d + |w|^d\sum\limits_{j=0}^{d-1} \binom{d}{j} q^j (1-q)^{d-j} = q^d + |w|^d(1-q^d),
\end{align*}
where we used the inequality 
$\left| w \right|^{-j} \geq 1$ 
which holds when $\left| w \right| \leq 1$ and $j \geq 0$. 
Combining this inequality with the triangle inequality, 
we can show the desired upper bound for $\big|\widetilde{A}(w)\big|$: 
\begin{align*}
 \left| \widetilde{A}(w) \right| 
 &\leq \sum_{\ell = \ell^{*}}^{n-1} \left| a_{\ell} \right| \left| q + (1-q) w \right|^{\ell \Mod d} \left| q^{d} + (1-q^{d}) w^{d} \right|^{\left\lfloor \frac{\ell}{d} \right\rfloor - \left\lfloor \frac{\ell^{*}}{d} \right\rfloor} \\
 &\leq \sum_{\ell = \ell^{*}}^{n-1} \left( q + (1-q) \left| w \right| \right)^{\ell \Mod d} \left( q^{d} + (1-q^{d}) \left|w\right|^{d} \right)^{\left\lfloor \frac{\ell}{d} \right\rfloor - \left\lfloor \frac{\ell^{*}}{d} \right\rfloor} \\
 &\leq \sum_{\ell = \ell^{*}}^{n-1} \left( q + (1-q) \left| w \right| \right)^{\ell \Mod d + d \left( \left\lfloor \frac{\ell}{d} \right\rfloor - \left\lfloor \frac{\ell^{*}}{d} \right\rfloor \right)}\\ 
 &= \left( q + (1-q) \left| w \right| \right)^{ - d \left\lfloor \frac{\ell^{*}}{d} \right\rfloor} \sum_{\ell = \ell^{*}}^{n-1} \left( q + (1-q) \left| w \right| \right)^{\ell} \\
 &\leq \left( q + (1-q) \left| w \right| \right)^{ - d \left\lfloor \frac{\ell^{*}}{d} \right\rfloor} \frac{\left( q + (1-q) \left| w \right| \right)^{\ell^{*}}}{1 - \left( q + (1-q) \left| w \right| \right)} 
 \\ & \leq \frac{1}{1 - \left( q + (1-q) \left| w \right| \right)} 
 = \frac{1}{(1-q)(1-|w|)},
\end{align*}
where we used that $q + (1-q)|w| < 1$ and $\ell^{*} - d \lfloor \ell^{*}/d \rfloor \geq 0$. 
Note that the same upper bound holds for $|A(w)|$ as well, since $|A(w)| \leq \big|\widetilde{A}(w)\big|$ for all $w \in \mathbb{D}$.
\end{proof}

\begin{proof}[Proof of \lemref{BoundInt}]

For the setup of this proof, it may be helpful to refer to \figref{contour}.
$x = e^{i \pi/L}$ lies on the disc $D_L$, and so
we can write also write $x$ as $x = 1/2 + h_L + e^{i \theta_0}/2 $.
On the other hand, letting $\varepsilon = 1-|w|$ we see that 
$ |1/2 + h_L + (1/2 + \varepsilon)e^{i \theta}|=1$. 
We will assume that $\theta_0$ and $\theta$ are in the first quadrant, 
and we could obtain the same result when they are both in the fourth quadrant by symmetry.
Set the two moduli equal to each other:
\begin{equation}\eqlab{moduli}
 |1/2 + h_L + 1/2 e^{i \theta_0}|^2 = |1/2 + h_L + (1/2 + \varepsilon)e^{i \theta}|^2
\end{equation}
Computing the left hand side of \Eqref{moduli}:
\begin{align*}
|1/2 + h_L + 1/2 e^{i \theta_0}|^2 & = \left ( 1/2 + h_L + 1/2 \cos(\theta_0) \right)^2 + (1/2\sin(\theta_0))^2\\ 
&= (1/2+h_L)^2+1/4 \cos^2(\theta_0) + 1/4 \sin^2 \theta_0 + (1/2+h_L)\cos(\theta_0) \\
&= (1/2+h_L)^2+1/4 + (1/2+h_L)\cos(\theta_0) .
\end{align*}
Computing the right hand side of \Eqref{moduli}:
\begin{align*}
	|1/2& + h_L + (1/2 + \varepsilon)e^{i \theta}|^2\\ \qquad & = 
	\left ( 1/2 + h_L + (1/2 + \varepsilon) \cos(\theta) \right)^2 + ((1/2+\varepsilon)\sin(\theta))^2 \\ 
	&= \left ( 1/2 + h_L \right)^2 + (1/2 + \varepsilon)^2\cos^2(\theta) 
	+(1/2 + \varepsilon)^2\sin^2(\theta) + 2(1/2+h_L)(1/2+\varepsilon)\cos(\theta)\\ 
	&= \left ( 1/2 + h_L \right)^2 + (1/2 + \varepsilon)^2 + 2(1/2+h_L)(1/2+\varepsilon)\cos(\theta).  
\end{align*}
Setting the simplified terms equal:
\begin{align*}
(1/2+h_L)^2+1/4 + (1/2+h_L)\cos(\theta_0)&=
\left ( 1/2 + h_L \right)^2 + (1/2 + \varepsilon)^2 + 2(1/2+h_L)(1/2+\varepsilon)\cos(\theta)\\
1/4 + (1/2+h_L)\cos(\theta_0)&=
1/4 + \varepsilon^2 + \varepsilon + 2(1/2+h_L)(1/2+\varepsilon)\cos(\theta)\\		
(1/2+h_L)(\cos(\theta_0)-\cos(\theta))&=
\varepsilon^2 + \varepsilon +2(1/2+h_L) \cdot \varepsilon\cos(\theta) \leq 4 \varepsilon.
\end{align*}
Recall that $ 0 < h_L <1/10$ and $ 0 < \theta_0 <\theta  \leq \pi/2 $. 
Then using standard identities, 
\begin{align*}
	(1/2+h_L)(\cos(\theta_0)-\cos(\theta)) &\leq 4 \varepsilon \\ 
	2(1/2+h_L) \cdot \sin((\theta-\theta_0)/2) \sin((\theta_0+\theta)/2)) &\leq 4 \varepsilon.
\end{align*}
Using the fact that $x^2/4 \leq \sin(x/2)$ for $0 \leq x \leq \pi/2$
and $\theta \geq \theta_0$, we see that
\begin{align*}
	 \frac{(\theta-\theta_0)^2}{4}\cdot \frac{(\theta_0+\theta)^2}{4} &\leq 4 \varepsilon\\
	 (\theta-\theta_0)^4&\leq 64 \varepsilon = 64(1-|w|).
\end{align*} 
\end{proof}

The following lemma and its proof are analogous to Lemma 3.1 in \cite{NazarovPeres17-WorseCase}.
\begin{lemma}\lemlab{SimpleWithLog}
Let $0 < q < 1/2$ be a constant. We have that 
\begin{align*}
    \sup_{w \in \gamma_L} \left|\widetilde{A}(w) \right| \geq \exp(-c \cdot d L) \cdot n^{-L},
\end{align*}
where $c$ is a constant depending only on $q$. 
\end{lemma}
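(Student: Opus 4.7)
The plan is to mimic the proof of Nazarov--Peres' Lemma~3.1 for polynomials on the unit disc, reusing the subharmonic-function machinery developed for the proof of Lemma~\lemref{RidLogFactor}. The key trade-off compared to Lemma~\lemref{RidLogFactor} is that, instead of carefully estimating $|\widetilde{A}|$ along the auxiliary curves $\gamma_L'$ and $\gamma_L''$, I use the crude uniform bound $|\widetilde{A}| \leq n$ on the entire unit circle and the simpler domain $\Omega = \mathbb{D}$, which costs a factor of $n^{-L}$ in the final lower bound.

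First, I invoke Lemma~\lemref{LBA0}, applicable since $q < 1/2 < 0.7$, to obtain a distinguished point $w_0 \in \mathbb{D}$ (namely $w_0 = -q$ for odd $d$, or $w_0 = q e^{i\pi(d-1)/d}$ for even $d$) at which $|\widetilde{A}(w_0)| \geq e^{-cd}$ for a constant $c = c(q) > 0$. Note that $|w_0| = q < 1/2$, so $w_0$ is uniformly bounded away from $\partial\mathbb{D}$.

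Next, I establish the trivial bound $|\widetilde{A}(w)| \leq n$ on $\overline{\mathbb{D}}$: by the triangle inequality applied to the defining sum of $\widetilde{A}(w)$, it suffices to observe that $|q + (1-q)w| \leq 1$ and $|q^d + (1-q^d)w^d| \leq 1$ whenever $|w| \leq 1$, so each of the at most $n$ summands has modulus at most~$1$. Since $\widetilde{A}$ is a polynomial (hence analytic), $\log|\widetilde{A}|$ is subharmonic on $\mathbb{D}$, and the sub-mean value property gives
\[
\log|\widetilde{A}(w_0)| \leq \int_{\partial\mathbb{D}} \log|\widetilde{A}(w)|\,d\mu^{w_0}_{\mathbb{D}}(w) \leq \mu^{w_0}_{\mathbb{D}}(\gamma_L) \cdot \log \sup_{\gamma_L}|\widetilde{A}| + \log n,
\]
where I split $\partial\mathbb{D} = \gamma_L \cup (\partial\mathbb{D}\setminus\gamma_L)$ and apply the trivial bound on the complement.

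Finally, I use the explicit Poisson kernel formula $d\mu^{w_0}_{\mathbb{D}}(e^{i\theta}) = \frac{1-|w_0|^2}{2\pi|e^{i\theta} - w_0|^2}\,d\theta$ to lower bound the harmonic measure: since $|w_0| = q < 1/2$ keeps $w_0$ uniformly bounded away from $\gamma_L$, the Poisson density is uniformly bounded below on $\gamma_L$ by a positive constant depending only on $q$, giving $\mu^{w_0}_{\mathbb{D}}(\gamma_L) \geq c'/L$ for some $c' = c'(q) > 0$. Rearranging the displayed inequality and absorbing all constants (depending only on $q$) into the single constant $c$ in the statement then yields $\sup_{\gamma_L}|\widetilde{A}| \geq \exp(-c \cdot dL) \cdot n^{-L}$, as claimed. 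The main technical step is the harmonic measure estimate; there is no deep obstacle, only careful bookkeeping of $q$-dependent constants when combining the three inequalities.
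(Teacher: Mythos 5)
Your proposal takes a genuinely different route from the paper's proof, and while the overall harmonic-measure strategy is sound in spirit, there is a real gap in the constants that prevents it from yielding the stated bound.

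The paper's proof does not reuse Lemma~\lemref{LBA0} or the harmonic-measure set-up at all. Instead, it defines $F(w) := \prod_{j=0}^{L-1} \widetilde{A}\big(w\, e^{2\pi i j/L}\big)$, uses the trivial bound $|\widetilde{A}| \leq n$ on $L-1$ of the factors and the value $\lambda := \sup_{\gamma_L}|\widetilde{A}|$ on the remaining one, and applies the maximum modulus principle to $F$ at the origin: $|\widetilde{A}(0)|^L = |F(0)| \leq n^{L-1}\lambda$. It then supplies a direct computation $|\widetilde{A}(0)| \geq (1-2q) q^{d}$, and this is exactly where the hypothesis $q < 1/2$ is used (to make $1-2q > 0$). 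This is the Borwein--Erd\'elyi / Nazarov--Peres product-of-rotations trick, equivalent to applying the sub-mean value inequality at $w_0 = 0$, where the harmonic measure of $\gamma_L$ on $\mathbb{D}$ is \emph{exactly} $1/L$; the resulting exponent of $n$ is therefore exactly $-L$.

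Your version anchors at the $w_0$ from Lemma~\lemref{LBA0}, which has $|w_0| = q$ and lies near $-q$, i.e.\ on the side of $\mathbb{D}$ \emph{opposite} $\gamma_L$. The Poisson kernel $\frac{1-|w_0|^2}{2\pi |w_0 - \zeta|^2}$ is therefore strictly \emph{below} the uniform value $\frac{1}{2\pi}$ on $\gamma_L$ (roughly $\frac{1-q}{2\pi(1+q)}$), so your estimate $\mu^{w_0}_{\mathbb{D}}(\gamma_L) \geq c'/L$ only holds with a constant $c' = c'(q) < 1$. After rearranging the sub-mean-value inequality you then get $\sup_{\gamma_L}|\widetilde{A}| \geq \exp(-\tfrac{c}{c'} d L)\, n^{-L/c'}$, not $\exp(-c\, d L)\, n^{-L}$. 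The extra factor $n^{-(1/c' - 1)L}$ cannot be ``absorbed into $c$'' as you claim: the lemma is precisely of interest in the regime where $\log n \gg d$ (small-depth spiders), so $n^{-(1/c'-1)L}$ is not dominated by $\exp(-C d L)$ for any constant $C = C(q)$. You have therefore proven a strictly weaker statement. The fix is to anchor at $0$ (where the harmonic measure is uniform), and to supply a direct lower bound on $|\widetilde{A}(0)|$, which is the step where $q<1/2$ genuinely enters; invoking Lemma~\lemref{LBA0} (whose hypothesis is $q<0.7$) misidentifies where the restriction $q < 1/2$ is actually needed.
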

\begin{proof}
Let $\lambda := \sup_{w\in\gamma_{L}} \left| \widetilde{A}(w) \right|$ to simplify notation. 
Define the following analytic function on $\mathbb{D}$:
\[
F(w) := \prod\limits_{j=0}^{L-1} \widetilde{A} \left(w \cdot e^{2 \pi i j/L} \right).
\]
Note that $F(w)$ is entire, as it is the product of polynomials. 
We bound $\sup_{w \in \partial \mathbb{D}}|F(w)|$ from above and below. 
For the upper bound, we use $\lambda$ for one of the factors, 
and for the other $L-1$ factors, we  use the following trivial bound. 
For $|w| \leq 1$, the moduli of both terms in the factored generating function are at most $1$, since $\left|w^d(1-q^d)+q^d \right| \leq q^d+(1-q^d)|w|^d \leq 1$, and so for $w \in \partial \mathbb{D}$ we have that $\big| \widetilde{A}(w) \big| \leq n$. 
Putting these together, we obtain that 
$|F(w)| \leq n^{L-1} \lambda$ 
for all $w \in \partial \mathbb{D}$.

To obtain a lower bound for $\sup_{w \in \partial \mathbb{D}}|F(w)|$ we use the maximum principle. Observe that $|F(0)| = \big|\widetilde{A}(0) \big|^{L}$. Since $F$ is analytic in $\mathbb{D}$, by the maximum modulus principle it must achieve modulus at least $\big|\widetilde{A}(0)\big|^{L}$ on $\partial \mathbb{D}$. Combining the upper and lower bounds on $ \sup_{w \in \partial \mathbb{D}}|F(w)|$, we see that $\big| \widetilde{A}(0) \big|^L \leq n^{L-1} \lambda$ and hence 
$\lambda \geq \big| \widetilde{A}(0) \big|^L n^{-L}$. 

It remains to lower bound $\big|\widetilde{A}(0)\big|$. 
From the definition of $\widetilde{A}$ we have that 
\[
\left| \widetilde{A}(0) \right| 
= (1-q) q^{-d \left\lfloor \frac{\ell^{*}}{d} \right\rfloor} \left| \sum_{\ell = \ell^{*}}^{n-1} a_{\ell} q^{\ell} \right|.
\] 
Recall from the definition of $\ell^{*}$ that 
$\left| a_{\ell^{*}} \right| = 1$ 
and that $a_{\ell} \in \left\{ -1, 0, 1 \right\}$ for all $\ell$. This implies that 
\[
\left| \sum_{\ell = \ell^{*}}^{n-1} a_{\ell} q^{\ell} \right| 
\geq 
q^{\ell^{*}} - \sum_{\ell = \ell^{*} + 1}^{n-1} q^{\ell} 
= 
q^{\ell^{*}} \left( 1 - \frac{q - q^{n-\ell^{*}}}{1-q} \right) 
\geq 
\frac{1-2q}{1-q} q^{\ell^{*}}.
\]
Our assumption that $q < 1/2$ implies that this lower bound is positive. 
Putting the previous two displays together and noting that $\ell^{*} - d \lfloor \ell^{*} / d \rfloor \leq d$, we have that 
\[
\left| \widetilde{A}(0) \right| 
\geq (1-2q) q^{\ell^{*} - d \left\lfloor \frac{\ell^{*}}{d} \right\rfloor} 
\geq (1-2q) q^{d}.
\]
Thus we have that 
$\lambda \geq (1-2q)^{L} q^{dL} n^{-L}$, 
which proves the claim with constant 
$c = - \log(q-2q^{2})$. 
\end{proof}

\section{Conclusion}\seclab{conclusion}

We introduced the problem of {tree trace reconstruction} and demonstrated, for multiple classes of trees, that we can utilize the structure of trees 
to develop more efficient algorithms than the current state-of-the-art for string trace reconstruction. 
We provided new algorithms for reconstructing complete $k$-ary trees and spiders in two different deletion models. For sufficiently small degree or large depth, we showed that a polynomial number of traces suffice to reconstruct worst-case trees.

\subsection{Future Directions} 

 \begin{enumerate}
 	\item \textbf{Improved bounds.} Can our existing sample complexity bounds be improved? Our results leave open several questions for complete $k$-ary trees and spiders. Of particular interest are 
(1) the TED model for complete $k$-ary trees with $\omega(1) \leq k \leq c \log^{2} n$ 
and 
(2) spiders with depth $d = c \log_{1/q} n$, $c < 1$; 
can we reconstruct with $\poly(n)$ traces in these cases? 
 	\item \textbf{General trees.} 
	We believe our results can extended to more general trees.
	In general, we do not know if the trace complexity can be bounded simply in terms of the number of nodes,
	 the depth, and the min/max degree of the tree. What other tree structure must we take into account for tight bounds?
 	\item \textbf{Lower bounds.} Lower bounds have recently been proven for string trace reconstruction~\cite{Chase19,HL18}. When can analogous bounds be proven for trees? For example, is it possible to reconstruct worst-case or average-case complete binary trees with $\mathrm{polylog}(n)$ traces?
 	\item \textbf{Insertions and substitutions.} We have focused on deletion channels, but insertions and substitutions are well-defined and relevant for tree edit distance applications. Similar to previous work, it would be worthwhile to understand the trace complexity for these edits. 
	\item \textbf{Applications.} Can insights from tree trace reconstruction be helpful in applications, for instance in computational biology? 
	In particular, DNA sequencing and synthesis techniques are rapidly evolving, and the future statistical error correction techniques will likely be different from the ones used currently. For instance, Anavy~et~al.~\cite{anavy2019data} recently demonstrated a new DNA storage method using composite DNA letters. Similarly, future DNA synthesis techniques may use physical constraints to enforce structure on the written bases; this could take the form of a two-dimensional array or a tree as we study. 
 \end{enumerate}

\subsection{Acknowledgments}
We thank Nina Holden for helpful discussions relating to \lemref{RidLogFactor}, 
and Bichlien Nguyen and Karin Strauss for pointing us to connections on branched DNA and recent work in this area. 
We also thank Alyshia Olsen for designing and creating the figures. 
Finally, we thank Tatiana Brailovskaya and an anonymous referee for their careful reading of the paper and their numerous helpful questions and suggestions that helped improve the paper. 

\newpage
\bibliographystyle{alpha}
\bibliography{tree-tr-paper}

\newpage
\appendix

\section{Appendix}\apndlab{TR_censoring}

\begin{remark}[Root node is fixed]\remlab{root}

	Our models implicitly enforce the property that the tree traces are connected (not forests). 
	This is consistent with the string case, because traces are never disjoint subsequences. 
	Moreover, in the TED model, having connected traces justifies the assumption that the root is never deleted.
	 If the root were deleted with probability $q$, then the preservation of the root could be achieved by sampling 
	 $O(1/q)$ times more traces and only keeping those that are connected. 
	 In the Left-Propagation model, our algorithms either already reconstruct the root node as written, 
	 or could be easily altered to learn the root. 
	 This may be least obvious for spiders, but here we consider the root to be in the first path and claim 
	that since the paths have monotonically decreasing length, our proof still holds.
	
	It is easy to imagine other models where this assumption would not work,
	 such as (i) allowing disconnected traces, (ii) deleting edges or subtrees, or (iii) sampling random subgraphs
	  or graph minors. We leave such investigations as future work.

\end{remark}

\subsection*{Trace reconstruction with censoring}

We analyze here a variant of string trace reconstruction where each trace is independently ``censored'' with some probability and instead of the actual trace we receive an empty string. 
In other words, we have to reconstruct the original string from a random sample of the traces. 
Here we reduce this problem to the original string trace reconstruction problem. 

Let $\mathbf{x} \in \left\{ 0, 1 \right\}^{n}$ denote the original string that we aim to reconstruct. 
Let $\emptyset$ denote the empty string, 
let $\mathcal{S}_{\geq 1} := \bigcup_{k \geq 1} \left\{ 0, 1 \right\}^{k}$ denote the set of binary sequences of finite positive length, and 
let $\mathcal{S} := \emptyset \cup \mathcal{S}_{\geq 1}$ denote the union of this set with the empty string. 
Let $P_{\mathbf{x},q}$ denote the probability measure on $\mathcal{S}$ that we obtain by passing $\mathbf{x}$ through the deletion channel with deletion probability~$q$. 
That is, if $\mathbf{Y}$ is a random (potentially empty) string that is obtained by passing $\mathbf{x}$ through the deletion channel, 
then $P_{\mathbf{x},q}( \mathbf{y}) = \mathbf{Pr} \left( \mathbf{Y} = \mathbf{y} \right)$ for every $\mathbf{y} \in \mathcal{S}$. 
Now, conditionally on $\mathbf{Y}$, define $\mathbf{Z}$ as follows: 
with probability $\gamma$, let $\mathbf{Z} = \emptyset$, 
and with probability $1- \gamma$, let $\mathbf{Z} = \mathbf{Y}$. 
That is, 
$\mathbf{Pr} \left( \mathbf{Z} = \emptyset \, \middle| \, \mathbf{Y} = \mathbf{y} \right) 
= \gamma 
= 1 - \mathbf{Pr} \left( \mathbf{Z} = \mathbf{y} \, \middle| \, \mathbf{Y} = \mathbf{y} \right)$. 
We call $\mathbf{Z}$ a \textit{censored trace} 
and $\gamma$ the \textit{censoring probability}.  
Let $P_{x,q,\gamma}$ denote the distribution of $\mathbf{Z}$ on $\mathcal{S}$. 

While in the string trace reconstruction problem we aim to reconstruct $\mathbf{x}$ 
from i.i.d.\ traces $\mathbf{Y}_{1}, \ldots, \mathbf{Y}_{k}$ from $P_{\mathbf{x},q}$, 
in the trace reconstruction problem with censoring we aim to reconstruct $\mathbf{x}$ 
from i.i.d.\ censored traces $\mathbf{Z}_{1}, \ldots, \mathbf{Z}_{k}$ from $P_{\mathbf{x},q,\gamma}$. 
Recall that $T(n,\delta)$ denotes the minimum number of traces needed to reconstruct a worst-case string on $n$ bits with probability at least $1-\delta$, where the dependence on the deletion probability $q$ is left implicit.

\begin{lemma}\lemlab{censored_TR}
Let 
$T_{\gamma}^{\textbf{cens}} \left( n, \delta \right)$ 
denote the minimum number of traces 
needed to reconstruct a worst-case string on $n$ bits with probability at least $1 - \delta$ 
from i.i.d.\ censored traces from the deletion channel with deletion probability $q$ and censoring probability $\gamma$. 
If $\liminf_{n \to \infty} \frac{1}{n} \log \delta > - \infty$, then 
for every $\eps > 0$ we have that 
$T_{\gamma}^{\textbf{cens}} \left( n, \delta \right) 
\leq \frac{1+\eps}{\left( 1 - q^{n} \right) \left( 1 - \gamma \right)} T \left( n, (1-\eps) \delta \right)$ 
for large enough $n$. 
\end{lemma}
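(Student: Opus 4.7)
The plan is to reduce censored trace reconstruction to ordinary trace reconstruction by discarding empty outputs and then carefully re-inserting empty traces at the right rate to recover the target distribution $P_{\mathbf{x},q}$. The key observation is that a censored trace $\mathbf{Z}$ is non-empty with probability $p := (1-q^{n})(1-\gamma)$ independently of $\mathbf{x}$, and, conditional on $\mathbf{Z} \neq \emptyset$, the trace $\mathbf{Z}$ has the law of $\mathbf{Y} \sim P_{\mathbf{x},q}$ conditioned on $\mathbf{Y} \neq \emptyset$. A direct calculation then shows that if one takes a non-empty censored trace and, independently with probability $q^{n}$, overwrites it by $\emptyset$, the resulting random string is distributed exactly as $P_{\mathbf{x},q}$ (since $q^{n} + (1-q^{n}) \cdot P_{\mathbf{x},q}(\mathbf{y} \mid \mathbf{Y} \neq \emptyset) = q^{n} + P_{\mathbf{x},q}(\mathbf{y})$ for nonempty $\mathbf{y}$, and $P_{\mathbf{x},q}(\emptyset) = q^{n}$).

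Guided by this, set $m := T(n, (1-\eps)\delta)$ and draw $k := \lceil (1+\eps) m / p \rceil$ i.i.d.\ censored traces $\mathbf{Z}_{1}, \ldots, \mathbf{Z}_{k}$. Let $N$ be the number of non-empty ones among them. If $N < m$, declare failure; otherwise, take the first $m$ non-empty traces, apply the randomized replacement step described above independently to each, and feed the resulting $m$ i.i.d.\ samples from $P_{\mathbf{x},q}$ into the standard trace reconstruction algorithm guaranteed by the definition of $T(n, (1-\eps)\delta)$. Conditional on $N \geq m$, this algorithm outputs the correct $\mathbf{x}$ with probability at least $1 - (1-\eps)\delta$.

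It remains to bound $\Pr[N < m]$. Since $N \sim \mathrm{Bin}(k,p)$ with $\mathbb{E}[N] \geq (1+\eps) m$, the multiplicative Chernoff bound yields $\Pr[N < m] \leq \exp(-c_{\eps} m)$ for some constant $c_{\eps} > 0$ depending only on $\eps$. A union bound over the two failure events then gives total failure probability at most $\exp(-c_{\eps} m) + (1-\eps) \delta$, so it suffices to show that $\exp(-c_{\eps} m) \leq \eps \delta$ for all sufficiently large $n$. This is precisely where the hypothesis $\liminf_{n \to \infty} \tfrac{1}{n} \log \delta > -\infty$ enters: it ensures $\log(1/\delta) = O(n)$, so that $m = T(n, (1-\eps)\delta)$ easily dominates $\log(1/(\eps \delta))$ for $n$ large, since $T(n, \cdot)$ grows at least linearly in $n$ by standard information-theoretic lower bounds. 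This last growth comparison, combining the hypothesis on $\delta$ with a lower bound on $T(n,\delta)$, is the one quantitative step that requires care; the remainder of the argument is a routine Chernoff-plus-reduction computation.
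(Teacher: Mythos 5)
Your reduction is essentially the paper's: discard empty outputs, note that non-empty censored traces have the right conditional law, count the non-empty ones with a Chernoff bound, and feed them to a worst-case string reconstruction algorithm. Your explicit ``re-censoring'' step --- overwriting each non-empty trace by $\emptyset$ with probability $q^{n}$ so that the inputs are distributed exactly as $P_{\mathbf{x},q}$ rather than as $P_{\mathbf{x},q}$ conditioned on non-emptiness --- is a nice touch that makes the reduction airtight; the paper handles this more informally by remarking that empty traces carry no information.

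The one genuine issue is the final growth comparison, which you correctly flag as the delicate step but then under-justify. You need $\exp(-c_{\eps} m) \leq \eps \delta$ where $m = T(n,(1-\eps)\delta)$ and the hypothesis only guarantees $\delta \geq e^{-Cn}$ for some finite constant $C$ depending on the sequence $\delta = \delta(n)$. A merely linear lower bound $m \geq c_{0} n$ gives $c_{\eps} m \geq c_{\eps} c_{0} n$, and the required inequality $c_{\eps} c_{0} n \geq C n + \log(1/\eps)$ fails whenever $C > c_{\eps} c_{0}$ --- which can happen, since $C$ is dictated by $\delta$ while $c_{\eps}$ and $c_{0}$ are fixed. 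So ``$T(n,\cdot)$ grows at least linearly'' is not enough to close the argument for all $\delta$ satisfying the hypothesis. The fix is to invoke a \emph{superlinear} lower bound on string trace reconstruction, e.g.\ $T(n,\delta') = \widetilde{\Omega}(n^{1.25})$ from the known literature (this is exactly what the paper does), so that $c_{\eps} m$ eventually dominates any linear function of $n$ regardless of the constants involved.
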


\begin{proof}
First, note that empty strings contain no information, so reconstruction only uses nonempty traces. 
Next, observe that conditionally on outputting a nonempty trace, the deletion channel and the deletion channel with censoring have the same distribution. 
That is, if $\mathbf{x}$ is the original string, 
$\mathbf{Y}$ is a trace obtained by passing $\mathbf{x}$ through the deletion channel, 
and $\mathbf{Z}$ is a censored trace, then 
$\mathbf{Pr} \left( \mathbf{Y} = \mathbf{y} \, \middle| \, \mathbf{Y} \neq \emptyset \right) 
= \mathbf{Pr} \left( \mathbf{Z} = \mathbf{y} \, \middle| \, \mathbf{Z} \neq \emptyset \right)$ 
for every $\mathbf{y} \in \mathcal{S}_{\geq 1}$. 

Suppose that we have 
\[T := \frac{1+\eps}{\left( 1 - q^{n} \right) \left( 1 - \gamma \right)} T \left( n, (1-\eps)\delta \right)\]
censored traces and 
let $T'$ denote the number of these censored traces that are not empty. 
By our first two observations we have that if $T' \geq T(n,(1-\eps)\delta)$, then we can reconstruct the original string (no matter what it was) with probability at least $1 - (1-\eps)\delta$. 
By construction we have that 
$T' \sim \mathrm{Bin}\left( T, \left( 1 - q^{n} \right) \left( 1 - \gamma \right) \right)$ 
and thus $\mathbb{E} T' = (1+\eps) T(n,(1-\eps)\delta)$. 
Hence by a Chernoff bound we have that 
$\mathbf{Pr} \left( T' < T(n,(1-\eps)\delta) \right) \leq \exp \left( - \frac{\eps^{2}}{2(1+\eps)^{2}} \mathbb{E} T' \right) = \exp \left( - \frac{\eps^{2}}{2(1+\eps)} T(n,(1-\eps)\delta) \right)$. 
Since $T(n,(1-\eps)\delta) = \widetilde{\Omega} \left( n^{1.25} \right)$ (see~\cite{HL18}) and 
$\liminf_{n \to \infty} \frac{1}{n} \log \delta > - \infty$, 
it follows that 
\[\exp \left( - \frac{\eps^{2}}{2(1+\eps)} T(n,(1-\eps)\delta) \right) \leq \eps \delta\] for large enough $n$. 
The probability that we cannot reconstruct the original string is at most~$\delta$. 
\end{proof}
Note that this result is essentially optimal up to constant factors (depending on $q$, $\gamma$, and $\delta$). 
Note also that the range of $\delta$ for which the statement holds can be relaxed (this sufficient condition was chosen for its simplicity).

\end{document}